\newtheorem{theorem}{Theorem}[section]
\newtheorem{lemma}[theorem]{Lemma}
\newtheorem{proposition}[theorem]{Proposition}
\newtheorem{corollary}[theorem]{Corollary}
\newtheorem{construction}{Construction}[section]
\theoremstyle{remark}
\newtheorem{definition}{Definition}[section]
\newtheorem{example}{Example}[section]
\newtheorem{remark}{Remark}
\newcommand\nc\newcommand
\nc\ffa{{\boldsymbol a}}\nc\ffA{{\boldsymbol A}}\nc\cA{{\EuScript A}}
\nc\ffb{{\boldsymbol b}}\nc\ffB{{\boldsymbol B}}\nc\cB{{\EuScript B}}\nc\bG{{\mathbb G}}
\nc\ffc{{\boldsymbol c}}\nc\ffC{{\boldsymbol C}}\nc\cC{{\mathscr C}}
\nc\ffd{{\boldsymbol d}}\nc\ffD{{\boldsymbol D}}\nc\cD{{\EuScript D}}
\nc\ffe{{\boldsymbol e}}\nc\ffE{{\boldsymbol E}}\nc\cE{{\EuScript E}}
\nc\fff{{\boldsymbol f}}\nc\ffF{{\boldsymbol F}}\nc\cF{{\mathscr F}}\nc\sF{{\EuScript F}}
\nc\ffg{{\boldsymbol g}}\nc\ffG{{\boldsymbol G}}\nc\cG{{\EuScript G}}
\nc\ffh{{\boldsymbol h}}\nc\ffH{{\boldsymbol H}}\nc\cH{{\EuScript H}}
\nc\ffi{{\boldsymbol i}}\nc\ffI{{\boldsymbol I}}\nc\cI{{\mathcal I}}
\nc\ffj{{\boldsymbol j}}\nc\ffJ{{\boldsymbol J}}\nc\cJ{{\EuScript J}}
\nc\ffk{{\boldsymbol k}}\nc\ffK{{\boldsymbol K}}\nc\cK{{\EuScript K}}
\nc\ffl{{\boldsymbol l}}\nc\ffL{{\boldsymbol L}}\nc\cL{{\EuScript L}}
\nc\ffm{{\boldsymbol m}}\nc\ffM{{\boldsymbol M}}\nc\cM{{\EuScript M}}
\nc\ffn{{\boldsymbol n}}\nc\ffN{{\boldsymbol N}}\nc\cN{{\EuScript N}}
\nc\ffo{{\boldsymbol o}}\nc\ffO{{\boldsymbol O}}\nc\cO{{\EuScript O}}
\nc\ffp{{\boldsymbol p}}\nc\ffP{{\boldsymbol P}}\nc\cP{{\EuScript P}}
\nc\ffq{{\boldsymbol q}}\nc\ffQ{{\boldsymbol Q}}\nc\cQ{{\EuScript Q}}
\nc\ffr{{\boldsymbol r}}\nc\ffR{{\boldsymbol R}}\nc\cR{{\EuScript R}}
\nc\ffs{{\boldsymbol s}}\nc\ffS{{\boldsymbol S}}\nc\cS{{\EuScript S}}
\nc\fft{{\boldsymbol t}}\nc\ffT{{\boldsymbol T}}\nc\cT{{\EuScript T}}
\nc\ffu{{\boldsymbol u}}\nc\ffU{{\boldsymbol U}}\nc\cU{{\EuScript U}}
\nc\ffv{{\boldsymbol v}}\nc\ffV{{\boldsymbol V}}\nc\cV{{\mathscr V}}
\nc\ffw{{\boldsymbol w}}\nc\ffW{{\boldsymbol W}}\nc\cW{{\mathscr W}}
\nc\ffx{{\boldsymbol x}}\nc\ffX{{\boldsymbol X}}\nc\cX{{\EuScript X}}
\nc\ffy{{\boldsymbol y}}\nc\ffY{{\boldsymbol Y}}\nc\cY{{\mathscr Y}}
\nc\ffz{{\boldsymbol z}}\nc\ffZ{{\boldsymbol Z}}\nc\cZ{{\EuScript Z}}
\nc{\bb}{{\mathbbm{1}}}
\nc\reals{{\mathbb R}}
\nc{\ff}{{\mathbb F}}
\nc{\PP}{{\mathbb P}}
\DeclareMathOperator{\diam}{{\sf diam}}
\DeclareMathOperator{\sgn}{sgn}
\DeclareMathOperator{\rank}{rk}
\DeclareMathOperator*{\argmin}{arg\,min}
\DeclareSymbolFont{bbold}{U}{bbold}{m}{n}
\DeclareSymbolFontAlphabet{\mathbbold}{bbold}
\nc{\remove}[1]{}
\newcommandx{\yellownote}[2][1=]{\todo[linecolor=yellow,backgroundcolor=yellow!25,bordercolor=yellow,#1]{#2}}
\newcommandx{\unsure}[2][1=]{\todo[linecolor=red,backgroundcolor=red!25,bordercolor=red,#1]{#2}}
\newcommandx{\change}[2][1=]{\todo[linecolor=blue,backgroundcolor=orange!25,bordercolor=blue,#1]{#2}}
\newcommandx{\info}[2][1=]{\todo[linecolor=OliveGreen,backgroundcolor=OliveGreen!25,bordercolor=OliveGreen,#1]{#2}}
\newcommandx{\greennote}[2][1=]{\todo[linecolor=olive,backgroundcolor=green!25,bordercolor=olive,#1]{#2}}
\newcommand\redout{\bgroup\markoverwith{\textcolor{red}{\rule[0.5ex]{2pt}{0.8pt}}}\ULon}
\nc\nnfootnote[1]{%
   \begin{NoHyper}
    \footnote{#1}%
    \addtocounter{footnote}{-1}%
   \end{NoHyper}
}
\begin{document}
	
	\title{Generalized regenerating codes and node repair on graphs}
	\author{\IEEEauthorblockN{Adway Patra} \hspace*{1in}
		\and \IEEEauthorblockN{Alexander Barg}}
	\date{}
	\maketitle
	
	\begin{abstract}
We consider regenerating codes in distributed storage systems where connections between the nodes are constrained by a 
graph. In this problem, the failed node downloads the information stored at a subset of vertices of the graph for the 
purpose of recovering the lost data. Compared to the standard setting, regenerating codes on graphs address two additional features. The repair information is moved across the network, and the cost of node repair is determined by the graphical distance from the helper nodes to the failed node. Accordingly, the helpers far away from the failed
node may be expected to contribute less data for repair than the nodes in the neighborhood of that node. 
We analyze regenerating codes with nonuniform download for repair on graphs. Moreover, in the process of repair, the information moved from the helpers to the failed node may be combined through intermediate processing, reducing the repair bandwidth. We derive lower bounds for communication complexity of node repair on graphs, including repair schemes with nonuniform download and intermediate processing, and construct codes that attain these bounds.

Additionally, some of the nodes may act as adversaries, introducing errors into the data moved in the network. 
 For repair on graphs in the presence of adversarial nodes, we construct codes that support node repair and error correction in systematic nodes.
	\end{abstract}

 \thispagestyle{empty}
			
			
			

\nnfootnote{\vspace*{-.15in}			
 \hspace*{-.25in}	\noindent\rule{1.5in}{.4pt}
   
   \vspace*{.15in}
 The results of this paper were presented in part at ISIT 2022 and ISIT 2023.
			
		The authors are with Dept. of ECE and ISR, University of Maryland, College Park, MD 20742. Emails: 
  \{apatra,abarg\}@umd.edu. 
 		This research was supported by NSF grants CCF2110113 (NSF-BSF), CCF2104489, and CCF2330909.} 
	
	\section{Introduction}\label{Introduction}
	Regenerating codes are designed to correct single or multiple erasures from the information obtained from the 
	nonerased coordinates of the codeword. Using the terminology inspired by distributed storage systems, a 
	regenerating code recovers a failed node (coordinate) by downloading data from the surviving nodes of the encoding. 
	Suppose that the code used to protect the data has length $n$ and that every coordinate is placed on a separate 
	storage node. To repair a failed node, the system uses a subset of $d\le n-1$ {\em helper nodes} that transmit 
	information comprising some functions of their contents to be used to recover the lost data. One of the goals of 
	the code design is to minimize the amount of data downloaded to complete the repair. Following their introduction 
	in \cite{Dimakis10}, regenerating codes have been the subject of voluminous literature devoted both to constructions 
	and impossibility bounds for the code parameters; see \cite{Ramkumar2022} for a very readable and detailed overview of
	this area.
	
	A distributed storage system is formed of a number of nodes connected by communication links which carry information to accomplish the two basic tasks performed in the system, namely data recovery and node repair. The amount of information sent over the links is a key metric of the system efficiency. The problem of node repair has been widely studied in the literature in the last decade.
	The system is modeled as $n$ storage nodes each with capacity of $l$ units, used to store a file $\cF$ of size $M$, such that the following two properties are met \cite{Dimakis10}:
	\begin{itemize}
		\item{(\em{Data retrieval})} The entire file can be recovered by accessing any $k < n$ nodes. 
		\item{(\em{Repair})} If a single node fails, data from any $d$ surviving, or helper nodes is used to restore the lost data. 
		We assume that each of the helper nodes contributes $\beta\le l$ units of data, and that $k \le d \le n - 1$. The parameter $\beta$ is 
		called the {\em per-node repair bandwidth}.
	\end{itemize}
	The fundamental tradeoff between the file size and the repair bandwidth is expressed by the bound of \cite{Dimakis10} 
which has the form
	\begin{equation}\label{eq:sbt}
	M \le \sum_{i=1}^k\min \{l,(d-i+1)\beta\}.
	\end{equation}
	The repair problem has been studied in two versions, called functional and exact repair. Under exact repair, the contents of the failed node is recovered in the exact form, while for functional repair the node can be restored to a different value as long as it continues to support the two properties above.

	\subsection{Heterogeneous and graph-constrained storage systems} Most earlier works on regenerating codes, with a few exceptions mentioned below, assume that any $d$ of the surviving $n-1$ nodes can
	serve as helpers, and the choice of a particular helper set is not addressed in the code design. Existing code constructions typically also assume
	that downloading the same amount of data from each of the helpers minimizes the communication complexity of 
	repair, and in many cases (e.g., for MSR codes) one can show that this is indeed true. Constructions with 
	nonuniform download are considered when the transmission cost from different helpers to the failed node is 
	not the same, giving rise to {\em heterogeneous regenerating codes}. Different versions of such systems
	include models with two subsets of nodes assigned different communication costs \cite{Akhlaghi2010},
	systems with different link capacities \cite{wang2014heterogeneity}, rack-aware storage \cite{Hou2018,Chen2020,gupta2022rack,chen2022rack,wang2023rack,wangchen2023rack}, general models of clustered systems \cite{prakash2018storage, SohnChoYooMoo2018}, systems with varying amounts of data downloaded from the helper nodes
	\cite{wang2020storage}, \cite{Li2022}, and general systems with topology-aware repair \cite{yang2022hierarchical}.

In \cite{Patra2022,patra2022interior,patra2023node} we considered a related but different problem when the nodes of the 
storage system are placed on the vertices of a graph $G=(V,E)$, and the repair information is downloaded along the edges of the graph.  In this model, the cost of sending a unit of information from $v_i$ to $v_j$ is determined by the graphical distance $\rho(v_i,v_j)$ 	in $G$. This constraint introduces two new features. First, there is a natural bias in the information cost of node repair in favor of the helper nodes closer to the failed node $v_f,$ suggesting that coding schemes should rely on heterogeneous codes. Secondly, since the information from remote helpers passes through the helpers close to $v_f$, the closer nodes have the option of combining the information received from the outer extremes of the helper set before transferring it to the failed node. We call this approach {\em Intermediate Processing}, or IP, as opposed to direct relaying, which we term {\em Accumulate-and-Forward} (AF) strategy. The general IP procedure applies to linear regenerating codes, although using it for  specific code families requires further analysis of their structure. In this paper we closely examine both questions.
	
	Prior to our works, repair on graphs using MSR codes was considered in \cite{GeramiXiao2014,LuXuanFu2014} for particular examples of graphs. These papers relied on AF repair and did not consider incorporating IP repair in the node recovery procedure.
	Another related communication problem is that of network coding \cite{Yeung2006} wherein (in its simplest version) 
	the data is transmitted from a single fixed source to multiple destinations, and where it is assumed that the 
	intermediate nodes combine the chunks of data on their incoming edges. A repair-tree model of node recovery was previously considered in \cite{zhang2016repair} in the context of network coding. While intermediate processing is a shared feature between node repair on graphs and network codes, they address different objectives. Specifically, while in network coding, the message of the source needs to be reproduced at the receiver, the repair task is to compute a function of the cumulative contents of the helpers at the failed node. Nevertheless, some tools from network coding prove useful in the problem of node repair with errors, introduced below in Section \ref{subsec:intro_repair_error}.
		\subsection{Generalized regenerating codes}
	Motivated by the graph model of the system discussed above, we analyze the problems of optimizing the repair
	bandwidth through nonuniform download, varying the repair degree, IP repair, and repair in the presence of an adversary. In graphical networks,
	sending the repair data along a path from the helper to the failed node incurs the cost proportional to the path length, which naturally introduces heterogeneity based on the graphical distance. Accordingly, we extend the definition of regenerating codes
	as follows.
	\begin{definition}[Generalized regenerating codes (GRCs)]\label{def:GRC}
		Let $\cB = \{\beta_i\}_{i=1}^d$ be a set of $d$ positive integers. An $[n,k,d,l,\cB,M]$ GRC encodes a file $\cF$ of size $M$ symbols over a finite field ${F}$ by storing $l$ symbols in each of the $n$ nodes such that
		\begin{enumerate}
			\item (reconstruction) the original file can be recovered by accessing any $k$ out of $n$ nodes;
			\item (repair) the contents of any node $f\in [n]$ can be recovered by contacting a set $D \subseteq [n]\setminus\{f\},|D|=d$ of nodes and downloading $\beta_i$ symbols from node $\tau^{-1}(i)$, for any bijective mapping $\tau: D \rightarrow [d]$.
		\end{enumerate}
		The parameter $d$ is called the {\em repair degree}.    
	\end{definition}
	The mapping $\tau$ corresponds to the allocation of contributions for repair to the set of the helpers, and it highlights the fact that the assignments can be arbitrary as long they form the set $\cB.$ Specifically, the amount of data downloaded from a given helper node may change from one repair round to another depending on its distance to the failed node.  If all the $\beta_i$'s are equal, we call such a repair procedure a {\em uniform download} scheme.	
		
	\subsection{Repair in the presence of adversarial nodes}\label{subsec:intro_repair_error}
The simplest model of errors in a graphical storage system arises from assuming  unreliable links across the 
network. This problem has been widely analyzed in network coding literature; see \cite{Beemer2023} and references therein. 
For regenerating codes, a straightforward way of handling noise edges, if one assumes that a given edge does not introduce 
more than a fixed number of errors, is to simply encode each transmission using a local error 
correcting code with sufficient minimum distance and placing a decoder on every node.
   
 The error process is more difficult to handle if the network includes adversarial nodes that try to interfere with the repair by introducing errors that can affect its outcome. If the information residing on a node is corrupted, 
	it can spread through the network if this faulty node is chosen to be a helper. Previous works on error control 
	during repair (\!\cite{RSRK2012}, \cite{Ye16a}, \cite{Silberstein2015} and others all focus on the traditional model of direct connectivity. These 
	schemes still work in the graph scenario if the nodes rely upon standard relaying of data, i.e., the AF strategy. 
	At the same time, if the nodes perform intermediate processing, error amplification can happen, similar to what 
	happens in network coding with errors. This is because even a single corrupted symbol can potentially affect all 
	the linear combinations evaluated at the node. If left unchecked, data corruption can quickly spread through the network in the course of multiple failures.
	
Our main results for repair with errors are related to the model of a \emph{limited-power adversary}. We assume that the adversary may alter the contents of a subset of nodes $\Delta \subset D$ (not known to the system), but
has no access to the computations performed at any of the nodes. In other words, in this model the helper nodes perform valid transformations of the data, but their own contents may be corrupted without their knowledge. 

It is also possible to consider an \emph{all-powerful adversary} that actively controls a vertex and can alter both the stored data and data that is relayed through that vertex. Coding in the presence of such an adversary appears to be a more difficult problem \cite{KK2008}, \cite{KosutTongTse2014}, and we do not consider it here.

	\subsection{Overview of the results}
In Section \ref{sec:BRB} we derive a bound for the repair bandwidth (the cutset bound), which includes previously derived results \cite{Akhlaghi2010,Patra2022} as special cases. We also present a {\em stacking code construction}, similar to one in \cite{Li2022}, showing that it attains the cutset bound at the MSR point. These codes also support flexible download assignment mentioned above. Note that, while the results of \cite{Patra2022,patra2022interior} show that it is possible to perform repair on graphs with download cost smaller than direct relaying, they do not support the varying download property.

In Sec.~\ref{sec:IPrepair} we turn our attention to IP repair under the GRC framework. We prove a lower bound on the amount of information sent by any subset of helpers for repair, which serves as a benchmark for IP repair in the network. This bound also extends previous results established for the case of uniform download \cite{Patra2022,patra2022interior}. We also show that the stacking construction proposed in Sec.~\ref{sec:BRB} attains the mentioned lower bound with equality. Next, we present a simple linear-algebraic argument showing that any family of linear GRCs supports IP repair.  The final result of Sec.~\ref{sec:IPrepair}  quantifies savings in repair bandwidth that result from using the IP repair scheme.

In Sec.~\ref{sec:Optimizing} we turn to the problem of optimizing the download amounts from 
different helpers based on their distance to the failed node. An application of the result
of \cite{Li2022} shows that, under both IP and AF strategies, if the number of helpers can be dynamically adjusted once we know the identity of
the failed node, there always exists a uniform download scheme that minimizes the overall download. With this insight, we show that achieving the optimal download cost amounts to a combination of choosing the optimal repair degree while also carefully maximizing the benefits of IP. We further show that for codes with high rate, the best AF strategy is to involve all the remaining nodes in the repair process.

In Section \ref{sec:IPC} we present realizations of IP repair under the uniform download assumption for the family of product-matrix codes of \cite{Rashmi11} and their generalization given in \cite{Duursma2020}. 

In Sec.~\ref{sec:errors}, we consider the repair task under the possible presence of adversaries where the main goal is to counter the spreading of errors while performing IP repair on the graph. Extending the Singleton bound from network coding, we derive a bound on the repair bandwidth in the presence of errors for IP repair. We also present a construction that supports repair of systematic nodes in the presence of a limited-power adversary in the network.

	\section{Bounds on the Repair Bandwidth}\label{sec:BRB}
\subsection{The cutset bound}	In this section, we present a general form of the communication complexity bounds for the nonuniform download case. Previously this question was discussed in \cite{Quan2015} based on the information flow graph approach, but the bound we obtain here is easier to interpret and use.
For a finite field ${F}=\ff_q$, we consider a code $\cC\subset {F}^{nl}$ whose codewords $(C_i,i=1,\dots,n)$ are represented by $l\times n$ matrices over ${F}$. We assume that each coordinate (a vector in ${F}^l$) is written on a single storage node, and that a failed node amounts to having its coordinate erased. 
	
	Suppose that the information stored at the nodes is described by random variables $W_i,i\in[n]$ that have some joint distribution on $({F}^l)^n$ and satisfy $H(W_i)=l$ for all $i$, where $H(\cdot)$ is the entropy.  For a subset $A\subset [n]$ we write $W_A=\{W_i,i\in A\}.$
	We assume that $H(\cF|W_A)=0$ for any $A\subset[n],|A|=k,$ which supports the data retrieval property. Let $f \in [n]$ be the failed node and let $D \subseteq [n]\setminus\{f\}, |D|=d$ be the set of helper nodes. Let $S_i^f$ be the information provided to the failed node $f$ by the $i$th helper node in the traditional fully connected repair 
	scheme, and let $S_A^f = \{S_i^f: i \in A\}$ for any $A \subseteq D$.  By definition we have $H(S_i^f) = \beta_{\tau(i)},$  and
	\begin{equation}\label{eq:reg_code_props}
		\begin{split}
			&H(W_K) = M, K \subset [n], |K| = k\\
			&H(S_i^f|W_i) =0 , i \in D;\;\;
			H(W_f|S_D^f) = 0.
		\end{split}
	\end{equation}
	Let 
	$
	\Delta_r(\cB) = \min_{R \subseteq [d],|R|=r} \sum_{i \in R}\beta_i
	$ 
	denote the sum of $r$ smallest elements from $\cB$.
 
 The following statement gives a general bound for information transmission during repair, extending the results in 
	\cite{Shah2012} and \cite{Patra2022}.
	\begin{theorem}\label{theorem:gen_cutset}
		For an $[n,k,d,l,\cB,M]$ GRC,
		\begin{equation}\label{eq:gen_cutset}
			M \le \sum_{i=0}^{k-1}\min\{l,\Delta_{d-i}(\cB)\}.
		\end{equation}
	\end{theorem}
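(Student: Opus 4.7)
The plan is to adapt the classical cutset argument of Dimakis et al.~\cite{Dimakis10} to the generalized setting, exploiting the fact that the GRC definition lets the system freely choose the bijection $\tau$ between helpers and download quantities. I would consider $k$ consecutive failure/repair events involving an arbitrary tuple of distinct nodes $f_1, \ldots, f_k$. Since any $k$ nodes determine the file, $H(\cF \mid W_{f_1}, \ldots, W_{f_k}) = 0$, and hence, by the chain rule,
\[
M \;\le\; H(W_{f_1}, \ldots, W_{f_k}) \;=\; \sum_{i=1}^{k} H\bigl(W_{f_i}\,\big|\, W_{f_1}, \ldots, W_{f_{i-1}}\bigr).
\]
The task then reduces to bounding each conditional entropy on the right by $\min\{l, \Delta_{d-i+1}(\cB)\}$.

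For each $i$, I would choose the helper set $D_i$ used to repair $f_i$ so that $\{f_1, \ldots, f_{i-1}\} \subseteq D_i$; this is feasible because $i - 1 \le k - 1 < d \le n-1$. The properties collected in \eqref{eq:reg_code_props} say that $W_{f_i}$ is a function of $S_{D_i}^{f_i}$ and that each $S_{f_j}^{f_i}$ is a function of $W_{f_j}$. Therefore conditioning on $W_{f_1}, \ldots, W_{f_{i-1}}$ already determines the contributions of the $i-1$ previously repaired helpers, so
\[
H\bigl(W_{f_i}\,\big|\, W_{f_1}, \ldots, W_{f_{i-1}}\bigr) \;\le\; H\bigl(S_{D_i \setminus\{f_1,\ldots,f_{i-1}\}}^{f_i}\bigr),
\]
a joint entropy of exactly $d - i + 1$ repair messages.

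The step that distinguishes the argument from the uniform-download case comes next: because the GRC accommodates \emph{any} bijection $\tau_i:D_i \to [d]$, I would choose $\tau_i$ so as to assign the $d-i+1$ smallest $\beta$-values of $\cB$ to the nodes in $D_i \setminus \{f_1, \ldots, f_{i-1}\}$. The identity $H(S_j^{f_i}) = \beta_{\tau_i(j)}$ and subadditivity then upper-bound the right-hand side by $\Delta_{d-i+1}(\cB)$; combining with the trivial bound $H(W_{f_i}) \le l$ gives the conditional-entropy estimate $\min\{l, \Delta_{d-i+1}(\cB)\}$. Substituting into the chain-rule sum and reindexing $i \mapsto i-1$ yields \eqref{eq:gen_cutset}.

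The only real subtlety is the second step, namely arranging the $k$ helper sets $D_i$ so that each contains all previously repaired nodes and verifying that each $\tau_i$ can be chosen independently to route the smallest downloads to the genuinely new helpers. No deeper information-theoretic input is required beyond standard entropy identities, so I expect this to be essentially the same level of difficulty as the classical bound once the bookkeeping is set up.
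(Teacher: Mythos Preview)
Your proposal is correct and follows essentially the same approach as the paper: both use the chain rule on $H(W_{f_1},\ldots,W_{f_k})$ and bound each conditional term $H(W_{f_i}\mid W_{f_1},\ldots,W_{f_{i-1}})$ by $\min\{l,\Delta_{d-i+1}(\cB)\}$, exploiting the freedom to choose $\tau$ so that the previously-seen helpers are assigned the largest $\beta$-values. The paper's write-up is slightly more streamlined in that it first isolates the general inequality $H(W_f\mid W_A)\le\min\{l,\Delta_{d-|A|}(\cB)\}$ for an arbitrary $A\subset D$ (which it reuses later), whereas you build the helper sets $D_i$ explicitly, but the content is the same.
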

	\noindent{\em Proof.}
	For any $f \in [n]$, any $D \subseteq [n]\setminus\{f\}, |D|=d$ and any set $A \subset D$, we have
	$$
	H(W_f|S_A^f,S_{D\setminus A}^f) = 0,
	$$ 
	which implies
	\begin{align*}
		H(W_f|S_A^f) &= I(W_f;S_{D\setminus A}^f|S_A^f)\le H(S_{D\setminus A}^f|S_A^f)\\
		&\le H(S_{D\setminus A}^f)\le \sum_{i\in D\setminus A}H(S_i^f)\\
		&= \sum_{i \in D\setminus A}\beta_{\tau(i)}.
	\end{align*}
	Since this is true for any bijective mapping $\tau$, we conclude that 
	\begin{equation}\label{eq:H-H}
		H(W_f|W_A) \le H(W_f|S_A^f) \le \min\{l,\Delta_{d-|A|}(\cB)\}.
	\end{equation}
	Finally,
	$$
	M = H(W_{[k]})= \sum_{i=1}^{k}H(W_i|W_{[i-1]})
	\le \sum_{i=0}^{k-1}\min\{l,\Delta_{d-i}(\cB)\}.
	$$

	\begin{remark}
		For the case of uniform download, the bound of this theorem reduces to the inequality $H(W_i|W_A)\le\min(l,(d-|A|)\beta$, where $A\subset[n], |A|\le d, i\not\in A$, proved in \cite{Shah2012}. If the set $\cB$ contains only two distinct values, then \eqref{eq:gen_cutset} recovers the main result of \cite{Akhlaghi2010}, obtained here with a much shorter proof.
	\end{remark}
	As in the case of homogeneous systems \cite{Dimakis10}, we define the following corner points of the above trade-off curve.
\begin{definition}\label{def:corner_points}
		The {\em Minimum Storage} (MSR) point of the bound
		\eqref{eq:gen_cutset}, is defined by $l=\Delta_{d-k+1}(\cB)$, and any code that achieves these parameters for a given set $\cB$ is called an MSR code. Similarly, the Minimum Bandwidth (MBR) point is defined by $l = \Delta_d(\cB) = \sum_{i=1}^d\beta_i$.    
	\end{definition}
	
	\subsection{Code construction}\label{sec:stacking}
	In this section, we describe a construction of an $[n,k,d,l,\cB,M]$ GRC that meets the bound of Theorem~\ref{theorem:gen_cutset} at the MSR point. This construction generalizes the construction of \cite{Li2022},
 and is reminiscent of multilevel concatenated codes of \cite{blokh1974coding}.
  We further show here that for any set $\cB = \{\beta_j\}_{j=1}^d$ the constructed code family achieves the minimal possible per-node storage parameter $l$ and hence is MSR-optimal in the sense of Definition \ref{def:corner_points}. In the next section
  we prove that this construction is also optimal for IP repair.
	
	Given a repair degree $d$ and a set of integers $\cB=\{\beta_j\}_{j=1}^d$, we aim to construct a regenerating code that repairs any failed node $f$ by downloading
	at  most $\beta_j$ symbols from node $\tau^{-1}(j)$ for any subset of helper nodes $D \subseteq [n]\setminus \{f\}$ and any permutation $\tau : D \rightarrow [d]$. 
	By relabeling the nodes, we can always assume that the set $\{\beta_j\}$ is sorted in nondecreasing order. Let $\mu=(\mu_1,\dots,\mu_{d-k+1})$ be the binary vector with $\mu_j = \mathbbm{1}_{(\beta_j > \beta_{j-1})}, j=1,\dots,d-k+1$, where $\beta_0:=0$. 
	
\begin{tcolorbox}[width=\linewidth, sharp corners=all, colback=white!95!black]
	\begin{construction}\label{const:stacked}
		Suppose that $\cB=\{\beta_j\}_{j=1}^d$ and $\mu$ are as defined above, and let $S=\{j: \mu_j\ne0\}$. For every $j \in S$ take an 
		MSR code $\cC_j$ with parameters 
		\begin{equation*}
			\begin{split}
				[n,k,d-j+1,l_j=(d-j-k+2)(\beta_j-\beta_{j-1}),\beta_j-\beta_{j-1}, M_j=kl_j].
			\end{split}
		\end{equation*}
		The $[n,k,d,l,\cB,M]$ GRC code is formed by stacking the codes $\{\cC_j\}_{j \in S}$, where $l=\sum_{j\in S}l_j$ 
		and $M=\sum_{j\in S} M_j$. 
	\end{construction}
 \end{tcolorbox}
	
	The intuition behind the construction is as follows. Upon arranging the $\beta_j$s in nondecreasing order, for every $j$ such that
	$\beta_{j}>\beta_{j-1}$, we add to the stack an MSR code with per node download equal to the gap $\beta_j-\beta_{j-1}$ and repair degree
	$d-j+1.$

	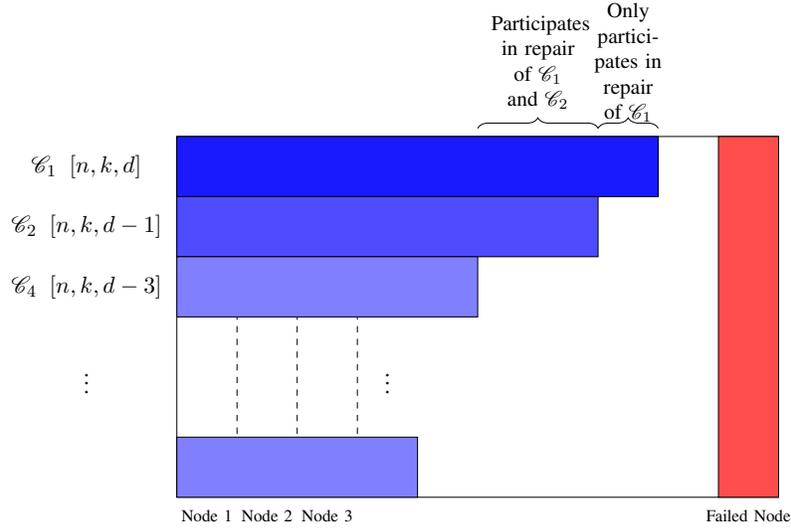
\begin{figure}[th]
		\begin{center}\scalebox{0.8}{
				\begin{tikzpicture}[squarednode/.style={rectangle, draw=white!60, very thick, minimum size=5mm}]
					\draw (0,0) -- (0,6) -- (10,6) -- (10,0) -- (0,0);
					\filldraw[fill=red!70!white, draw=black, fill opacity=0.2] (9,0) rectangle (10,6);
					\draw[dashed] (1,0) -- (1,6);
					\draw[dashed] (2,0) -- (2,6);
					\draw[dashed] (3,0) -- (3,6);
					\filldraw[fill=blue!90!white, draw=black, fill opacity=0.2] (0,6) rectangle (8,5);
					\filldraw[fill=blue!70!white, draw=black, fill opacity=0.2] (0,5) rectangle (7,4);
					\filldraw[fill=blue!50!white, draw=black, fill opacity=0.2] (0,4) rectangle (5,3);
					\filldraw[fill=blue!50!white, draw=black, fill opacity=0.2] (0,0) rectangle (4,1);
					\node[squarednode] at (3.5,2) (dots) {$\vdots$};
					\node[squarednode] at (-1.5,5.5) (C1) {$\cC_1\;\;[n,k,d]$};
					\node[squarednode] at (-1.5,4.5) (C2) {$\cC_2\;\;[n,k,d-1]$};
					\node[squarednode] at (-1.5,3.5) (C3) {$\cC_4\;\;[n,k,d-3]$};
					\node[squarednode] at (-1.5,2) (C3) {$\vdots$};
					\node[squarednode] at (0.5,-0.3) (N1) {\scriptsize Node 1};
					\node[squarednode] at (1.5,-0.3) (N2) {\scriptsize Node 2};
					\node[squarednode] at (2.5,-0.3) (N3) {\scriptsize Node 3};
					\node[squarednode] at (9.5,-0.3) (Nf) {\scriptsize Failed Node};
					\draw [decorate,decoration={brace,amplitude=5pt,raise=4.8ex}]
					(7,5.3) -- (8,5.3) node[midway, yshift=5em,,text width=3.5em, align=center,font=\small\linespread{1.0}\selectfont]{Only participates in repair of $\cC_1$};
					\draw [decorate,decoration={brace,amplitude=5pt,raise=4.8ex}]
					(5,5.3) -- (7,5.3) node[midway, yshift=5em,,text width=4em, align=center,,font=\small\linespread{1.0}\selectfont]{Participates in repair of $\cC_1$ and $\cC_2$};
			\end{tikzpicture}}
			\caption{Example: Stacked MSR construction for $S=\{1,2,4,\cdots\}$.}\centering
			\label{fig_stackedMSR}
		\end{center}
	\end{figure}

	\begin{proposition}\label{lemma:stacked_msr}
		The code in Construction \ref{const:stacked} is an $[n,k,d]$ regenerating code that supports repair of any node $f$ from a helper set $D \subseteq [n]\setminus\{f\}, |D|=d$ by downloading at most $\beta_j$ symbols of ${F}$ from node $\tau^{-1}(j)$ for any permutation $\tau: D \rightarrow [d]$.
	\end{proposition}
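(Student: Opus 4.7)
The plan is to reduce the claim to two independent checks: reconstruction from any $k$ nodes, and repair of any failed node under the prescribed download profile. The first is essentially free once the black-box guarantees of the constituent codes are invoked, while the second is a short telescoping argument after one fixes the right helper-to-layer assignment.

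For reconstruction, since each $\cC_j$ ($j\in S$) is by construction an $[n,k,d-j+1]$ MSR code, its $k$-reconstruction property ensures that reading any $k$ of the $n$ nodes recovers every $\cC_j$-layer and hence the whole file $\cF=\bigoplus_{j\in S}\cF_j$ of the stacked code. This step requires no extra work beyond noting that the stack operates component-wise.

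For repair, fix the failed node $f$, a helper set $D$ of size $d$, and a bijection $\tau:D\to[d]$. The assignment I propose is: for each $j\in S$, repair the $\cC_j$-layer of node $f$ using the $d-j+1$ helpers $\{\tau^{-1}(j),\tau^{-1}(j+1),\dots,\tau^{-1}(d)\}$. This subset has exactly the repair degree of $\cC_j$, so by the MSR property of $\cC_j$ each of these helpers transmits $\beta_j-\beta_{j-1}$ symbols for the $\cC_j$-layer and the $\cC_j$-portion of $f$ is recovered. Taking the union across $j\in S$ recovers the full contents of $f$.

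The heart of the argument is the per-helper accounting. With the assignment above, node $\tau^{-1}(i)$ participates in the repair of $\cC_j$ precisely when $j\in S$ and $j\le i$, contributing $\beta_j-\beta_{j-1}$ symbols each time. Its total outgoing data therefore equals
\[
\sum_{\substack{j\in S\\ j\le i}}(\beta_j-\beta_{j-1}).
\]
Since $\mu_j=0$ forces $\beta_j=\beta_{j-1}$ on $[d-k+1]$, the sum telescopes to $\beta_{\min(i,\,d-k+1)}$, which is at most $\beta_i$ by the ordering $\beta_1\le\cdots\le\beta_d$. The one subtlety worth flagging, and the closest thing to an obstacle, is that $S$ lives in $[d-k+1]$ rather than in $[d]$ -- this is forced by the legitimacy requirement $d-j+1\ge k$ on the repair degree of $\cC_j$ -- which is why the telescoping saturates at $\beta_{d-k+1}$ for helpers with index $i>d-k+1$ and why one must invoke monotonicity of $\{\beta_j\}$ to close the bound for those helpers. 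Apart from this indexing care, the proof is a one-line telescoping argument on top of the MSR guarantees of the layer codes.
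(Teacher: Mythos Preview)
Your proof is correct and follows essentially the same approach as the paper: assign to layer $\cC_j$ the helpers $\tau^{-1}(\{j,\dots,d\})$ and telescope the per-helper contributions. Your handling of the case $i>d-k+1$ (where the telescoping saturates at $\beta_{d-k+1}$ and monotonicity closes the bound) is in fact more careful than the paper's own one-line computation, which tacitly writes $\sum_{p=1}^{j}(\beta_p-\beta_{p-1})=\beta_j$ without flagging that no component codes exist for $p>d-k+1$.
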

	\begin{proof}The length $n$ and the data reconstruction parameter $k$ are the same for all the component codes
 and hence inherited directly. To prove the repair property, fix a helper set $D$ and a permutation $\tau$ such that $\{\beta_j\}$ are in nondecreasing order. 
 Node $\tau^{-1}(j)$ participates in the recovery of node $f$ only in the component codes $\{\cC_p: p \le j\}$ and hence it sends a total of $\sum_{p=1}^j(\beta_p-\beta_{p-1}) = \beta_j$ symbols.
	\end{proof}
	
	The component codes can be chosen from a variety of known MSR constructions. 
	Since the parameters of these outer codes depend on the given set $\{\beta_j\}$, a convenient choice is the product-matrix codes \cite{Rashmi11},
	which in their basic version work by downloading a single symbol from every helper. By stacking several codewords of these codes we can obtain any of the component codes $\cC_j$  as in Fig.~\ref{fig_stackedMSR}, thereby matching any set of per-node download values as required by the construction.
	
	Next, we show that this construction attains the MSR point as per Definition \ref{def:corner_points}.
	\begin{proposition}\label{lemma:gen_msr_sat}
		Codes of Construction \ref{const:stacked} meet the bound \eqref{eq:gen_cutset} with equality at the MSR point.
	\end{proposition}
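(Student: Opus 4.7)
The plan is to verify two things: first, that the per-node storage of the stacked construction satisfies $l = \Delta_{d-k+1}(\cB)$ (so the code sits at the MSR point of Definition \ref{def:corner_points}); second, that its file size matches the right-hand side of \eqref{eq:gen_cutset} evaluated at this $l$. The second claim will follow almost for free once the first is established, so the real work is a short summation identity.

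First I would observe that, because $\cB$ is arranged in nondecreasing order, $\Delta_{d-i}(\cB) = \sum_{j=1}^{d-i}\beta_j$ is nondecreasing as $d-i$ grows from $d-k+1$ to $d$, so the smallest value attained over the summation range of \eqref{eq:gen_cutset} is $\Delta_{d-k+1}(\cB)$. Once I show $l = \Delta_{d-k+1}(\cB)$, I immediately get $l \le \Delta_{d-i}(\cB)$ for every $i \in \{0,1,\dots,k-1\}$, so each minimum in \eqref{eq:gen_cutset} equals $l$ and the bound reduces to $M \le kl$.

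The central computation is showing
\[
l = \sum_{j \in S} l_j = \sum_{j \in S}(d-j-k+2)(\beta_j-\beta_{j-1}) = \sum_{j=1}^{d-k+1}\beta_j.
\]
Since $\beta_j - \beta_{j-1} = 0$ whenever $j \notin S$, I can extend the summation to all $j \in \{1,\dots,d-k+1\}$ without changing its value. Then I apply Abel summation with the coefficients $c_j = d-j-k+2$, which decrease by one at each step and satisfy $c_{d-k+1}=1$; the telescoping collapses the expression, using $\beta_0 = 0$, to $\sum_{j=1}^{d-k+1}\beta_j = \Delta_{d-k+1}(\cB)$. This is the only step that is not completely mechanical, but it is a one-line manipulation rather than an obstacle.

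Finally, because each component code $\cC_j$ is itself an MSR code with $M_j = k l_j$, the file size of the stack is $M = \sum_{j \in S}M_j = k\sum_{j \in S}l_j = kl$, which matches $\sum_{i=0}^{k-1}\min\{l,\Delta_{d-i}(\cB)\} = kl$ from the first paragraph. This establishes equality in \eqref{eq:gen_cutset}, and since $l = \Delta_{d-k+1}(\cB)$ coincides with the MSR condition of Definition \ref{def:corner_points}, the construction is MSR-optimal in the generalized sense required.
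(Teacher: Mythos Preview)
Your proof is correct and follows essentially the same route as the paper: both compute $l=\sum_{j=1}^{d-k+1}(d-j-k+2)(\beta_j-\beta_{j-1})=\sum_{j=1}^{d-k+1}\beta_j=\Delta_{d-k+1}(\cB)$ via the same telescoping identity (the paper does it by the substitution $i=d-j-k+2$ rather than naming Abel summation), and then conclude that the bound \eqref{eq:gen_cutset} collapses to $M\le kl$ with $M=\sum_{j\in S}M_j=kl$. You are simply more explicit than the paper about the file-size step and the reduction of each $\min$ term to $l$, which the paper leaves implicit.
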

	\begin{proof}
		The node size for the code of this construction equals
		\begin{align*}
			l &= \sum_{j=1}^{d-k+1}(d-j-k+2)(\beta_j-\beta_{j-1})\\
			&= \sum_{i=1}^{d-k+1}i(\beta_{d-i-k+2}-\beta_{d-i-k+1})=\sum_{i=1}^{d-k+1}\beta_i . 
		\end{align*}
		Since $\beta_j\ge \beta_{j-1}$ for all $j$, summing the first $d$ of the $\beta_j$s gives the minimum value over all permutations $\tau$. Thus,
		the sum on the last line equals $ \Delta_{d-k+1}(\cB),$ matching \eqref{eq:gen_cutset}.
	\end{proof}

	%
	%
	
	\section{IP repair} \label{sec:IPrepair}
In the previous section we considered regenerating codes with nonuniform download in general, without associating them
with repair on graphs. In the first part of this section, we continue this line of thought, deriving a lower bound 
on the minimum required transmission for a set of helper nodes for the repair of a failed node. 
Then we turn to repair on graphs, showing that linear regenerating codes support IP repair on graphs, and that the stacking
code family attains this lower bound.
	
\subsection{Lower bound}
	
In the next theorem, we prove a generalized version of the IP bound,
	extending the results of \cite{Patra2022}. 
	\begin{theorem}\label{thm:gen_IP}
		Let $f \in [n]$ be the failed node. For a subset of helper nodes $E \subset D$, let $R_E^f$ be a function of $S_E^f$ such that $H(W_f|R_E^f,S_{D\setminus E}^f)=0$. Then if $|E| \ge d-k+1$,
		\begin{equation}\label{eq:REf}
			H(R_E^f) \ge M - \sum_{i=0}^{k-2}\min\{l,\Delta_{d-i}(\cB)\}.
		\end{equation}
	\end{theorem}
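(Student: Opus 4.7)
The plan is to imitate the cutset-style proof of Theorem~\ref{theorem:gen_cutset}, but at the last step to route $H(W_f\mid\text{previously recovered nodes})$ through $R_E^f$ using the hypothesis $H(W_f\mid R_E^f, S_{D\setminus E}^f)=0$. The main observation is that the condition $|E|\ge d-k+1$ translates into $|D\setminus E|\le k-1$, so the complementary helpers $D\setminus E$ can be thought of as part of a size-$(k-1)$ ``prefix'' in a chain-rule expansion of $M$.

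More concretely, I would first set $A:=D\setminus E$ (so $|A|=d-|E|\le k-1$) and extend $A$ to a set $A'\subseteq[n]\setminus\{f\}$ with $|A'|=k-1$ by adjoining $k-1-|A|$ arbitrary other nodes. Since $A'\cup\{f\}$ has size $k$, the data-reconstruction property gives $H(W_{A'\cup\{f\}})=M$, and the chain rule yields
\begin{equation*}
M \;=\; H(W_{A'}) + H(W_f\mid W_{A'}).
\end{equation*}
For the first term I would order the elements of $A'$ and apply the one-step inequality \eqref{eq:H-H} from the proof of Theorem~\ref{theorem:gen_cutset} at each link of the chain, obtaining
\begin{equation*}
H(W_{A'}) \;\le\; \sum_{i=0}^{k-2}\min\{l,\Delta_{d-i}(\cB)\}.
\end{equation*}

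For the second term I would successively use (i) that conditioning reduces entropy, so $H(W_f\mid W_{A'})\le H(W_f\mid W_{A})$; (ii) that $S_A^f$ is a deterministic function of $W_A$, so $H(W_f\mid W_A)\le H(W_f\mid S_A^f)$; and (iii) that by the hypothesis on $R_E^f$,
\begin{equation*}
H(W_f\mid S_A^f)\;=\;I(W_f;R_E^f\mid S_A^f)\;\le\;H(R_E^f\mid S_A^f)\;\le\;H(R_E^f).
\end{equation*}
Combining these with the chain-rule identity gives $M\le\sum_{i=0}^{k-2}\min\{l,\Delta_{d-i}(\cB)\}+H(R_E^f)$, which rearranges to \eqref{eq:REf}.

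The only subtlety, and the main thing to get right, is the choice of $A'$: one must ensure that $A$ sits inside the first $k-1$ positions of the chain (so that the entropy bound on $W_f$ can be routed through $S_A^f$ rather than $S_{A'}^f$), and that $f\notin A'$ (so that the chain-rule step $H(W_{A'\cup\{f\}})=M$ is valid). The constraint $|E|\ge d-k+1$ is precisely what makes this possible; with this in hand, the rest is a straightforward combination of the standard information-theoretic inequalities used in \cite{Shah2012} and in the proof of Theorem~\ref{theorem:gen_cutset}.
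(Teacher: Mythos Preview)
Your proposal is correct and follows essentially the same approach as the paper. Both arguments extend $D\setminus E$ to a set of $k-1$ nodes, invoke the reconstruction property $H(W_K)=M$ for the resulting $k$-set, and bound the entropy of the $(k-1)$-prefix via \eqref{eq:H-H}. The paper is organized slightly more directly: rather than splitting $M=H(W_{A'})+H(W_f\mid W_{A'})$ and routing $H(W_f\mid W_{A'})$ through $S_{D\setminus E}^f$ and mutual information, it observes in one line that $H(R_E^f,W_{D\setminus E},W_A)\ge M$ (since this collection determines $W_f$ and hence $k$ nodes) and then subadditivity gives $H(R_E^f)\ge M-H(W_{D\setminus E},W_A)$; but the substance is identical.
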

	\begin{proof}
		Since we assumed that $H(W_f|R_E^f,S_{D\setminus E}^f)=0$, all the more it is true that
		\begin{equation}\label{eq:eqtn2}
			H(W_f|R_E^f, W_{D\backslash E})=0.
		\end{equation}
		We have $|D\backslash E| \le k-1$. Consider a set $A \subset E$ with $|A| = k-1-|D\backslash E|$. Now,
		\begin{equation}\label{eq:eqtn3}
			H(R_E^f, W_{D\backslash E}, W_{A}) = H(R_E^f, W_{D\backslash E}, W_f, W_{A}) \ge M,
		\end{equation}
		where the equality in \eqref{eq:eqtn3} follows from \eqref{eq:eqtn2} and the chain rule, and the inequality follows from the reconstruction property because 
		$|D\backslash E|+|A|+1 = k$. Next, observe that
		\begin{align}\nonumber
			H(R_E^f, W_{ D\backslash E}, W_{A}) &\le H(R_E^f)+H( W_{D\backslash E}, W_{A}),
		\end{align}
		and so 
		\begin{align}\nonumber
			H(R_E^f) &\ge M - H( W_{D\backslash E}, W_{A})\\
			&\ge M- \sum_{i=0}^{k-2}\min\{l,\Delta_{d-i}(\cB)\},\nonumber
		\end{align}
		where the last inequality follows from \eqref{eq:H-H}.
	\end{proof}
	
\begin{corollary}\label{cor:gen_IP_MSR}
(a) Assume that $\beta_i=\beta$ for all $i \in [d]$, then we have
		\begin{equation}\label{eq:uni_IP}
			H(R_E^f) \ge M- \sum_{i=0}^{k-2}\min\{l,(d-i)\beta\}.
		\end{equation}
(b)  For MSR codes, we have
		\begin{equation}\label{eq:H-IP}
			H(R_E^f) \ge l=\Delta_{d-k+1}(\cB)
		\end{equation}
(c) Assuming in addition uniform download with $\beta_i=\beta$ for all $i \in [d]$ at the MSR point, we have
		\begin{equation}\label{eq:H-IP_MSR}
			H(R_E^f) \ge (d-k+1)\beta.
		\end{equation}
	\end{corollary}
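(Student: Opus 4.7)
The plan is to derive each of the three parts (a), (b), (c) as direct consequences of Theorem \ref{thm:gen_IP}, using only the definitions of $\Delta_r(\cB)$, the uniform download assumption, and the MSR point condition $l=\Delta_{d-k+1}(\cB)$.

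For part (a), I would substitute the assumption $\beta_i = \beta$ for all $i \in [d]$ into the expression for $\Delta_{d-i}(\cB)$. Since every element of $\cB$ equals $\beta$, the sum of the $d-i$ smallest is simply $(d-i)\beta$. Plugging this directly into the bound \eqref{eq:REf} of Theorem \ref{thm:gen_IP} yields \eqref{eq:uni_IP}.

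For part (b), the key observation is that the sequence $\{\Delta_r(\cB)\}_{r=1}^d$ is nondecreasing in $r$, since $\Delta_r(\cB)$ is the sum of the $r$ smallest elements of $\cB$ (all nonnegative). Hence at the MSR point, $\Delta_{d-i}(\cB) \ge \Delta_{d-k+1}(\cB) = l$ for every $i=0,1,\dots,k-1$, so $\min\{l,\Delta_{d-i}(\cB)\} = l$. I would then note that this forces the file size to satisfy $M=kl$: the cutset bound \eqref{eq:gen_cutset} gives $M\le kl$, while the MSR-optimality assumption (achievement of the bound) gives equality. Substituting $M=kl$ and $\sum_{i=0}^{k-2}\min\{l,\Delta_{d-i}(\cB)\}=(k-1)l$ into \eqref{eq:REf} immediately yields $H(R_E^f)\ge l=\Delta_{d-k+1}(\cB)$.

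Finally, part (c) is obtained by specializing part (b) to uniform download: under $\beta_i=\beta$, the sum $\Delta_{d-k+1}(\cB)=(d-k+1)\beta$, and the bound becomes $H(R_E^f)\ge(d-k+1)\beta$. There is essentially no hard step in this corollary; if anything, the only care is needed in part (b) to justify $M=kl$ at the MSR point, which relies on the monotonicity of $\Delta_r(\cB)$ in $r$ together with the defining equality $l=\Delta_{d-k+1}(\cB)$ of Definition \ref{def:corner_points}.
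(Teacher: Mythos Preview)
Your proposal is correct and follows essentially the same approach as the paper, which also derives (a) by direct substitution into \eqref{eq:REf}, obtains (b) by invoking $l=\Delta_{d-k+1}(\cB)$ and $M=kl$ at the MSR point and plugging into \eqref{eq:REf}, and reads off (c) from (b). Your added justification for $M=kl$ via the monotonicity of $\Delta_r(\cB)$ is a welcome bit of extra detail, but the argument is otherwise identical.
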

	\begin{proof}
Part (a) follows directly from \eqref{eq:REf}. To prove part (b), recall from \eqref{eq:gen_cutset} that
at the MSR point, we have $l= \Delta_{d-k+1}(\cB)$ and $M=kl,$ and use \eqref{eq:REf}. Part (c) is immediate from 
\eqref{eq:H-IP}.
	\end{proof}
Theorem \ref{thm:gen_IP} bounds below the amount of data necessarily obtained from a subset $E\subset D$
	irrespective of the processing performed by the nodes in $E$, including the IP repair on graphs, discussed in the next section. Eq. (\ref{eq:H-IP_MSR}) in Corollary \ref{cor:gen_IP_MSR} is a restatement of Lemma II.1 from \cite{Patra2022}. In that work we also showed that $H(R_E^f)=l$ is achievable at the MSR point for the uniform download case.  Below in this paper we show
	that the value $H(R_E^f)=l$ can be achieved by some code families in the non-uniform download case as well. Note that, in general, for all other non-MSR points of the tradeoff curve, $\Delta_{d-k+1}(\cB)<l.$
	
	The above information-theoretic formulation of the repair problem is also valid for functional repair with a slight modification: for functional repair we only focus on the repair cycle of a single node from $d$ helpers whose repair has already been completed (See Section VI of \cite{Shah2012}). Hence the random variable $W_f$ now denotes the data to be reconstructed at the failed node (which can be different from the data lost prior to this moment.). The next proposition forms an extension of Lemma II.1 in \cite{Patra2022}, generalizing it to all functional and exact regenerating codes at all points of the trade-off curve.
	
\begin{proposition}
			For an optimal functional repair code that meets the cutset bound of Eq.~(\ref{eq:gen_cutset}) with equality, we have $H(R_E^f) \ge \Delta_{d-k+1}(\cB)$. Assuming in addition uniform download with $\beta_i=\beta$ for all $i \in [d]$, we have  $H(R_E^f) \ge (d-k+1)\beta$.
		\end{proposition}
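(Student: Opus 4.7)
The plan is to reduce the statement to Theorem~\ref{thm:gen_IP} followed by one arithmetic substitution. First I would verify that the argument underlying Theorem~\ref{thm:gen_IP} carries over verbatim to the functional-repair setting. Inspecting that proof, it uses only (i) the chain rule and subadditivity of entropy, (ii) the data reconstruction property $H(\cF\mid W_A)=0$ for $|A|=k$, and (iii) the repair identity $H(W_f\mid R_E^f, S_{D\setminus E}^f)=0$ within a single repair cycle. As noted in the paragraph preceding the proposition, under functional repair one reinterprets $W_f$ as the newly written contents of the failed node; with this reinterpretation all three ingredients hold and the bound \eqref{eq:REf} remains valid in the functional-repair model.

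With that in hand, Theorem~\ref{thm:gen_IP} gives
\begin{equation*}
H(R_E^f) \ge M - \sum_{i=0}^{k-2}\min\{l,\Delta_{d-i}(\cB)\}.
\end{equation*}
The optimality hypothesis turns the cutset bound \eqref{eq:gen_cutset} into an equality
\begin{equation*}
M = \sum_{i=0}^{k-1}\min\{l,\Delta_{d-i}(\cB)\},
\end{equation*}
and subtracting these two displays isolates the $i=k-1$ summand, leaving $H(R_E^f) \ge \min\{l,\Delta_{d-k+1}(\cB)\}$.

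Finally, I would argue that the minimum collapses to $\Delta_{d-k+1}(\cB)$. By Definition~\ref{def:corner_points} the MSR corner is characterized by $l=\Delta_{d-k+1}(\cB)$, and the remaining points of the trade-off curve correspond to trading additional storage against reduced bandwidth, so $l\ge \Delta_{d-k+1}(\cB)$ throughout. The uniform-download statement then follows immediately, since $\Delta_{d-k+1}(\cB)=(d-k+1)\beta$ when every $\beta_i$ equals $\beta$. The only delicate step in this plan is the verification that the entropy argument of Theorem~\ref{thm:gen_IP} survives the transition to functional repair; once that is granted, the remainder is purely arithmetic.
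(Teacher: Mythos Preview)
Your proposal is correct and matches the paper's approach exactly: the paper's one-sentence proof simply substitutes the equality form of \eqref{eq:gen_cutset} into \eqref{eq:REf}, which is precisely what you do. Your write-up is more explicit on two points the paper leaves implicit---that the entropy argument carries over to functional repair (the paper only remarks on this in the preceding paragraph) and that $\min\{l,\Delta_{d-k+1}(\cB)\}=\Delta_{d-k+1}(\cB)$ on the trade-off curve (the paper notes $\Delta_{d-k+1}(\cB)<l$ for non-MSR points just before the proposition but does not invoke it in the proof).
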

The proof follows straightforwardly by replacing $M$ in \eqref{eq:REf} with the maximum
value of $M$, given by the right-hand side of \eqref{eq:gen_cutset}.

\subsection{Nonuniform download and regenerating codes on graphs} \label{sec:IP-GRC}
 
 Limited connectivity of the network is modeled as placing each node on a vertex of a graph
$G(V,E)$ with $|V|=n,$ where each node has direct access only to its immediate neighbors in $G$. Suppose further that the 
coordinate $C_f$ for some $f\in[n]$ is erased, i.e., that the node $f\in [n]$ has failed. Below we denote the vertex in $V$ that corresponds to $f$ by $v_f$ and use $f$ and $v_f$ interchangeably. 
Suppose further that the set $D\subset[n]$ of helper nodes is fixed and consists of the immediate neighbors of $f$ as well as
of some other nodes. To accomplish the repair, the helper nodes provide information which is communicated to $f$  over the edges in $E_{f,D}$. Each helper node in the graph, starting from the nodes farthest from the failed node, sends data to the next node along the shortest path towards $f$. These nodes can be found by a simple breadth-first search on $G$ starting at $f$. Denote by $G_{f,D} = (V_{f,D},E_{f,D})$ the subgraph spanned by $\{f\}\cup D,$ and let $T_{f,D}$ be a spanning tree 
of this subgraph with $f$ as the root. Let $t= \max_{v \in D} \rho(v,f)$ be the height of this tree. We will use the following notation for spheres and balls around $f$ in $G_{f,D}:$
	$$\Gamma_j(f)=\{v\in V_{f,D}:\rho(v,f)=j\},\; N_i(f)=\cup_{j=1}^i \Gamma_j(f),
	$$ 
and we refer to the vertices in $\Gamma_{j}(f)$  as the helper nodes in \textit{layer} $j$. We have $V(T_{f,D})=\sqcup_{i}\Gamma_i(f)$, where we denote $d_i:=|\Gamma_i(f)|=d_i$, with $\sum_i d_i = d$.
To accomplish the repair, the helper nodes 
provide information which is communicated to $f$  over the edges in $E_{f,D}$. The overall repair bandwidth for the repair of $f$ is 
the total amount of transmitted data across all edges. 

Note that there can be multiple possible choices of $G_{f,D}$ and $T_{f,D}$ and the communication complexity of repair depends on this 
choice. The analysis can be made more concrete if we assume that $G$ satisfies certain regularity assumptions.
	
	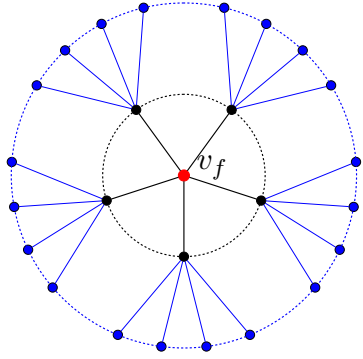
\begin{figure}[th]\begin{center}\scalebox{0.27}{\begin{tikzpicture}
					[
					vertex_style/.style={circle, draw, fill,minimum size=0.08cm,scale=1.25},
					vertex_style1/.style={circle, draw, fill=blue,minimum size=0.08cm,scale=1.25}
					]
					
					\useasboundingbox (-10,-10) rectangle (10,10);
					
					\begin{scope}[rotate=90]
						
						\node[circle,draw=red,fill=red,minimum size=0.1cm,scale=1.5,label={[xshift=1.4cm, yshift=-1cm,minimum size=0.5cm,color=black,scale=4]$v_f$}] (0) at (canvas polar cs: radius=0cm,angle=0){};
						
						\foreach \x/\y in {36/1,108/2,180/3,252/4,324/5}{
							\node[vertex_style] (\y) at (canvas polar cs: radius=4cm,angle=\x){};
						}
						\foreach \x/\y in {0/1,0/2,0/3,0/4,0/5}{
							\path[-] (\x) edge [black,ultra thick] (\y);
						}
						
						\foreach \x/\y in {13.5/6,28.5/7,43.5/8,58.5/9}{
							\node[vertex_style1] (\y) at (canvas polar cs: radius=8.5cm,angle=\x){};
							\path[-] (1) edge [blue,thick] (\y); 
						}
						\foreach \x/\y in {85.5/10,100.5/11,115.5/12,130.5/13}{
							\node[vertex_style1] (\y) at (canvas polar cs: radius=8.5cm,angle=\x){};
							\path[-] (2) edge [blue,thick] (\y); 
						}
						\foreach \x/\y in {85.5+72/14,100.5+72/15,115.5+72/16,130.5+72/17}{
							\node[vertex_style1] (\y) at (canvas polar cs: radius=8.5cm,angle=\x){};
							\path[-] (3) edge [blue,thick] (\y); 
						}
						\foreach \x/\y in {85.5+144/19,100.5+144/20,115.5+144/21,130.5+144/22}{
							\node[vertex_style1] (\y) at (canvas polar cs: radius=8.5cm,angle=\x){};
							\path[-] (4) edge [blue,thick] (\y); 
						}
						\foreach \x/\y in {85.5+216/23,100.5+216/24,115.5+216/25,130.5+216/26}{
							\node[vertex_style1] (\y) at (canvas polar cs: radius=8.5cm,angle=\x){};
							\path[-] (5) edge [blue,thick] (\y); 
						}
						\draw[black,ultra thick,dashed,label=x] (0,0) circle (4.0cm) ;
						\draw[blue,ultra thick,dashed] (0,0) circle (8.5cm);
					\end{scope}
			\end{tikzpicture}}
			\caption{Node repair on a graph: The failed node $v_f$ and the set $D$ of helper nodes, forming the repair tree $T_{f,D}$.}\label{fig:graph1}
		\end{center}
	\end{figure}
 
	\begin{example}\label{example:Cayley} As an example, suppose $G(V,E)$ is a connected $t$-regular graph. One way to guarantee the existence of a convenient repair tree is
		to consider graphs with girth $g$, in which case a ball of radius $\lfloor g/2\rfloor-1$ around any vertex is a tree with $t$ immediate neighbors of the center and $t(t-1)^{i-1}$ vertices in layer $i$. A line of work starting with Margulis's paper \cite{margulis1982explicit} yielded constructions of 
		such graph families with $g\ge C(n,t)\log_{t-1}n$, where $n$ is the number of vertices and $C(n,t)$ is a constant. 
We give concrete examples of repair on these graphs below.
	\end{example}
	
	\subsection{IP repair for linear regenerating codes}\label{sec:general}
In this section we present a general framework for IP repair for any linear GRC, establishing the following result.
The {\em repair tree} of the failed node $f$ is a spanning tree of $G_{f,D}$ with the root at $f$. 
 \begin{theorem}\label{thm:gen_IP_sat} 
(a) {\rm (Existence of IP repair procedure)} Suppose that we are given an $F$-linear regenerating code, a failed node $f$, and a helper node $h$ that is on a path from a subset $B\subset D$ to $f$ in the repair tree $T_{f,D}$. There exists an $F$-linear map that enables 
one to combine the information from the nodes in $B \cup \{h\}$, resulting in an $l$-dimensional vector sent to $f$  to complete the repair. 

(b)	{(\rm Optimality of the stacking construction)} The IP repair procedure for codes of Construction \ref{const:stacked} meets the lower bound \eqref{eq:H-IP}
with equality.
	\end{theorem}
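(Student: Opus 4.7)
The plan is to deduce part (a) from the $F$-linearity of the repair equations, and then to obtain part (b) by sandwiching $H(R_E^f)$ between the upper bound coming from part (a) and the lower bound from Corollary \ref{cor:gen_IP_MSR}.

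For part (a), I start from the linear repair equation at the failed node. Since $\cC$ is $F$-linear, each $S_i^f$ is an $F$-linear function of $W_i$, and the reconstruction of $W_f$ from $\{S_i^f : i \in D\}$ is also $F$-linear. Hence there exist matrices $\Lambda_i$ of size $l \times \beta_{\tau(i)}$, depending on $f$, such that
\begin{equation*}
W_f \;=\; \sum_{i \in D} \Lambda_i\, S_i^f.
\end{equation*}
At the helper $h$, define the IP output
\begin{equation*}
R \;=\; \sum_{i \in B \cup \{h\}} \Lambda_i\, S_i^f,
\end{equation*}
which is a vector in $F^l$. The identity $W_f = R + \sum_{j \in D\setminus(B\cup\{h\})} \Lambda_j\, S_j^f$ shows that $R$, together with the direct contributions from the helpers outside $B \cup \{h\}$, suffices to reconstruct $W_f$. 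This produces the required $F$-linear map.

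This construction is compatible with sequential processing along the repair tree $T_{f,D}$: if each child of $h$ has already aggregated the data from its own subtree into a partial sum of the same form, then $h$ adds its own term $\Lambda_h\, S_h^f$ and forwards the resulting $l$-dimensional vector upward. Consequently, IP proceeds layer by layer along $T_{f,D}$ without ever growing the per-edge payload beyond $l$ symbols of $F$.

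For part (b), the upper bound $H(R_E^f) \le l$ is immediate from part (a), since the aggregated vector lies in $F^l$. For the matching lower bound, Proposition \ref{lemma:gen_msr_sat} establishes that Construction \ref{const:stacked} is MSR-optimal, with $l = \Delta_{d-k+1}(\cB)$ and $M = kl$. Applying Corollary \ref{cor:gen_IP_MSR}(b) at these MSR parameters gives $H(R_E^f) \ge l$ for every $E \subset D$ with $|E| \ge d-k+1$. Combining the two bounds yields $H(R_E^f) = l$, i.e., equality in \eqref{eq:H-IP}.

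I expect the main subtlety to lie in part (a): one must verify that the repeated compression at intermediate nodes down to $l$-dimensional payloads never destroys information needed at $f$. The transparent resolution is that each layer-wise partial sum is algebraically of the same form $\sum_{i \in \text{subtree}} \Lambda_i S_i^f$, so the $\Lambda_i$'s coming from the original global linear repair equation guarantee self-consistency of the recursion. Once this is in place, part (b) is a clean sandwich argument requiring no additional machinery beyond Proposition \ref{lemma:gen_msr_sat} and Corollary \ref{cor:gen_IP_MSR}.
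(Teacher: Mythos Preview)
Your proof is correct and, for part (a), identical in substance to the paper's: both write $W_f=\sum_{i\in D}U_iS_i^f$ with $l\times\beta_{\tau(i)}$ matrices $U_i$ (the paper's $U_{h,f}$, your $\Lambda_i$) and observe that any partial sum over a subtree is an $l$-vector that can be computed and forwarded locally. For part (b) there is a mild difference: the paper argues constructively, applying IP within each component code $\cC_j$ to compress to $l_j$ symbols and summing to $\sum_j l_j=l$, whereas you obtain the upper bound $H(R_E^f)\le l$ by applying part (a) to the whole stacked code (which is $F$-linear) and close the sandwich with Corollary~\ref{cor:gen_IP_MSR}(b) via Proposition~\ref{lemma:gen_msr_sat}. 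Your route is slightly more abstract but fully sufficient; the paper's per-component description adds operational detail (it exhibits the actual IP map for the stack) but is not logically required once part (a) is in hand.
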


 \begin{proof}
(a) Fix the permutation $\tau: D \rightarrow [d]$.
	Referring to the notation of the previous section, our task is to recover the contents of $W_f$ by downloading $\beta_{\tau(h)}$ symbols from the helper node $h$. Denote by $S_{h,f}^{\tau} \in F^{\beta_{\tau(h)}}$ the symbols of $F$ provided by $h$ for repair, and suppose
	that $S_{h,f}^{\tau}=\cG_{h,f}^{\tau}(W_h),$ where $\cG_{h,f}^{\tau}: F^l \rightarrow F^{\beta_{\tau(h)}}$ is an $F$-linear map determined by the code. By the definition of the repair property, we know that if the symbols $S_{h,f}^{\tau}$ are available to $f$ (say, if $G$ is a complete graph) then there exists an $F$-linear map $\sF^{\tau}_{f,D}: F^{\sum_{j=1}^d\beta_j} \rightarrow F^l$ such that
	$$ 
	W_f=\sF^{\tau}_{f,D}(\cG^{\tau}_{h_1f}(W_{h_1}),\cG^{\tau}_{h_2f}(W_{h_2}),\cdots,\cG^{\tau}_{h_df}(W_{h_d})).
	$$ 
	For notational simplicity, from now on we suppress the dependence of the maps and the random variables on the permutation
 $\tau$. The above equation implies that there exists a matrix $U_{f,D} \in F^{l \times \sum_{j=1}^d\beta_j}$ such that 
	\begin{equation}\label{eq:WU}
		W_f = U_{f,D} \cdot [
		(S_{h_1,f})^\intercal \mid (S_{h_2,f})^\intercal\mid\dots   \mid(S_{h_d,f})^\intercal)
		]^\intercal.
	\end{equation}
	Writing the matrix $U_{f,D}$ in block form as
	$[ U_{h_1,f} \, U_{h_2,f} \, \cdots\,  U_{h_d,f}]$, we can rewrite \eqref{eq:WU} as
	\begin{equation}\label{eq:3}
		W_f = \sum_{h \in D}U_{h,f}S_{h,f}
	\end{equation}
	where $U_{h,f} \in F^{l \times \beta_{\tau(h)}}$ and $S_{h,f} \in F^{\beta_{\tau(h)}}$.
	Thus, there exist linear maps $I_{h,f}: F^{\beta_{\tau(h)}} \stackrel{U_{h,f}}\longrightarrow F^{l}, h\in D.$
	Since the matrices $U_{h,f}$ do not depend on the codeword, the value $U_{h,f}S_{h,f}$ can be computed at any part of the network 
	by any node with access to $S_{h,f}$. This shows that any set of helper nodes $A \subseteq D$, instead of sending
	$\{S_{h,f}: h \in A\},$ which requires $\sum_{h \in A}\beta_{\tau(h)}$ transmissions, can pass along $\sum_{h \in A}U_{h,f}S_{h,f},$ which requires $l$ transmissions. 
	Starting from the leaf nodes in $G_{f,D}$, the data is moved toward $f$, and as the set $A$ of the already involved helpers increases in size, switching to the latter mode involves savings in the repair bandwidth.

(b)	Observe that codes of Construction \ref{const:stacked} are formed of $F$-linear MSR codes, so they are themselves $F$-linear, and therefore support IP repair. Clearly, they also minimize the amount of data sent by any subset of $d-k+1$ nodes at the MSR point.
	
		For a given $j$ and a code $\cC_j$ it is possible to perform IP repair. Specifically, any subset of at least $d-j-k+2$ nodes can perform intermediate processing for $\cC_j$ to compress their repair data to $l_j$ symbols of $F$. Therefore overall, the subset of nodes of size $d-k+1$ or more
		can perform IP repair, compressing their data to $\sum_{j\in S} l_j=l$ symbols.
	\end{proof}

In Sec.~\ref{sec:IPC} we give an example of implementing this procedure for two related families of regenerating codes. More examples of this kind are given in \cite{patra2022interior,Patra2022NodeRO}.

\subsection{Repair bandwidth gains with nonuniform download} 
	
The repair procedure for MSR codes that optimizes the overall communication complexity of repair with uniform download was 
presented in \cite{Patra2022}. It involves transporting the helper data towards the failed node, i.e., the root of the tree, along the 
edges of the tree, whereby nodes having more than $d-k+1$ children process the information and send $l$ symbols relying on the IP 
technique. Let $J$ be the set of nodes in $T_{f,D}$ with at least $d-k+1$ children and let $J_i=\Gamma_i(f)\cap J$ be the set of nodes 
in $J$ at distance $i$ from the root. For an $i \notin J$, let $\cP(i)$ denote the nearest parent of node $i$ in $J$, and if no such 
parent exists, then let $\cP(i) = f$. 	

Clearly, the set $J$ does not change when we switch from the uniform download model to the nonuniform one and vice versa. Furthermore, every node in $J$ keeps transmitting $l$ symbols by relying on the IP procedure.
Indeed, if a node has $d-k+1$ or more children in the tree, they jointly must transmit at least $l$ symbols for repair because of the bound \eqref{eq:H-IP}, irrespective of whether the $\beta_i$'s are equal or different. 

Assume now that nodes in layer $i$ each contribute $\beta_i$ symbols for repair with $\beta_1 \ge \beta_2 \ge \cdots \ge \beta_t$. This can be accomplished by using an $[n,k,d,l,\cB,M]$ MSR code from Construction \ref{const:stacked} with the set $\cB$ formed of $\beta_i$'s, each appearing
$d_i$ times, for all $i \in [t]$. Let $\delta_i = \beta-\beta_i$ where $\beta = \frac{l}{d-k+1}$ is the uniform download value at the MSR point for the same per-node storage $l$. Furthermore, let 
   $$
   t':=\max\Big(s\in[t]: \sum_{i=s}^{t}d_i \ge d-k+1\Big).
   $$
It can be checked that the vector $\mu$ of Construction \ref{const:stacked} in this case is given by
$$
\mu = e_1+\sum_{i=  t'}^{t} e_{x_i},
$$ 
where $x_i=\sum_{j=i}^t d_j+1$ and $e_x\in\{0,1\}^{d-k+1}$ is a vector with a single 1 in position $x$.

We denote by $\Lambda(T_{f,D})$ the total communication complexity of repairing node $f$ using the repair tree $T_{f,D}$. Additionally, we use superscripts IP and AF for repair with the IP and AF schemes, and 
subscripts U and NU for uniform and nonuniform download. For instance, $\Lambda_{\text U}^{\text IP}$ refers to
the communication complexity of repair with IP and uniform download.

The next theorem states conditions 
under which the nonuniform download model attains bandwidth gains over the uniform one.
	\begin{theorem}\label{theorem:benefit_nonuni}
For a fixed $\cB = \{\beta_i\}$ and $\tau: D \rightarrow [d]$, suppose that a codeword of an $[n,k,d,l,\cB,M]$ GRC 
at the MSR point is encoded on the vertices of a graph $G$. There exists an explicit  scheme that has the following communication complexity of repair for the node $f$ with repair tree $T_{f,D}$:
		\begin{equation}\label{eq:bw_nonuni}
			\Lambda_{\text{\rm NU}}^{\text{\rm IP}}(T_{f,D}) = \sum_{i\in J\setminus\{f\}}l +\sum_{i=1}^t\sum_{j \in \Gamma_i\setminus J_i}\rho(j,\cP(j))\beta_i.
		\end{equation}
	Furthermore, the nonuniform contribution model attains savings over the uniform one whenever
	\begin{equation}\label{eq:P}
		\sum_{i=1}^t\sum_{j \in \Gamma_i\setminus J_i}\rho(j,\cP(j))\delta_i > 0
	\end{equation}
	subject to
	\begin{equation}\label{eq:delta0}
		\sum_{i=t}^{  t'+1}d_i\delta_i+\Big(d-k+1-\sum_{i=t}^{  t'+1}d_i\Big)\delta_{  t'}=0.
	\end{equation}
	\end{theorem}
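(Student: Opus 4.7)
The plan is to exhibit an explicit linear scheme on the repair tree $T_{f,D}$ and count the field symbols traveling over each edge. Using Theorem \ref{thm:gen_IP_sat}(a), I write $W_f = \sum_{h \in D} U_{h,f} S_{h,f}$ with $S_{h,f} \in F^{\beta_{\tau(h)}}$ and $U_{h,f} \in F^{l \times \beta_{\tau(h)}}$ determined by the code. Processing $T_{f,D}$ bottom-up, I have each node $v$ act as follows: if the subtree rooted at $v$ contains at least $d-k+1$ helpers (i.e., $v \in J$), then $v$ forms the partial sum $R_v = \sum_{h \in \text{subtree}(v)} U_{h,f} S_{h,f} \in F^l$ by linearly combining the data from children with its own contribution, and forwards the length-$l$ block $R_v$ to its parent; otherwise $v$ just aggregates its own $\beta_i$ symbols with whatever its children sent and forwards upward. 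Correctness is immediate since $f$ receives $\sum_{h \in D} U_{h,f} S_{h,f} = W_f$.

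The key structural fact is that $J$ is closed under taking ancestors within $T_{f,D} \setminus \{f\}$: if $v \in J$ and $\text{parent}(v) \neq f$, then the subtree of the parent strictly contains that of $v$ and so holds at least $d-k+2$ helpers. Consequently, for every $v \in J \setminus \{f\}$ the edge from $v$ to its parent carries exactly the $l$ symbols of $R_v$, accounting for the term $\sum_{v \in J \setminus \{f\}} l$. For each $j \in \Gamma_i \setminus J_i$, the raw $\beta_i$ symbols of $S_{j,f}$ travel unmodified up the unique tree path from $j$ to the first ancestor in $J \cup \{f\}$, which is $\cP(j)$, before being absorbed into the IP sum there. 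Each of the $\rho(j,\cP(j))$ edges along this path picks up $\beta_i$ additional symbols, so $j$ contributes $\rho(j,\cP(j))\beta_i$; summing over all non-$J$ helpers yields the second term of \eqref{eq:bw_nonuni}.

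For the savings claim, note that $J$ is determined by the tree alone and not by the $\beta_i$, so the same scheme with uniform download $\beta = l/(d-k+1)$ has $\Lambda_{\text{U}}^{\text{IP}}(T_{f,D}) = \sum_{v \in J \setminus \{f\}} l + \sum_{i=1}^t \sum_{j \in \Gamma_i \setminus J_i} \rho(j,\cP(j)) \beta$. Subtracting \eqref{eq:bw_nonuni} produces $\sum_i \sum_j \rho(j,\cP(j)) \delta_i$, which is positive precisely under \eqref{eq:P}. The side constraint \eqref{eq:delta0} is the MSR identity $l = \Delta_{d-k+1}(\cB) = (d-k+1)\beta$ rewritten in terms of the gaps $\delta_i$: the $d-k+1$ smallest elements of $\cB$ consist of all $d_i$ copies of $\beta_i$ for $i > t'$ together with $d-k+1-\sum_{i=t'+1}^t d_i$ copies of $\beta_{t'}$, and substituting $\beta_i = \beta - \delta_i$ into this identity rearranges to \eqref{eq:delta0}.

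The main delicate step is the ancestor-closure of $J$, which is what collapses the first term from a weighted path-length sum (as one might naively expect) down to the simple $|J \setminus \{f\}|\cdot l$. Once this is in hand, the rest of the argument is routine edge-by-edge bookkeeping combined with the existence of the IP maps guaranteed by Theorem \ref{thm:gen_IP_sat}(a).
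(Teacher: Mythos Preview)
Your argument is correct and follows the same route as the paper's very brief proof: invoke Theorem~\ref{thm:gen_IP_sat}(a) for the existence of the linear IP maps, then do edge-by-edge bookkeeping on $T_{f,D}$, and obtain \eqref{eq:P} and \eqref{eq:delta0} by comparing with the uniform scheme and by rewriting the MSR constraint $l=\Delta_{d-k+1}(\cB)$ in terms of the $\delta_i$.

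One point deserves comment. You take $J$ to be the set of helpers whose \emph{subtree} contains at least $d-k+1$ nodes, whereas the text preceding the theorem defines $J$ by the number of \emph{immediate children}. Your ancestor-closure claim (``the subtree of the parent strictly contains that of $v$'') is valid for the subtree definition but can fail for the children-based one: a node with many children may well have a single-child parent. The clean collapse of the first term to $\sum_{v\in J\setminus\{f\}} l$ in \eqref{eq:bw_nonuni} genuinely requires this upward closure; without it, an $l$-block produced at some $v\in J$ would have to be relayed through non-$J$ ancestors and the accounting would not match the stated formula. So your subtree-based reading of $J$ is in fact the one under which \eqref{eq:bw_nonuni} holds as written, and your argument is sound; just make explicit that you are using this reading rather than the literal children-based definition in the paper.
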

 
	\begin{proof}
By Theorem~\ref{thm:gen_IP_sat}(a), every node in $J$ may transmit only $l$ symbols. Expression 
\eqref{eq:bw_nonuni} is obtained simply by accounting for the number of symbols transmitted by
each node. The condition in Eq.~(\ref{eq:P}) is obtained by comparing the expressions for uniform and nonuniform download. To obtain \eqref{eq:delta0}, recall
		our notation $\Delta_{d-k+1}$ defined before Theorem~\ref{theorem:gen_cutset}. For the graph case considered, it has the following form:
		$$
		\Delta_{d-k+1}=\sum_{i=  t'+1}^{t}d_i\beta_i+\Big(d-k+1-\sum_{i=  t'+1}^td_i\Big)\beta_{  t'}=l,
		$$
		where the last equality follows from Theorem~\ref{theorem:gen_cutset}.   Rewriting this using the $\delta_i$'s, we obtain   
		Eq.~\eqref{eq:delta0}. 
	\end{proof}
	
\begin{remark}Note that the set $J$ may not include all the nodes capable of performing IP. Indeed, for a choice of 
$\cB = \{\beta_i\}$, any node in the repair tree that accumulates the repair data of a set $A$ such that 
$\sum_{i \in A} \beta_i \ge l$ can gainfully perform IP. Hence, the minimum communication complexity of repair can potentially be even lower than Eq. (\ref{eq:bw_nonuni}).
\end{remark}

Ways of applying Theorem \ref{theorem:benefit_nonuni} depend on the structure of the specific graph family. One such example is given next.
  \begin{example}\label{ex:nonuni}
Consider the $t$-regular Cayley graphs mentioned in Example \ref{example:Cayley}. Suppose that the repair 
tree $T_{f,D}$ is formed of $a$ layers, where $a<g/2,$ then 
     $$
d_i=t(t-1)^{i-1}, i\le a-1 \text{ and } d_a=d-\sum_{i=1}^{a-1} t(t-1)^{i-1}.
    $$ 
    Suppose further that $d_a+d_{a-1}\ge d-k+1.$ 
To simplify the analysis, we are not
including IP since it is somewhat independent of the current discussion and can be easily incorporated into it. The 
overall repair bandwidth for a uniform contribution repair scheme for an $[n,k,d,l,\beta=\frac{l}{d-k+1},M]$ MSR code is  $\Lambda_{\text{U}}^{\text{AF}}(T_{f,D}) = \beta\sum_{i=1}^{a}i d_i$.
			Now let us switch to the nonuniform scheme with helper nodes in layer $i$ contributing $\beta_i$ symbols each, with $\beta_i$'s nonincreasing. From \eqref{eq:delta0}, we have that $d_a\delta_a + (d-k+1-d_a)\delta_{a-1}=0$, with
			$\delta_i=\beta-\beta_i$, and the repair bandwidth under this scheme is
			$
			\Lambda_{\text{NU}}^{\text{AF}}(T_{f,D}) = \sum_{i=1}^{a}i d_i \beta_i.
			$ 
			Note that if $\delta_a >0$ then $\delta_{a-1} <0$ and $\delta_i\le \delta_{a-1}$ 
			for all $i\le a-2,$ so we let $\delta_i= -\frac{d_a}{d-k+1-d_a}\delta_a$
			for all $i \le a-1$ and observe that the savings in the nonuniform setting are
			\begin{align} 
				\Lambda_{\text{U}}^{\text{AF}}(T_{f,D}) - \Lambda_{\text{NU}}^{\text{AF}}(T_{f,D}) &=  \sum_{i=1}^a id_i\delta_i \notag\\ 
				&= \frac{d_a\delta_a}{d-k+1-d_a}\Big(\sum_{i=1}^{a-1}(a-i)d_i-a(k-1)\Big). \label{eq:example_eq}
			\end{align} 
In summary, using the nonuniform scheme results in savings whenever the expression in the parentheses is positive, which is possible for small $k$.
	\end{example}
We end this section with a remark on the generalization of some of the results.   Using other interior point exact-repair code families as component codes in the stacking construction of Section \ref{sec:stacking}, one can get non-MSR GRCs. Existence of IP repair for such codes still follows from Part (a) of Theorem \ref{thm:gen_IP_sat}, however, optimality, i.e., Part (b) of Theorem \ref{thm:gen_IP_sat} might not hold anymore. Conditions for the advantage of the nonuniform download scheme over the uniform one for such interior point GRCs can be stated similarly to Theorem \ref{theorem:benefit_nonuni}. 

	\section{Optimizing the helper data and the repair degree}\label{sec:Optimizing}
 In the analysis up until now, we have focused on the problem of repair on graphs with a fixed repair degree $d$. We showed that the framework of GRCs along with the IP technique give nontrivial ways to minimize the overall communication complexity of node repair in the graph constrained setting by lowering the contribution of the farthest away nodes at the expense of increasing the contributions from the nearer nodes. This gives rise to the question of the limits of this exchange. In the limiting case, one might stop accessing data from the farthest nodes altogether, effectively \emph{decreasing} the repair degree $d$ of the repair process. Indeed, the repair degree need not be a fixed parameter and may be dynamically adjusted if necessary. Universal constructions of regenerating codes proposed in \cite{Ye16a}, \cite{liu2022optimal} support the option of such dynamical adjustment. 

 At the same time, the original work on regenerating codes \cite{Dimakis10} showed that (in the fully connected setting), the repair bandwidth is a decreasing function of the repair degree. This implies that \emph{increasing} the repair degree reduces the communication complexity. Based on these contrasting observations, a natural question to ask is what is the optimal choice of the repair degree under the communication constraints described by a graph.

In this section we find the minimum repair bandwidth obtained by optimizing the repair degree and the download amounts with
GRCs introduced in Construction~\ref{const:stacked} above. We also show that under the AF repair scheme (i.e., with no intermediate processing), for certain parameter regimes, the overall repair bandwidth decreases as more and more helper nodes are involved and hence the optimal choice of $d$ is $n-1$. This result is established for both deterministic and random graphs.
 
	\begin{example}\label{ex:nonuni_cont} To motivate the discussion, let us return to Example \ref{ex:nonuni}, showing that  
 the idea of adjusting the repair degree naturally arises from GRCs. We will show that the maximum savings can be 
		attained by adjusting the repair degree and switching to the uniform assignment.  Expression \eqref{eq:example_eq} implies that, as we increase $ \delta_a= \beta-\beta_a$ above, the advantage of the nonuniform assignment $\Lambda_{\text{U}}^{\text{AF}}(T_{f,D}) - \Lambda_{\text{NU}}^{\text{AF}}(T_{f,D})$ increases, attaining the maximum when $\delta_a= \beta$
or $\beta_a =0$. At this point, 
		\begin{align*}
			\beta_i&= \beta - \delta_{a-1} = \beta + \frac{d_a\delta_a}{d-k+1-d_a}\\
			&= \frac{l}{d-d_a-k+1} = \frac{l}{d'-k+1}, {1\le i\le a-1,}
		\end{align*}
where $d' = d-d_a$ is the new repair degree upon discarding the nodes in layer $a$. 
Note also that every helper node in layers $a-1$ and below contributes 	equally. In summary, the maximum savings are obtained with 
a uniform assignment, but with a smaller repair degree.
	\end{example}
	
In the remainder of this section we address the question of the optimal choice of the repair degree using the already established 
framework of GRCs with IP. First, note that for a fixed repair tree, finding the minimum bandwidth can be formulated as a linear 
programming (LP) problem. It was shown in \cite{Li2022} that such a problem for GRCs always has an optimal solution that supports uniform contribution from a possibly smaller set of helper nodes. Leveraging this result, we incorporate the IP repair technique into the optimization, which becomes feasible because we need to consider only the uniform contribution schemes. 

 To form the optimization problem, without loss of generality, we start with $D =[n-1]$ and assume that node $i \in D$ contributes $\beta_i\ge 0$ symbols for the repair of $f$, where $\beta_i=0$ accounts for the variation of the repair degree. 
 By Theorem \ref{theorem:gen_cutset}, we have that $l = \Delta_{n-k}(\cB)$, which imposes constraints on our choice of $\beta_i$'s. 
 Recalling Theorem~\ref{theorem:benefit_nonuni}, given a specific choice of the repair tree $T_{f,D}$, our goal is to minimize 
    \begin{equation}\label{eq:minimize_ip_bw}
		\Lambda_{\text{NU}}^{\text{IP}}(T_{f,D}) = \sum_{i\in J\setminus\{f\}}l +\sum_{j=1}^t\sum_{i \in \Gamma_j\setminus J_j}\rho(i,\cP(i))\beta_i
	\end{equation}
over the choices of $\cB =\{\beta_i\}_{i \in D}$ such that $\Delta_{n-k}(\cB) \ge l$ and $0 \le 
\beta_i \le l$ for all $i \in D$. Note that letting some $\beta_i$'s to be 0 does not change the 
set $J$, since due to the constraint $l\le\Delta_{n-k}(\cB)$, each node in $J$ can still perform 
IP\footnote{This observation shows that the savings from IP do not depend on the chosen value of the repair degree.}. Since the first term on the right in \eqref{eq:minimize_ip_bw} depends only on the set $J$, 
the minimization is restricted only to the second term. Setting the weights in the linear program to $b_i=\rho(i,\cP(i))$, we obtain the following linear program:
	\begin{equation}\label{eq:opt_prob}
		\begin{split}
			\min_{\{\beta_i\}} \displaystyle&\sum_{i \in D\setminus J}b_i\beta_i \\
			\text{subject to} \displaystyle\sum\limits_{i \in A} &\beta_i \geq l, \;\;\forall A \subseteq D, |A| =n-k\\
			&0 \le \beta_i \le l, \;\;i \in D.
		\end{split}
	\end{equation} 
 Below we assume that the costs $b_i$'s are in nonincreasing order relative to $i$, which is always possible by 
 relabeling the nodes in $D$. This LP problem has been studied in \cite{Li2022} where the authors claimed that the optimal solution takes the form given in the next theorem. The proof does not seem to appear in the published literature, so 	we have included it in the Appendix. 
	\begin{theorem}{\rm (\!\cite{Li2022}, Theorem~1])}\label{theorem:opt_sol}
		There exists an optimal solution of the above LP such that
		\begin{equation}\label{eq:opt_sol}
			\beta^*_i = \begin{cases}
				0 & 1\le i \le n-d-1\\
				\frac{l}{d-k+1} &n-d\le i \le n-1
			\end{cases}
		\end{equation}
		for some $d$ in the range $k \le d \le n-1$.
	\end{theorem}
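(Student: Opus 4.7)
The plan is to start with an arbitrary optimal solution $\{\beta_i^*\}$ of the LP \eqref{eq:opt_prob} and apply a sequence of feasibility- and objective-preserving transformations to reshape it into the claimed form, proceeding in three stages: sorting, tail normalization, and head normalization.

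First, by relabeling I may assume without loss of generality that $\beta_1^* \leq \beta_2^* \leq \cdots \leq \beta_{n-1}^*$. This is legitimate because the feasible region depends only on the multiset $\{\beta_i^*\}$ (the constraints $\sum_{i \in A} \beta_i^* \geq l$ for $|A|=n-k$ are symmetric under permutations), while by the rearrangement inequality the objective $\sum_i b_i \beta_i^*$ is minimized over permutations of a fixed multiset by matching the nonincreasing costs $b_i$ with nondecreasing values $\beta_i^*$. With this ordering, the family of sum constraints collapses to the single inequality $\sum_{i=1}^{n-k}\beta_i^* \geq l$, since the worst case occurs when summing the $n-k$ smallest values.

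Next I would set $\alpha := \beta_{n-k}^*$ and normalize the tail indices $i > n-k$: these variables do not enter the binding constraint, so any excess above $\alpha$ can be reduced to $\alpha$ without violating feasibility, which only decreases the objective because $b_i \geq 0$. For the head indices $i \leq n-k-1$, I would use a pairwise mass-transfer argument: whenever $0 < \beta_{i_1}^* \leq \beta_{i_2}^* < \alpha$ with $i_1 < i_2$, transferring $\epsilon$ from $\beta_{i_1}^*$ to $\beta_{i_2}^*$ preserves the sum (hence feasibility) while changing the objective by $\epsilon(b_{i_2}-b_{i_1}) \leq 0$. Iterating until saturation forces every head entry into $\{0,\alpha\}$, possibly up to a single residual fractional value $\gamma$. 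This residual can be removed by perturbing $\alpha$ along the one-parameter family satisfying $\gamma + e\alpha = l$, where $e$ is the number of $\alpha$-valued entries among the first $n-k$ positions: the objective is affine in the perturbation parameter, so moving in the improving direction drives $\gamma$ to either $0$ or $\alpha$. Finally, since decreasing $\alpha$ further would strictly improve the objective whenever feasibility permits, the tight relation $(d-k+1)\alpha = l$ must hold, yielding $\alpha = l/(d-k+1)$, where $d$ is the total number of positions taking value $\alpha$; feasibility together with $\beta_i^*\le l$ forces $d\ge k$, and obviously $d\le n-1$.

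The main obstacle I expect is the careful bookkeeping required in the mass-transfer step: a naive $\epsilon$-swap between two indices may disturb the sortedness fixed in the first stage, so transfers must be performed between consecutive plateaus (or groups of equal values perturbed jointly) and the order re-certified afterward, possibly followed by a repeat of the rearrangement step. A secondary delicate point is verifying that absorbing the fractional value $\gamma$ via the adjustment of $\alpha$ does not collide with the upper box constraint $\beta_i \leq l$, which reduces to the inequality $l/(d-k+1) \leq l$ and is automatic from $d \geq k$.
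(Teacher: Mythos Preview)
Your approach is correct and is a genuinely different route from the paper's. The paper does not carry out any local mass-transfer or perturbation argument; instead, given an arbitrary nonuniform optimum in $\cL_d$ with largest head value $\beta' = \beta_{n-k}$, it simply \emph{names} a candidate uniform scheme in $\cL_{d'}$ with $d' = l/\beta' + k - 1$ and checks in one inequality that its cost is no larger. This is slicker once you spot the right $d'$, and it ties the argument directly to the operational interpretation (lowering the repair degree), but it forces the paper to assume $\beta' \mid l$, handled by taking $l$ divisible by $\mathrm{lcm}(k+1,\dots,n-1)$. Your LP-vertex reshaping avoids that divisibility crutch entirely: the residual $\gamma$ is absorbed by the affine perturbation of $\alpha$, and the final value $\alpha = l/(d-k+1)$ falls out without any arithmetic condition on $l$.

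One small point you should make explicit: when you write $\gamma + e\alpha = l$, you are implicitly using that the head sum $\sum_{i \le n-k}\beta_i^\ast$ equals $l$ in the sorted optimum. This is true (decrease the smallest positive head entry otherwise, strictly lowering the objective since $b_i > 0$), and your final sentence about ``decreasing $\alpha$ further'' effectively patches it, but stating it up front removes the apparent jump. Your worry about sortedness under mass transfer is real but easily dispatched: rather than pairwise swaps, replace the whole head block in one shot by the extremal configuration $(0,\dots,0,\gamma,\alpha,\dots,\alpha)$ with the same sum, which is automatically sorted and, by the rearrangement-type reasoning you already invoke, has no larger cost.
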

This implies that among the uniform download schemes, there exists an assignment of $\beta_i$'s that minimizes the repair bandwidth. Let $\cT_f$ denote the set of all spanning trees of $G$ rooted at $f$. Any rooted spanning tree $T_{f,D}$ with $D \subset [n]\setminus\{f\}$ is a subtree of some element $T_f \in \cT_f$. Let $\sigma(v)$ be the number of descendants of $v\in D$ in the tree $T_{f,D}$. Then the minimum repair bandwidth is given as follows:
\begin{corollary}\label{cor-min} Using the stacking construction of codes and the repair (transmission) scheme found in Theorem~\ref{theorem:benefit_nonuni}, the minimum total communication complexity of repairing the failed node $f$ is
    $$ 
    \min_{T_f \in \cT_f}\min_{\stackrel{T_{f,D} \subseteq T_f}{D: k \le |D| \le n-1}}\Lambda^{\text{IP}}_{\text{U}}(T_{f,D})
    $$
where 
    $$
\Lambda^{\text{IP}}_{\text{U}}(T_{f,D}) = \sum_{h \in D}\min\{\sigma(h)+1, (d-k+1)\}\frac{l}{(d-k+1)}.
   $$
\end{corollary}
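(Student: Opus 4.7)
The plan is to derive the corollary by combining Theorem~\ref{theorem:opt_sol} with the IP accounting of Theorem~\ref{theorem:benefit_nonuni} via a two-stage optimization: first fix a rooted spanning tree $T_f \in \cT_f$ and a repair subtree $T_{f,D} \subseteq T_f$, and afterwards minimize over such choices.

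For a fixed tree with $|D|=n-1$, the LP~\eqref{eq:opt_prob} selects the download assignment $\{\beta_i\}$ that minimizes the IP bandwidth~\eqref{eq:minimize_ip_bw}. By Theorem~\ref{theorem:opt_sol}, an optimal LP solution sets a set of $n-1-d$ farthest helpers (those maximizing $b_i=\rho(i,\cP(i))$) to $\beta_i=0$ and assigns the uniform value $\beta=l/(d-k+1)$ to the remaining $d\in[k,n-1]$ helpers. Since helpers with $\beta_i=0$ transmit nothing and may be pruned from the tree, this reduces the optimization to choosing a helper set $D$ with $k\le|D|\le n-1$ and applying the uniform download $\beta=l/(d-k+1)$ on it, with $d=|D|$.

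Under this uniform assignment, I would compute the bandwidth edge by edge in $T_{f,D}$. Every helper $h \in D$ forwards toward its parent in the tree the data it has accumulated from itself and its $\sigma(h)$ descendants, which without compression amounts to $(\sigma(h)+1)\beta$ symbols. Because Theorem~\ref{thm:gen_IP_sat}(b) shows that codes of Construction~\ref{const:stacked} attain the IP lower bound $H(R_E^f)=l$ for any subset $E$ of at least $d-k+1$ helpers, the actual number of symbols sent on this edge equals
$$
\min\{(\sigma(h)+1)\beta,\,l\} \;=\; \min\{\sigma(h)+1,\,d-k+1\}\cdot \frac{l}{d-k+1}.
$$
Summing over $h\in D$ yields the stated form of $\Lambda^{\text{IP}}_{\text{U}}(T_{f,D})$.

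Taking the minimum first over $T_{f,D}\subseteq T_f$ with $k\le|D|\le n-1$ and then over $T_f\in\cT_f$ produces the double minimum in the statement. The main point worth verifying, and the closest thing to an obstacle, is that the LP-level reduction to a uniform assignment on a shrunken helper set interacts cleanly with the outer minimization over trees; this holds because pruning the zero-weight leaves from $T_{f,D}$ yields a legitimate repair subtree of the same spanning tree $T_f$, so minimizing the LP inside the tree and then minimizing over trees is equivalent to jointly minimizing over pairs (spanning tree, uniform helper set of size $d \in [k, n-1]$).
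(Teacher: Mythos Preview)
Your proposal is correct and follows the same line of reasoning the paper intends: the corollary is presented in the paper without an explicit proof, as an immediate consequence of Theorem~\ref{theorem:opt_sol} applied to the LP~\eqref{eq:opt_prob}, together with the IP capability of the stacking codes (Theorem~\ref{thm:gen_IP_sat}). Your edge-by-edge accounting---each helper $h$ forwards $\min\{(\sigma(h)+1)\beta,l\}$ symbols to its parent---is precisely the derivation the paper leaves implicit, and your final pruning observation (that zeroing out the helpers with largest $b_i$ yields a valid repair subtree of the same $T_f$) is the right way to close the loop between the LP reduction and the outer minimization.
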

This implies that for every failed node, there exists at least one \emph{optimal} repair tree and a corresponding \emph{optimal} set of helpers such that the uniform contribution from them, combined with IP, gives the minimum complexity of repair across all GRC-IP schemes. Furthermore, this optimal choice of helpers can be found in time polynomial in the number of vertices.

	\subsection{Optimizing the repair degree}
 Sometimes using IP repair may be too complicated for the storage system. In this case the nodes rely only on the AF strategy and do not perform intermediate processing. This assumption enables us to further
 simplify the minimization in Cor.~\ref{cor-min}, as shown below. In the following analysis, we will use the notation 
 $$\Gamma_j^G(f)=\{v\in V:\rho(v,f)=j\},\; N_i^G(f)=\cup_{j=1}^i \Gamma_j^G(f),
	$$ 
to denote spheres and balls around $f$ in the graph $G$, with the superscript denoting the difference with the notation used in Section \ref{sec:IP-GRC}.
 
\vspace*{.1in} \subsubsection{\sc Deterministic graphs}
With the above assumption, the repair bandwidth does not depend on the choice of the repair tree, and the weights in the linear
program are simply the distances to the failed node. We obtain the following LP problem:
	\begin{equation}\label{eq:opt_prob_AF}
		\begin{split}
			\min_{\beta_1,\dots,\beta_{n-1}} \displaystyle&\sum_{i\ne f}\rho(i,f)\beta_i \\
			\text{subject to} \displaystyle\sum\limits_{i \in A} &\beta_i \geq l, \;\;\forall A \subseteq\{1,\dots,n-1\}, |A| = n-k\\
			&0 \le \beta_i \le l, \;\;i \in \{1,\dots,n-1\}.
		\end{split}
	\end{equation} 
	By Theorem \ref{theorem:opt_sol}, for some repair degree $d^*_{\text{AF}}$ there exists an optimal solution with uniform node contributions. For a given $d \in \{k,k+1,\dots,n-1\}$, define 
	  $$
   \Lambda_{\text{U}}^{\text{AF}}(d) = \sum_{i=1}^tid_i\frac{l}{d-k+1},
   $$
	where $t$ is such that $|N_t^G(f)|\ge d > |N_{t-1}^G(f)|$, and $d_i = |\Gamma_i^G(f)|$ for $1 \le i \le t-1$, $d_t = d-|N_{t-1}^G(f)|$. Then, the optimal repair degree is given by
	$$d^*_{\text{AF}} = \argmin_{d \in \{k,k+1,\dots,n-1\}}\Lambda^{\text{AF}}_{\text{U}}(d).$$
	
As mentioned before, for complete graphs, the quantity $\Lambda_{\text{U}}^{\text{AF}}(d)$ is a decreasing function of the repair degree and so $d^*_{\text{AF}} = n-1$. We will show that this is also true in general for arbitrary graphs provided that the code rate is large enough. As the first step, we prove that 
involving more than $k$ nearest nodes in the repair process entails saving of the repair bandwidth.

\begin{proposition}\label{prop:U_AF_repair}
		For a repair graph, if $k\in \Gamma_a^G(f),$ then
		$$
  \Lambda^{\text{AF}}_{\text{U}}(k) \ge \Lambda^{\text{AF}}_{\text{U}}(|N_a^G(f)|).
        $$
	\end{proposition}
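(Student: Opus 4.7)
The plan is to treat $\Lambda^{\text{AF}}_{\text{U}}(d)$ as a function of the integer variable $d$ on the interval $[k,|N_a^G(f)|]$ and show that it is non-increasing there; the claim then follows by comparing the two endpoints. Writing $\gamma_i=|\Gamma_i^G(f)|$ and $N_i=|N_i^G(f)|$, the hypothesis ``$k\in\Gamma_a^G(f)$'' is read as $N_{a-1}<k\le N_a$. For every $d$ in the range $k\le d\le N_a$ the layer index in the definition of $\Lambda^{\text{AF}}_{\text{U}}(d)$ is $t=a$ and $d_a=d-N_{a-1}$, so with $A:=\sum_{i=1}^{a-1}i\gamma_i$ we have
\[
\Lambda^{\text{AF}}_{\text{U}}(d)=\frac{l\bigl(A+a(d-N_{a-1})\bigr)}{d-k+1}.
\]

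The key step is an elementary rewriting that isolates the dependence on $d$ in the denominator. Adding and subtracting $la(d-k+1)$ in the numerator gives
\[
\Lambda^{\text{AF}}_{\text{U}}(d)=la+\frac{l\bigl(A+a(k-N_{a-1}-1)\bigr)}{d-k+1}.
\]
Since $k$ is an integer strictly greater than $N_{a-1}$ we have $k-N_{a-1}-1\ge 0$, and $A\ge 0$, so the numerator of the second term is non-negative while the denominator strictly increases with $d$. Hence $\Lambda^{\text{AF}}_{\text{U}}(d)$ is non-increasing on $[k,N_a]$, and in particular $\Lambda^{\text{AF}}_{\text{U}}(k)\ge \Lambda^{\text{AF}}_{\text{U}}(N_a)$, which is the claim.

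The only point requiring care is the justification that the formula for $\Lambda^{\text{AF}}_{\text{U}}(d)$ with layer index $t=a$ is valid at both endpoints: at $d=k$ this needs $N_{a-1}<k\le N_a$, which is exactly the hypothesis, and at $d=N_a$ it needs $N_{a-1}<N_a$, which is clear. I do not expect any real obstacle beyond this bookkeeping; the inequality is essentially one line of algebra once the right formula is in place, and the resulting argument incidentally yields the stronger statement that $\Lambda^{\text{AF}}_{\text{U}}$ is monotone non-increasing throughout any one layer, with equality only in the degenerate case $A=0$ and $k=N_{a-1}+1$.
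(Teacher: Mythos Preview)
Your proof is correct and follows essentially the same approach as the paper: both establish that $\Lambda^{\text{AF}}_{\text{U}}(d)$ is non-increasing on $\{k,\dots,|N_a^G(f)|\}$ via the same key observation that $A + a(k - N_{a-1} - 1) \ge 0$. Your decomposition $\Lambda^{\text{AF}}_{\text{U}}(d) = la + l(A + a(k-N_{a-1}-1))/(d-k+1)$ is a slightly tidier packaging, but the underlying algebra coincides with the paper's direct computation of $\Lambda^{\text{AF}}_{\text{U}}(k)-\Lambda^{\text{AF}}_{\text{U}}(k+m)$.
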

	\begin{proof}
If $k=|N_a^G(f)|$, then the claim is trivially true, so assume that $|N_a^G(f)|\ge k+1$. 
Let $p=k-|N_{a-1}^G(f)|$. For $d \in \{k,k+1,\dots,|N_a^G(f)|\}$, we have
		$$
  \Lambda^{\text{AF}}_{\text{U}}(d) = \sum_{i=1}^aid_i\frac{l}{d-k+1}.
  $$ 
For any $m \in \{1,2,\dots,|N_a^G(f)|-k\}$, 
		\begin{align*}
			\Lambda^{\text{AF}}_{\text{U}}(k)-\Lambda^{\text{AF}}_{\text{U}}(k+m) = \sum_{i=1}^{a-1}id_il\Big(1-\frac{1}{1+m}\Big)+al\Big(p-\frac{p+m}{1+m}\Big)>0
		\end{align*}
		for $p,m \ge 1$.
	\end{proof}
This claim can be further specified if the underlying graph is regular. We will show that for codes of sufficiently
high rate, the repair bandwidth decreases with the increase of the repair degree. 
	\begin{theorem}\label{lemma:threshold_k}
		Consider repair on a $t$-regular graph. Let $m = \max_{h \in V\setminus\{f\}}\rho(f,h)$ be the maximum height of the repair tree. If 
     $$
     k > 1+\max_{1\le a\le m} \sum_{i=1}^{a-1}t(t-1)^{i-1}(1-i/a),
     $$
then $d^*_{\text{AF}} = n-1$. 
	\end{theorem}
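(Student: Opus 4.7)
The plan is to prove the theorem by showing that, under the stated hypothesis, the single-variable function $\Lambda^{\text{AF}}_{\text{U}}(d)$ defined on $d\in\{k,k+1,\dots,n-1\}$ is strictly decreasing in $d$. Once that monotonicity is in hand, it is immediate that $d^*_{\text{AF}}=\argmin_d\Lambda^{\text{AF}}_{\text{U}}(d)=n-1$. Note that by Theorem~\ref{theorem:opt_sol} the LP in \eqref{eq:opt_prob_AF} always admits an optimal uniform-download solution, and because the weights $b_i=\rho(i,f)$ are nondecreasing in distance, that optimum is attained by the $d$ vertices closest to $f$, filling the layers $\Gamma_1^G(f),\Gamma_2^G(f),\dots$ greedily. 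This justifies the explicit formula for $\Lambda^{\text{AF}}_{\text{U}}(d)$ used throughout the argument.

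The first concrete step is to compute the forward difference $\Lambda^{\text{AF}}_{\text{U}}(d+1)-\Lambda^{\text{AF}}_{\text{U}}(d)$. Let $a$ be the layer satisfying $|N_{a-1}^G(f)|<d+1\le|N_a^G(f)|$, set $S=\sum_{i=1}^{a-1}i\,|\Gamma_i^G(f)|$, and let $d_a'=d-|N_{a-1}^G(f)|$. Since adding the $(d+1)$st helper injects one vertex into layer $a$, a short calculation gives
\[
\Lambda^{\text{AF}}_{\text{U}}(d+1)-\Lambda^{\text{AF}}_{\text{U}}(d)=\frac{l\bigl(a(d-k+1)-S-a d_a'\bigr)}{(d-k+1)(d-k+2)}.
\]
Substituting $d=|N_{a-1}^G(f)|+d_a'$ reduces the numerator sign to the sign of $\sum_{i=1}^{a-1}(a-i)|\Gamma_i^G(f)|-a(k-1)$, so the function decreases at $d$ if and only if
\[
k-1>\sum_{i=1}^{a-1}|\Gamma_i^G(f)|\bigl(1-i/a\bigr).
\]
The regularity assumption supplies the standard tree-growth bound $|\Gamma_i^G(f)|\le t(t-1)^{i-1}$ (a layer-$i$ vertex has at most $t-1$ new neighbors in layer $i+1$, with $|\Gamma_1^G(f)|\le t$), which in turn yields
\[
\sum_{i=1}^{a-1}|\Gamma_i^G(f)|(1-i/a)\le\sum_{i=1}^{a-1}t(t-1)^{i-1}(1-i/a).
\]
The theorem's hypothesis states exactly that $k-1$ strictly exceeds the maximum of the right-hand side over $a\in\{1,\dots,m\}$, so the decrement inequality holds at every increment $d\mapsto d+1$ in $\{k,\dots,n-2\}$, establishing strict monotonicity and hence $d^*_{\text{AF}}=n-1$.

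There is no substantial mathematical obstacle beyond this; the main item requiring care is the boundary bookkeeping when $d=|N_{a-1}^G(f)|$, where $d_a'=0$ and the new helper opens a fresh layer, to verify that the displayed difference formula is uniformly valid across the two regimes (incrementing within a layer versus opening a new one). One can also observe in passing that for $a=1$ the sum on the right is empty, so the constraint reduces to $k\ge 2$ there, and the nontrivial content of the hypothesis comes from larger values of $a$, where the layers farther from $f$ inflate the cost of involving distant helpers and must be outweighed by a sufficiently large $k$.
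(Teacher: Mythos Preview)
Your proof is correct and follows the same strategy as the paper: establish that $\Lambda^{\text{AF}}_{\text{U}}(d)$ is strictly decreasing on $\{k,\dots,n-1\}$ under the hypothesis, which forces $d^*_{\text{AF}}=n-1$.

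Two minor differences are worth noting. First, the paper splits the monotonicity argument into two cases—crossing a layer boundary ($d=|N_{a-1}^G(f)|$ to $d+1$) and moving within a layer (where it treats $x\mapsto\frac{K(a)+ax}{C(a)-k+x}$ as a continuous function)—whereas you handle both regimes with a single forward-difference formula and then remark that the boundary case $d_a'=0$ is absorbed by the same expression. Your presentation is cleaner in this respect. Second, the paper's argument tacitly uses $|\Gamma_i^G(f)|=t(t-1)^{i-1}$ (i.e., the neighborhoods of $f$ are tree-like, as in the high-girth setting of Example~\ref{example:Cayley}), while you only invoke the inequality $|\Gamma_i^G(f)|\le t(t-1)^{i-1}$, valid for any $t$-regular graph; since the weights $1-i/a$ are nonnegative, the inequality is enough, so your version is slightly more general and in fact already covers the ``arbitrary graph'' extension the paper states separately after the theorem.
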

	\begin{proof}
Proposition~\ref{prop:U_AF_repair} implies that $\Lambda^{\text{AF}}_{\text{U}}(d)$ becomes smaller as $d$ is increased from 
$k$ in the corresponding layer. We first show that if $d= \sum_{i=1}^{a-1}t(t-1)^{i-1}$ for some $a$, then $\Lambda^{\text{AF}}_{\text{U}}(d) \ge \Lambda^{\text{AF}}_{\text{U}}(d+1)$, that is involving one extra node from the $(a+1)$-th layer does not increase the overall bandwidth. 
Let
   $$
K(a) = \sum_{i=1}^{a-1}it(t-1)^{i-1}, \quad C(a)= \sum_{i=1}^{a-1}t(t-1)^{i-1}+1.
  $$
Indeed, 
		\begin{align*}
			\Lambda^{\text{AF}}_{\text{U}}(d) - \Lambda^{\text{AF}}_{\text{U}}(d+1) &= \sum_{i=1}^{a-1}it(t-1)^{i-1}\frac{l}{d-k+1} - \Big(\sum_{i=1}^{a-1}it(t-1)^{i-1}+a\Big)\frac{l}{d-k+2} \\
			&=\frac{K(a)l}{C(a)-k} - \frac{(K(a)+a)l}{C(a)-k+1}\\
			&= \frac{K(a)l}{(C(a)-k)(C(a)-k+1)}-\frac{al}{C(a)-k+1}
		\end{align*}
		which is nonnegative whenever $(C(a)-k)a \le K(a)$ or $k \ge C(a)-\frac{K(a)}{a}$.
Now we show that, as we keep involving more nodes in this $a$-th layer, the overall bandwidth can only further decrease. For some $x \in \{1,\dots,t(t-1)^{a-1}\}$, let $d = \sum_{i=1}^{a-1}t(t-1)^{i-1}+x$. Then
		\begin{align*}
			\Lambda^{\text{AF}}_{\text{U}}(d)
			&= \Big(\sum_{i=1}^{a-1}it(t-1)^{i-1}+a x\Big)\frac{l}{d-k+1}\\
			&=\frac{l(K(a)+ax)}{C(a)-k+x}. 
		\end{align*}
The function $f(x) = \frac{K(a)+ax}{C(a)-k+x}$ is a continuous function of $x$ in the interval $[1,t(t-1)^{a-1}]$ and is decreasing on $x$ whenever $(C(a)-k)a < K(a)$ or equivalently $k > C(a)-\frac{K(a)}{a}$. Hence the 
values $\{\Lambda^{\text{AF}}_{\text{U}}(d)\}$ at the points $d \in \{C(a),C(a)+1,\dots,C(a+1)-1\}$ form a decreasing sequence. The result now follows since this analysis applies to all $a \in \{1,\dots,m\}$. 
	\end{proof}
\begin{example} For the sake of example, suppose that the network is given by the Petersen graph, a cubic graph
on $n=10$ vertices with 15 edges. Suppose that a code of length 10 is used to support node repair. The diameter of the graph
is 2, so the height of the complete repair tree is $m=2$. By Theorem \ref{lemma:threshold_k}, if the parameter $k$ of the code is $3$ or more, then $d^*_{\text{AF}} = 9$. 
\end{example}
\begin{example}
Let $i$ be an integer satisfying $i^2 \equiv -1 \pmod{29}$, e.g., $i=12$.  By the LPS construction \cite{Lubotzky1988}, the Cayley graph of the group $\text{PSL}(2,29)$ with generating set
	$$
 \begin{bmatrix}
1 & 2\\
-2 & 1
\end{bmatrix},\begin{bmatrix}
1 & -2\\
2 & 1
\end{bmatrix},
\begin{bmatrix}
		1+2i & 0\\
		0 & 1-2i
	\end{bmatrix},\begin{bmatrix}
	1-2i & 0\\
	0 & 1+2i
\end{bmatrix},
\begin{bmatrix}
1 & 2i\\
-2i & 1
\end{bmatrix},\begin{bmatrix}
1 & -2i\\
2i & 1
\end{bmatrix}$$ 
is a 6-regular graph with $n=12180$ vertices and girth 10. Considering repair on this graph, we see that by Theorem \ref{lemma:threshold_k}, if $k \ge 5000$ then $d^*_{\text{AF}} = n-1$.
\end{example}

Theorem \ref{lemma:threshold_k} can be generalized for arbitrary graphs as follows. 	

	\begin{theorem}
For any node $f$ in the graph $G=(V,E)$, let $m_f = \max_{h \in V\setminus\{f\}}\rho(f,h)$. 
For $a \in\{1,\dots,m_f\}$, let $K(a) = \sum_{i=1}^{a-1}i|\Gamma_i^G(f)|,C(a)= |N_{a-1}^G(f)|+1$. If $k > 
\max_{1\le a\le m_f}(C(a)-\frac{K(a)}{a})$, then for the repair of node $f$, $d^*_{\text{AF}} = n-
1$.
	\end{theorem}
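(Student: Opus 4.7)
The plan is to mirror the proof of Theorem \ref{lemma:threshold_k} with only cosmetic changes, replacing the regular-graph sphere sizes $t(t-1)^{i-1}$ by the general quantities $|\Gamma_i^G(f)|$. None of the steps in Theorem \ref{lemma:threshold_k} actually used regularity of $G$ beyond having a closed form for these sizes, so the arithmetic transfers more or less verbatim once the bookkeeping is adjusted.

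First I would appeal to Proposition \ref{prop:U_AF_repair} together with Theorem \ref{theorem:opt_sol} to restrict attention to uniform-contribution schemes indexed by $d \in \{k, k+1, \ldots, n-1\}$, with the helper set consisting of the $d$ nodes nearest to $f$ (breaking ties arbitrarily). The goal becomes to show $\Lambda^{\text{AF}}_{\text{U}}(d)$ is nonincreasing on this whole range. To do this in one stroke, I would parameterize a generic $d$ as $d = |N_{a-1}^G(f)| + x$, where $a \in \{1,\ldots,m_f\}$ is the index of the currently-being-filled layer and $x \in \{0,1,\ldots,|\Gamma_a^G(f)|\}$ counts how many helpers have already been drawn from that layer. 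A direct count then gives
\[
\Lambda^{\text{AF}}_{\text{U}}(d) \;=\; \frac{l\bigl(K(a)+a x\bigr)}{C(a)-k+x},
\]
exactly the expression that appears in the proof of Theorem \ref{lemma:threshold_k}, now with $K(a)$ and $C(a)$ read off from the general graph.

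Treating the right-hand side as a function of continuous $x \ge 0$, a routine derivative computation shows that its sign is governed by $a(C(a)-k) - K(a)$, so the expression is strictly decreasing in $x$ whenever $k > C(a) - K(a)/a$. The key bookkeeping check is that no separate layer-boundary step is needed: since $K(a+1) = K(a) + a|\Gamma_a^G(f)|$ and $C(a+1) = C(a) + |\Gamma_a^G(f)|$, the value at $x = |\Gamma_a^G(f)|$ in layer $a$ equals the value at $x=0$ in layer $a+1$, so the within-layer monotonicity statements chain together into a single monotonicity statement across all layers.

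Under the hypothesis $k > \max_{1 \le a \le m_f}\bigl(C(a) - K(a)/a\bigr)$, the strict-decrease condition holds in every layer simultaneously, hence $\Lambda^{\text{AF}}_{\text{U}}(\cdot)$ is strictly decreasing on all of $\{k,\ldots,n-1\}$ and attains its minimum at $d = n-1$, giving $d^*_{\text{AF}} = n-1$. There is no substantive obstacle here; the only mild care point is to package within-layer behavior and the layer-to-layer transitions into the single closed-form formula above so as not to duplicate work, and to verify that Proposition \ref{prop:U_AF_repair} really handles the starting segment $d\in\{k,\ldots,|N_a^G(f)|\}$ where $k$ may land in the middle of layer $a$.
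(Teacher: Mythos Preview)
Your proposal is correct and follows exactly the route the paper intends: the paper omits the proof, stating only that it is ``very similar to the proof of Theorem~\ref{lemma:threshold_k},'' and your argument is precisely that adaptation with $|\Gamma_i^G(f)|$ in place of $t(t-1)^{i-1}$. Your observation that the identities $K(a+1)=K(a)+a|\Gamma_a^G(f)|$ and $C(a+1)=C(a)+|\Gamma_a^G(f)|$ make the layer-boundary check automatic is a mild streamlining over the paper's two-step treatment (boundary increment then within-layer monotonicity) in the regular case, but the substance is identical.
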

The proof is very similar to the proof of Theorem \ref{lemma:threshold_k}, and is therefore omitted.
 
\vspace*{.1in}	\subsubsection{\sc Random graphs}
	
	In this section, we consider the case when the underlying graph $G(V,E)$ is sampled uniformly from the 
 Erd{\"o}s-R{\'e}nyi ensemble of graphs $\cG_{n,p}$ with $p \in (0,1)$. Denote a random element from 
 the ensemble by $\bG_{n,p}$ and the spheres and balls around a node $f$ accordingly by $\Gamma_i^{\bG}(f)$ and $N_i^{\bG}(f)$. For a fixed value of $d$, the repair problem was considered and analyzed in 
 \cite{Patra2022}, where we established the parameter regimes for which IP repair is beneficial to the 
 AF repair. Following that work, here we assume that $p \gg \frac{\log n}{n}$, which ensures that $\bG_{n,p}$ is connected 
 w.h.p., and that $k = \Theta(n)$ for the code rate to be asymptotically positive. We say that {\em $t$-
 layer repair} of the failed node $f$ {\em is possible} if 
	$$
	\PP(|N_t^{\bG}(f)|\ge d)\to 1 \text{ as $n\to\infty$.}
	$$
	and call the minimum $t$ for which this holds the \emph{threshold depth} for repair.
	We have
	$$\Lambda_{\text{U}}^{\text{AF}}(d) = \frac{l(td- |N_{t-1}^{\bG}(f)|) }{d-k+1}.$$ 
	We will use the following two results regarding the random Erd{\"o}s-R{\'e}nyi graphs (below $\PP=\PP_{\cG_{n,p}}$).
	\begin{lemma}[\!\!\cite{Bollobas81}, p.~50; \cite{FK2016}, Sec.7.1]\label{lemma:diam_random}
		
		(i) If $p^2n-2\log n \rightarrow \infty,$ and $	n^2(1-p) \rightarrow \infty,$
		then $$\PP(\diam(\bG_{n,p})=2)\to 1.$$
		
		(ii) Suppose that the functions $t = t(n) \ge 3$ and $0<p=p(n) <1$ satisfy 
		\begin{gather*}
			(\log n)/t - 3\log \log n \rightarrow \infty, \quad p^t n^{t-1} - 2\log n \rightarrow \infty,\\
			p^{t-1}n^{t-2} - 2\log n \rightarrow -\infty,
		\end{gather*}
		then  $\PP(\diam(\bG_{n,p})=t)\to 1.$
	\end{lemma}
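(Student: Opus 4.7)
The plan is to prove both parts by first-moment (union bound) estimates on the number of vertex pairs at various distances, supplemented by a Janson-type inequality for part (ii). Both parts follow essentially the same template, but (ii) requires more care with the dependence between length-$t$ paths.

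For part (i), I would write $\{\diam(\bG_{n,p})=2\}=\{\diam\le 2\}\cap\{\diam\ne 1\}$ and control each sub-event. For a fixed pair $u\ne v$, the event $d(u,v)>2$ is equivalent to $uv\notin E$ together with $u,v$ sharing no common neighbor, and has probability $(1-p)(1-p^2)^{n-2}$. Using $(1-p^2)^{n-2}\le e^{-p^2(n-2)}$ and a union bound over the $\binom{n}{2}$ pairs,
\[
\PP(\diam(\bG_{n,p})>2)\le \tfrac12 \exp\bigl(2\log n - p^2(n-2)\bigr),
\]
which tends to $0$ under $p^2n-2\log n\to\infty$. On the other side, $\PP(\diam(\bG_{n,p})=1)=p^{\binom{n}{2}}\le\exp(-(1-p)\binom{n}{2})$ vanishes under $n^2(1-p)\to\infty$, giving $\PP(\diam(\bG_{n,p})=2)\to 1$.

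For part (ii), I would prove $\PP(\diam\le t)\to 1$ and $\PP(\diam\ge t)\to 1$ separately. For the upper bound, fix $u,v$; the expected number of length-$t$ paths between them in $\bG_{n,p}$ is of order $n^{t-1}p^t$, and Janson's inequality (applied to the events that individual paths are present, to absorb the positive correlations coming from shared edges) yields $\PP(d(u,v)>t)\le \exp(-(1-o(1))n^{t-1}p^t)$. A union bound over pairs together with $p^tn^{t-1}-2\log n\to\infty$ forces $\PP(\diam>t)\to 0$. For the matching lower bound, I would repeat the path-counting at length $t-1$: the expected number of pairs with $d(u,v)\le t-1$ is dominated by $n^{2}\cdot n^{t-2}p^{t-1}$ (up to lower-order corrections), and the hypothesis $p^{t-1}n^{t-2}-2\log n\to-\infty$ ensures that the expected number of \emph{unreached} pairs diverges. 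A standard second-moment computation on the number of pairs at distance exactly $t$ then upgrades this to w.h.p. existence of at least one such pair, giving $\PP(\diam\ge t)\to 1$. The auxiliary condition $(\log n)/t-3\log\log n\to\infty$ is used to absorb the error coming from path overlaps in the Janson estimate.

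The main obstacle is part (ii): naive independence bounds on the event that no length-$t$ path between $u,v$ exists would lose factors of $(t-1)!$ or worse and miss the sharp threshold in the hypothesis. The remedy is Janson's correlation inequality (or, equivalently, a BFS sphere-growth analysis that conditions on each successive layer $\Gamma_i^{\bG}(u)$ and tracks its conditional expansion), together with a careful second-moment computation to pass from in-expectation estimates to the desired high-probability statements. Both techniques are classical and carried out in detail in \cite{Bollobas81,FK2016}.
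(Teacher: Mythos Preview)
The paper does not prove this lemma; it is quoted from \cite{Bollobas81} and \cite{FK2016} and used as a black box in the proof of the subsequent theorem on random graphs. Your outline is essentially the classical argument found in those references, so there is nothing to compare against in the paper itself.

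One genuine gap in your lower-bound sketch for (ii): the step ``the expected number of pairs with $d(u,v)\le t-1$ is dominated by $n^{2}\cdot n^{t-2}p^{t-1}$, and the hypothesis \ldots\ ensures that the expected number of unreached pairs diverges'' does not work as written. The hypothesis $p^{t-1}n^{t-2}-2\log n\to-\infty$ only guarantees $n^{t-2}p^{t-1}\lesssim 2\log n$, so your first-moment bound on close pairs can be of order $n^{2}\log n\gg\binom{n}{2}$, and you cannot conclude anything about far pairs by subtraction. The correct route is the one you gesture at in the next sentence: for a \emph{fixed} pair $u,v$, Janson (or the sphere-growth conditioning you mention) gives $\PP\bigl(d(u,v)>t-1\bigr)\ge\exp\bigl(-(1+o(1))n^{t-2}p^{t-1}\bigr)$, so the expected number of far pairs is at least $\binom{n}{2}\exp\bigl(-(1+o(1))n^{t-2}p^{t-1}\bigr)$, which diverges precisely under the stated hypothesis; the second-moment computation then upgrades this to w.h.p.\ existence. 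Your conclusion is right, but the intermediate logic needs to be rerouted through a per-pair lower bound rather than a global upper bound on close pairs.
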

	\begin{lemma}[\!\!\cite{Chung01}, Lemma 3]\label{lemma:Ni_random}
		Suppose that $p \ge \frac{\log n}{n}$. For any $\epsilon >0$ and all $i=1,\dots,\lfloor\log n\rfloor$
		\begin{gather}\nonumber
			\PP(|\Gamma_i^{\bG}(x)| \le (1+\epsilon)(np)^i)\ge 1-{1}/{\log^2 n} \\
			\PP(|N_i^{\bG}(x)| \le (1+2\epsilon)(np)^i)\ge 1-{1}/{\log^2 n}\nonumber . 
		\end{gather}	
	\end{lemma}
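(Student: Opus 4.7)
The plan is to establish both estimates by induction on $i$, exposing $\bG_{n,p}$ layer-by-layer and applying a multiplicative Chernoff bound to each freshly revealed layer. Start with $\Gamma_0^{\bG}(x) = \{x\}$; given the revealed history $\Gamma_0^{\bG}(x), \dots, \Gamma_{i-1}^{\bG}(x)$ and the ball $N_{i-1}^{\bG}(x)$, the only edges that can place a vertex $v \in V \setminus N_{i-1}^{\bG}(x)$ into layer $i$ are those from $v$ to $\Gamma_{i-1}^{\bG}(x)$, and these have not yet been examined. By independence of the edges in $\bG_{n,p}$, conditional on this history each such $v$ lies in $\Gamma_i^{\bG}(x)$ independently with probability $q_{i-1} = 1-(1-p)^{|\Gamma_{i-1}^{\bG}(x)|}$. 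Hence $|\Gamma_i^{\bG}(x)|$ is conditionally $\mathrm{Bin}(n - |N_{i-1}^{\bG}(x)|,\, q_{i-1})$ with mean at most $np\,|\Gamma_{i-1}^{\bG}(x)|$ via the elementary bound $1-(1-p)^d \le pd$.

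On the good event $E_{i-1} = \{|\Gamma_{i-1}^{\bG}(x)| \le (1+\eta_{i-1})(np)^{i-1}\}$ the conditional mean is at most $\mu^{\ast} := (1+\eta_{i-1})(np)^i$, so stochastic domination by a $\mathrm{Bin}(n, \mu^{\ast}/n)$ together with the multiplicative Chernoff bound yields
$$
\PP\!\left(|\Gamma_i^{\bG}(x)| > (1+\eta_i)(np)^i \,\middle|\, E_{i-1}\right) \le \exp\!\left(-\frac{(\eta_i - \eta_{i-1})^2}{3(1+\eta_{i-1})}(np)^i\right).
$$
Choosing the per-step slacks so that $(\eta_i - \eta_{i-1})^2 (np)^i$ is of order $\log\log n$ --- concretely, $\eta_i - \eta_{i-1} = \Theta(\sqrt{\log\log n/(np)^i})$ --- makes each conditional failure probability at most $1/\log^3 n$; a union bound over the $\lfloor \log n \rfloor$ layers keeps the total failure probability below $1/\log^2 n$. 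The cumulative slack telescopes to $\eta_{\lfloor \log n \rfloor} = O(\sqrt{\log\log n/np}) = o(1)$ because $p \ge \log n/n$, so for $n$ large it drops below any prescribed $\epsilon$. The ball bound follows from $|N_i^{\bG}(x)| = \sum_{j=0}^i |\Gamma_j^{\bG}(x)| \le (1+\eta_i)\sum_j (np)^j \le (1+\eta_i)(1+o(1))(np)^i$, since $np \to \infty$ makes the geometric sum essentially equal to its last term; the excess is absorbed by passing from $(1+\epsilon)$ to $(1+2\epsilon)$.

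The main obstacle is calibrating the sequence $\{\eta_i\}$ to simultaneously keep the cumulative distortion below $\epsilon$ and leave a workable Chernoff gap at every layer. The prescription above exploits the geometric growth of the Chernoff exponent $(np)^i$ to compensate exactly for the shrinking per-step gap $\propto (np)^{-i/2}$ required for summability of $\sum_i (\eta_i - \eta_{i-1})$; without this precise balance one either blows up the slack past $\epsilon$ or loses concentration at the early layers where $(np)^i$ is smallest. A secondary technicality is that the binomial has $n - |N_{i-1}^{\bG}(x)|$ trials rather than $n$, but the inductive ball bound gives $|N_{i-1}^{\bG}(x)| = O((np)^{i-1}) = o(n)$ throughout $i \le \lfloor\log n\rfloor$ (the regime in which the claim is non-vacuous), so this reduction is harmless.
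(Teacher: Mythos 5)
The paper does not prove this lemma at all: it is imported verbatim as Lemma~3 of the cited reference \cite{Chung01}, so there is no internal proof to compare against. Your argument is the standard one for such neighborhood-growth estimates (and is essentially the argument used in the cited literature): reveal $\bG_{n,p}$ layer by layer, note that conditionally on the history $|\Gamma_i^{\bG}(x)|$ is binomial with success probability $1-(1-p)^{|\Gamma_{i-1}^{\bG}(x)|}\le p\,|\Gamma_{i-1}^{\bG}(x)|$, apply a multiplicative Chernoff bound on the good event for the previous layer, and chain over the at most $\lfloor\log n\rfloor$ layers with per-layer slacks $\eta_i-\eta_{i-1}=\Theta\bigl(\sqrt{\log\log n/(np)^i}\bigr)$, whose sum is $O\bigl(\sqrt{\log\log n/\log n}\bigr)=o(1)$ because $np\ge\log n$; the ball estimate then follows from the geometric sum since $np\to\infty$. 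I find this correct for $n$ large (which is the intended reading of the statement), and you have already flagged the two points that need care: the regime $(1+\epsilon)(np)^i\ge n$, where the claim is vacuous and the domination by $\mathrm{Bin}(n,\mu^\ast/n)$ would otherwise be problematic, and the fact that the number of trials is $n-|N_{i-1}^{\bG}(x)|\le n$. Two cosmetic remarks: the union bound should be organized as a chaining bound, $\PP(\cup_i F_i)\le\sum_i\PP(F_i\cap E_{i-1})\le\sum_i\PP(F_i\mid E_{i-1})$, with the conditional Chernoff estimate applied pointwise on the history (the event $E_{i-1}$ is history-measurable, so this is routine); and the paper defines $N_i^{\bG}(x)=\cup_{j=1}^i\Gamma_j^{\bG}(x)$ without the center, so your sum starting at $j=0$ is only more generous and the conclusion is unaffected.
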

We now state the result for the optimal repair degree.	
	\begin{theorem}
	If $(np)^{t-1} = o(n), \frac{(np)^t}{n} - 2\log n \rightarrow \infty$ and $k= \Theta(n)$, then 
	$$\PP(d^*_{\text{AF}}=n-1) \rightarrow 1.$$
	\end{theorem}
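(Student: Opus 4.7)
The plan is to combine the general-graph analogue of Theorem~\ref{lemma:threshold_k} (the unlabeled deterministic criterion stated just before Lemma~\ref{lemma:diam_random}) with the high-probability structural estimates on $\bG_{n,p}$ furnished by Lemmas~\ref{lemma:diam_random} and~\ref{lemma:Ni_random}. Concretely, on an event whose probability tends to $1$, I will establish that the eccentricity $m_f$ of the failed node satisfies $m_f\le t$ and that every ball $N_i^{\bG}(f)$ with $i\le t-1$ has size $O((np)^i)$. The hypothesis $(np)^{t-1}=o(n)$ then forces $\max_a(C(a)-K(a)/a)=o(n)$, which is dominated by $k=\Theta(n)$, and the deterministic criterion supplies $d^*_{\text{AF}}=n-1$.

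To verify $\diam(\bG_{n,p})=t$ with high probability, I would check the three hypotheses of Lemma~\ref{lemma:diam_random}(ii). The condition $p^t n^{t-1}-2\log n\to\infty$ is exactly $(np)^t/n-2\log n\to\infty$. The condition $p^{t-1}n^{t-2}-2\log n\to-\infty$ can be rewritten as $(np)^{t-1}/n-2\log n\to-\infty$, which follows because $(np)^{t-1}=o(n)$ while $2\log n\to\infty$. The growth condition $(\log n)/t-3\log\log n\to\infty$ follows from $(np)^{t-1}=o(n)$ combined with $np\gg\log n$, which together force $t=O(\log n/\log\log n)$. For the ball sizes, I would apply Lemma~\ref{lemma:Ni_random} with a fixed $\epsilon>0$ at each of the $t-1=O(\log n/\log\log n)$ relevant levels and take a union bound, concluding that $|N_i^{\bG}(f)|\le(1+2\epsilon)(np)^i$ for all $i\le t-1$ simultaneously with probability $1-O(1/\log n)$.

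On the intersection of these two high-probability events, $m_f\le \diam(\bG_{n,p})=t$, so for every $a\in\{1,\dots,m_f\}$,
$$
C(a)-\frac{K(a)}{a} \;\le\; C(a) \;=\; |N_{a-1}^{\bG}(f)|+1 \;\le\; (1+2\epsilon)(np)^{t-1}+1 \;=\; o(n),
$$
where the first inequality uses $K(a)\ge 0$ and the final step uses the hypothesis $(np)^{t-1}=o(n)$. Since $k=\Theta(n)$, for all sufficiently large $n$ we have $k>\max_{1\le a\le m_f}(C(a)-K(a)/a)$, so the deterministic criterion yields $d^*_{\text{AF}}=n-1$ on the good event, proving the claim. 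The main obstacle is purely technical: ensuring that the crude upper bound $C(a)-K(a)/a\le C(a)$ is tight enough (it is, thanks to the sublinear scaling of $(np)^{t-1}$ against the linear $k$) and that the probability estimates of Lemma~\ref{lemma:Ni_random} can be union-bounded over all relevant depths without eroding the high-probability guarantee.
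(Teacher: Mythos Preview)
Your argument is correct and relies on the same probabilistic inputs as the paper (Lemmas~\ref{lemma:diam_random} and~\ref{lemma:Ni_random}), but the deterministic conclusion is reached by a different route. The paper defines the events $E=\{|N_{t-1}^{\bG}(f)|<k\}$ and $F=\{|N_t^{\bG}(f)|=n-1\}$, shows $\PP(E\cap F)\to 1$, and observes that on $E\cap F$ the $k$-th nearest helper and every subsequent helper all lie in layer $t$; Proposition~\ref{prop:U_AF_repair} alone then gives that $\Lambda_{\text U}^{\text{AF}}(d)$ is nonincreasing on $\{k,\dots,n-1\}$. You instead invoke the general-graph criterion stated just before Lemma~\ref{lemma:diam_random} and use the crude bound $C(a)-K(a)/a\le C(a)=|N_{a-1}^{\bG}(f)|+1=o(n)<k$. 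The paper's path is shorter because on the good event the problem collapses to a single layer, so the full layer-by-layer criterion is unnecessary; your path has the virtue of reusing the deterministic machinery verbatim and of making the verification of the hypotheses of Lemma~\ref{lemma:diam_random}(ii) explicit (the paper simply asserts $\PP(\diam(\bG_{n,p})=t)\to 1$ without checking the growth condition on $t$). One caveat on that verification: from $np\gg\log n$ and $(np)^{t-1}=o(n)$ you get $t-1\le \log n/\log(np)$ and hence $(\log n)/t\gtrsim \log(np)$, but concluding $(\log n)/t-3\log\log n\to\infty$ actually needs $\log(np)\gg\log\log n$, slightly stronger than $np\gg\log n$; this is a minor technical point that the paper also does not address.
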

	\begin{proof}
	 Define the two events $E = \{|N_{t-1}^{\bG}(f)| < k\}$ and $F=\{|N_t^{\bG}(f)| = n-1\}$.  Since, $(np)^{t-1} = o(n)$ abd $k=\Theta(n)$, by Lemma~\ref{lemma:Ni_random}, 
	 $$
	 \PP(E) 
	 \ge \PP(|N_{t-1}^{\bG}(f)| \le (1+2\epsilon)(np)^{t-1}) \rightarrow 1
	 $$ 
	 and by Lemma~\ref{lemma:diam_random},
	 $$
	 \PP(F) = \PP(|N_t^{\bG}(f)| = n-1) = \PP(\diam(\bG_{n,p})=t)\rightarrow 1
	 $$ 
	 	and so $\PP(E \cap F) \ge \PP(E)+\PP(F)-1 \rightarrow 1$.  For any element under the event $E\cap F$, by definition the $k$-th nearest node from the root $f$ is at distance $t$ from it, and the
   remaining $n-k-1$ nodes are also at distance $t$ from $f$. Hence, by using Proposition~\ref{prop:U_AF_repair}, $\Lambda_{\text{U}}^{\text{AF}}(d)$ is a nonincreasing function of $d$ for $d \in \{k,k+1,\dots,n-1\}$. 
	\end{proof}
As a final remark, note that for codes of sufficiently high rate, the repair degree $d^\ast_{\text{AF}}=n-1.$
It is not difficult to show that this conclusion is also true for $d^\ast_{\text{IP}}$ as the
set $J$ does not change with the change of the repair degree. This is formally justified by an 
appropriate modification of the proof of Theorem~\ref{lemma:threshold_k}. 

	\section{Intermediate Processing for Evaluation Codes}\label{sec:IPC}
For the optimal choice, all participating helpers contribute equally and the IP technique is simply that of a regular MSR codes. Furthermore, Theorem~\ref{thm:gen_IP_sat} shows that the IP technique in the nonuniform download setting for the codes in Construction~\ref{const:stacked} relies on the IP technique of its component codes, which are regular MSR codes. Therefore, in this section we consider IP repair for $F$-linear MSR codes, showing that repair
complexity with IP repair can be lower that relying on the AF strategy. We focus essentially on one family of
MSR codes, the product-matrix codes of \cite{Rashmi11} and its generalization in \cite{Duursma2020}. 
Other examples implementing IP repair are given in \cite{Patra2022NodeRO}.
	
	%
	%
	
	\vspace*{-.1in}	\subsection{Product-matrix (PM) codes} \label{sec:pm}
	As our first goal, we rewrite the IP repair of PM codes originally introduced in Part I, Sec.II.A to fit the evaluation code paradigm. We begin with a brief introduction to the original description of the PM framework.
	PM codes, constructed in \cite{Rashmi11}, form a family of MSR codes with parameters $[n,k,d=2(k-1),l=k-1,\beta=1,M=k(k-1)]$. 
	The data file $\cF$ consists of $M$ uniformly chosen symbols from a finite field $F$. These symbols are organized to form two symmetric matrices $S_1,S_2$ of order $k-1$, each consisting of $\binom{k}{2}$ independent symbols and hence accounting for a total of
	$M$ symbols. The encoding matrix $\Psi$ is taken to be an $n \times d$ matrix such that $\Psi = \begin{bmatrix}
		\Phi |\Lambda\Phi
	\end{bmatrix}$ where $\Phi$ is a $n \times (k-1)$ Vandermonde matrix with rows of the form 
	$\phi_i = (1,x_i,x_i^2,\dots,x_i^{l-1}), i=1,\dots,n$ and $\Lambda = \text{Diag}(x_1^l,x_2^l,\dots,x_n^l)$ is a diagonal matrix 
	where $x_1,\dots,x_n$ are distinct nonzero elements of $F$.
	The encoded message is defined as $C = \Psi (S_1|S_2)^\intercal$ and the $l$ symbols of row $i$ of $C$ are stored in node $i$. Thus the
	$i$th node stores the $l$-vector $\phi_iS_1+\lambda_i\phi_iS_2$.
	
	The node repair process goes as follows: assuming that node $f \in [n]$ has failed, and the helper nodes are $D \subseteq [n]\setminus\{f\}, |D|=d$, helper node $i \in D$ sends the symbol of $F$ found as $(\phi_iS_1+\lambda_i\phi_iS_2)\phi_f^\intercal$. Since the submatrix $\Psi_D$ formed of the rows of $\Psi$ indexed by $D$ is invertible, node $f$ can calculate $S_1\phi_f^\intercal$ and $S_2\phi_f^\intercal$ from which it can compute its contents as $\phi_fS_1+\lambda_f\phi_fS_2$.
	
	To phrase this differently, let $s_1(y,z)$ and $s_2(y,z)$ be two symmetric polynomials over ${F}$ of degree at most $k-2$ in each of the
	two variables (this means, for instance, that $s_1(y,z)=s_1(z,y)$). Because of the symmetry, the total number of independent coefficients is $M$, so $s_1,s_2$ can be used to represent $\cF.$
	Letting $x_1,\dots, x_n$ be distinct points of $F$, we let node $i$ store the $l$ coefficients of the polynomial 
	$g^{(i)}(z)=s_1(x_i,z)+x_i^{k-1}s_2(x_i,z)$ for all $i\in[n].$
	
	Using this description of the codes, the IP repair process of Part I can be phrased as follows.
	Let $f \in [n]$ be the failed node, let $D$ be the set of $d$ helpers, and let $A$ be a set of helper nodes of size at least $d-k+1 = k-1$. For $h \in D$ define the polynomial
	\begin{equation}\label{eq:Lagrange}
		l^{(h)}(z) = \sum_{j=0}^{d-1} l^h_j {z}^j:=\prod_{\stackrel{i \in D}{i \ne h}}\frac{z-a_i}{a_h-a_i}
	\end{equation}
	of degree at most $d-1$. Then the set $A$ transmits the $l$-dimensional vector
	\begin{equation}\label{eq:eqtn4}
		{\xi}(f,A):= \sum_{h \in A}g^{(h)}(a_f)
		\left[\begin{array}{l}
			l^h_0+a_f^{k-1}l^h_{k-1}\\
			l^h_1+a_f^{k-1}l^h_{k}\\
			\hspace*{.3in}\vdots\\
			l^h_{k-2}+a_f^{k-1}l^h_{2k-3}	
		\end{array}\right].
	\end{equation}
	We show that (i), the failed node can recover its value based on the vector $\xi(f,{D})$, and (ii), the intermediate nodes
	can save on the repair bandwidth by processing the received information. To show (i) we prove
	\begin{lemma}\label{lemma_pm}
		The content of the failed node $f$ coincides with the vector $\xi(f,D)$, i.e.,
		$$
		g^{(f)}(z)=\sum_{i=0}^{l-1}(\xi(f,D))_i\,{z}^i.
		$$
	\end{lemma}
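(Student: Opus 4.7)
The plan is to reduce the claim to a single application of Lagrange interpolation. I first identify the points $a_i$ used in the definition of $l^{(h)}$ with the evaluation points $x_i$ of the PM construction, and translate the repair data into the symmetric-polynomial language of the code. To that end, define the two univariate polynomials
\begin{equation*}
p_1(y) := s_1(x_f,y), \qquad p_2(y) := s_2(x_f,y),
\end{equation*}
each of degree at most $k-2 = l-1$. Using the symmetry $s_r(u,v)=s_r(v,u)$ for $r=1,2$, the evaluation of the $h$-th node polynomial at $x_f$ becomes
\begin{equation*}
g^{(h)}(x_f) = s_1(x_h,x_f) + x_h^{k-1}s_2(x_h,x_f) = p_1(x_h) + x_h^{k-1}p_2(x_h).
\end{equation*}

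Next I introduce the auxiliary polynomial
\begin{equation*}
q(y) := p_1(y) + y^{k-1}p_2(y),
\end{equation*}
whose degree is at most $(k-2)+(k-1) = 2k-3 = d-1$. Since $q$ has degree less than $d$ and the $d$ nodes in $D$ give distinct evaluation points, Lagrange interpolation yields
\begin{equation*}
q(y) = \sum_{h\in D} q(x_h)\,l^{(h)}(y) = \sum_{h\in D} g^{(h)}(x_f)\,l^{(h)}(y).
\end{equation*}

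I then read off coefficients on both sides. Because $p_1$ has degree at most $k-2$ and $y^{k-1}p_2(y)$ contributes only to monomials $y^{k-1},\dots,y^{2k-3}$, comparing coefficients term-by-term gives, for $0\le j\le k-2$,
\begin{equation*}
[y^j]\,p_1(y) = \sum_{h\in D} g^{(h)}(x_f)\,l^h_j, \qquad [y^j]\,p_2(y) = \sum_{h\in D} g^{(h)}(x_f)\,l^h_{j+k-1}.
\end{equation*}
Finally, using symmetry once more, the content of the failed node is $g^{(f)}(z)=s_1(x_f,z)+x_f^{k-1}s_2(x_f,z)=p_1(z)+x_f^{k-1}p_2(z)$, so the coefficient of $z^j$ in $g^{(f)}(z)$ equals
\begin{equation*}
\sum_{h\in D} g^{(h)}(x_f)\bigl(l^h_j + x_f^{k-1}l^h_{j+k-1}\bigr),
\end{equation*}
which is precisely the $j$-th entry of $\xi(f,D)$ as defined in \eqref{eq:eqtn4}.

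There is no real obstacle; the only points requiring care are identifying the interpolation variables $a_i$ with the code's evaluation points $x_i$, and recognizing that the symmetry of $s_1,s_2$ is what allows the swap between "the $h$-th polynomial evaluated at $x_f$" and "the target polynomial evaluated at $x_h$." Once the degree bound $\deg q\le d-1$ is in hand, the rest is a mechanical coefficient comparison.
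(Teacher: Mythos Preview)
Your proof is correct and follows essentially the same approach as the paper: your auxiliary polynomial $q(y)=p_1(y)+y^{k-1}p_2(y)$ is exactly the paper's $H(z)=s_1(a_f,z)+z^{k-1}s_2(a_f,z)$, and both proofs proceed by Lagrange interpolation of this degree-$(d-1)$ polynomial followed by coefficient comparison. Your write-up is slightly more explicit about the role of symmetry (which the paper invokes only implicitly in the step $H(a_h)=g^{(h)}(a_f)$) and about the identification $a_i=x_i$, but the argument is the same.
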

	\begin{proof}
		Consider the polynomial $H(z) = s_1(a_f,z)+z^{k-1}s_2(a_f,z)$ and note that $\deg (H)\le 2k-3=d-1$. Thus if we write
		$H(z)=\sum_{j=0}^{d-1} g_j z^j,$ then the polynomial $g^{(f)}$ defined above can be written as
		\\[.05in]
		\centerline{$g^{(f)}(z)=\sum_{j=0}^{k-2}(g_j+a_f^{k-1}g_{k-1+j})z^j.$}\\[.05in] 
		Rephrasing, the contents of the node $f$ is
		$$
		(g_0+a_f^{k-1}g_{k-1},g_1+a_f^{k-1}g_k,\dots,g_{k-2}+a_f^{k-1}g_{2k-3})^\intercal.
		$$
		At the same time, using \eqref{eq:Lagrange} we can write $H(z)$ in the Lagrange form
		$
		H(z) = \sum_{h \in D}g^{(h)}(a_f)l^{(h)}(z).
		$
		The coefficient vector of this polynomial is nothing but $\xi(f,D)$.
	\end{proof}
	To show part (ii) we note that the polynomials $\{l_h(z)\}_{h\in D}$ do not depend on $\cF$ and can be computed at any node in the network. So what we care to receive from the helper nodes are the multipliers $\{g^{(h)}(a_f)\}_{h\in D}$. 
	Hence, for any set of helper nodes with $|A| < d-k+1$, it is gainful to send $\{g^{(h)}(a_f)\}_{h\in A}$ 
	rather than the vector $\xi(f,A),$ since the former requires fewer than $l$ transmissions. At the same time,  when
	$|A|\ge d-k+1,$ we can transmit the vector $\xi(f,A)$ of dimension $l$, meeting the bound \eqref{eq:uni_IP} and
	reproducing the result from Part I.
	
	Using multilinear algebra notation (more on it in the next section), we can rephrase the code description as follows. 
	The encoding is defined as a linear functional 
	$$
	\phi\in(F^2\otimes S^2F^{k-1})^\ast,
	$$
	where $S^2F^{k-1}$ is the second symmetric power (this is another way of saying that the encoding relies on evaluations of symmetric polynomials).
	Node $i$ stores a restriction of $\phi$ to $x_i\otimes y_i\otimes F^{k-1}$, where $x_i=[1,a_i^{k-1}], 
	y_i=[1, a_i, \dots , a_i^{{k-2}}].$ The contents of the failed node is a vector in the $l$-dimensional subspace 
	$(x_f\otimes {y_f}\otimes F^{k-1})^\ast$, and the IP procedure recovers the coordinates of this vector in stages that correspond to moving along the repair graph toward the failed node.
	A general version of this idea underlies the repair procedure in the following sections.

		\begin{example}\label{example:pm}
			{\rm	Consider the $[n\ge 7,k=4,d=6,l=3,\beta=1,M=12]$ PM MSR code, placed on the graph shown in Fig.~\ref{fig:graph_pm}. This graph
				should be thought of as a subgraph in a large storage network, formed by locating a helper set for the failed node.
				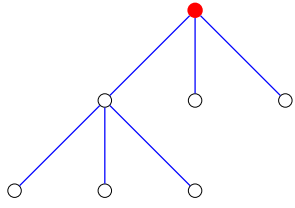
\begin{figure}[ht]	\begin{center}\scalebox{0.6}
						{\begin{tikzpicture}[roundnode/.style={circle, draw=black, inner sep=3pt},
								rootnode/.style={circle, draw=red, very thick,  inner sep=3pt,fill=red}			]
								\node[rootnode] (1) at (0,0) {};
								\node[roundnode] (2) at (-2,-2) {};
								\node[roundnode] (3) at (2,-2) {};
								\node[roundnode] (4) at (-4,-4) {};
								\node[roundnode] (5) at (-2,-4) {};
								\node[roundnode] (6) at (0,-4) {};
								\node[roundnode] (7) at (0,-2) {};
								\path[-] (1) edge [blue,thick] (2);
								\path[-] (1) edge [blue,thick] (3);
								\path[-] (2) edge [blue,thick] (4);
								\path[-] (2) edge [blue,thick] (5);
								\path[-] (2) edge [blue,thick] (6);
								\path[-] (1) edge [blue,thick] (7);
						\end{tikzpicture}}
					\end{center}
					\caption{The graph for Example~\ref{example:pm}}\label{fig:graph_pm}
				\end{figure}		
				Suppose that the root node is erased, and the remaining 6 nodes form the helper set. Since each of them contributes one symbol, the AF 
				repair procedure requires transmission of $9$ field symbols over the edges to complete the repair. In particular, the left most of the three vertices adjacent to $v_f$ sends 4 symbols over the edge connecting it to $v_f.$ At the same time, using the IP procedure described above, this node may only send $l=3$ symbols, showing that a total of $8$ transmissions are sufficient. This shows the bandwidth saving capabilities of the IP procedure.}
		\end{example}

	%
	%
	
	\subsection{Linear-algebraic notation}\label{sec:tensors}
	In this section, we introduce elements of notation used below to define code families for which we design IP procedures of node repair.
	
	For a linear space $U$ over $F$ we denote by $U^\ast$ its dual space; its elements are linear functionals of the form $\phi: U\to {F}$.
	The spaces $U$ and $U^{\ast}$ have the same dimension and $(U^{\ast})^{\ast}\cong U$. A {\em restriction} of $\phi$ to a subspace
	$V\subset U$ is denoted as $\phi \upharpoonright V$.
	
	Let $U,V$ be linear spaces of dimensions $m$ and $n$, respectively, and let us fix bases
	$\{\overline{u}_i\}_{i=1}^m$ and $\{\overline{v}_j\}_{j=1}^n.$ The tensor product of $U$ and $V$ is a linear space
	$
	U \otimes V=\{\sum_{ij}a_{ij}\overline{u}_i \otimes \overline{v}_j, a_{ij}\in {F}\}
	$
	where $a_{ij} \in {F}$ and the tensors $\overline{u}_i \otimes \overline{v}_j$ form a basis in $U\otimes V$ (thus $\dim (U\otimes V)=mn$).
	By definition, $u \otimes V=\{\sum_ja_j u \otimes \overline{v}_j, a_j\in {F}\}$ and $u \otimes V \subseteq U \otimes V$
	The dual of a tensor product is the tensor product of duals, i.e., $(U \otimes V)^{\ast} = U^{\ast}\otimes V^{\ast}$. 
	We denote by $T^pV := V^{\otimes p} $ the $p$-th tensor power of $V$. The dimension of $T^pV $ is $n^p$. 
	
	The {\em symmetric power} $S^pV$ is a linear space of symmetric tensors, i.e., the subspace of $T^pV$ formed of the tensors invariant under transformations of the form 
	$\overline{v}_1\otimes\dots\otimes\overline{v}_p\mapsto \overline{v}_{\sigma(1)}\otimes\dots\otimes\overline{v}_{\sigma(p)}$ for any
	permutation $\sigma.$ We write symmetric tensors as
	$$
	\sum_{\stackrel{i_1,i_2,\dots,i_p}{1 \le i_1\le i_2\le \dots\le i_p\le n}} a_{i_1i_2\dots i_p}\overline{v}_{i_1} \odot \overline{v}_{i_2} \odot\dots \odot \overline{v}_{i_p},
	$$
	where $\odot$ denotes the symmetric product and $a_{i_1i_2\dots i_p}$ are elements of $F$. 
	By definition, $\dim(S^pV)=\binom{n+p-1}{p}.$ The space $S^pV$ can be  thought of as a projection 
	$$S: T^pV \rightarrow S^pV$$
	that sends the tensor $\overline{v}_{i_1}\otimes \overline{v}_{i_2} \otimes \dots \otimes \overline{v}_{i_p}$ to $\overline{v}_{j_1} \odot \overline{v}_{j_2}\odot \dots \odot \overline{v}_{j_p}$ where $j_1\le j_2\le \dots \le j_p$ is a sorted copy of $i_1,i_2,\dots, i_p$. 
	
	Finally, $x\wedge y$ denotes the exterior (alternating) product of vectors, characterized by $x\wedge y=-y\wedge x$;
	hence $\overline{v}_{\sigma(1)}\wedge \overline{v}_{\sigma(2)}\wedge\dots\wedge \overline{v}_{\sigma(n)}=\sgn(\sigma)
	\overline{v}_1\wedge \overline{v}_2\wedge\dots\wedge \overline{v}_n$, where $\sgn(\sigma)$ is the signature of the permutation $\sigma$. The {\em exterior power} $\Lambda^pV$ is a vector subspace of dimension $\binom np$ spanned by elements of the form $\overline{v}_{i_1} \wedge \overline{v}_{i_2} \wedge\dots \wedge \overline{v}_{i_p}, 1 \le i_1< i_2< \dots< i_p\le n$, so a vector in $\Lambda^pV$ has the form
	$$
	\sum_{\stackrel{i_1,i_2,\dots,i_q}{1 \le i_1< i_2< \dots< i_q\le n}} a_{i_1i_2\dots i_q}\overline{v}_{i_1} \wedge \overline{v}_{i_2} \wedge\dots \wedge \overline{v}_{i_q}.
	$$
	The spaces $S^pV$ and $\Lambda^qV$ are formed by the action on $T^pV$ of the symmetric and alternating groups, respectively.

	By convention, $T^0V$, $S^0V$ and $\Lambda^0V$ are taken to be $F$.
	
	%
	%
	
	\subsection{Generalized PM codes}\label{sec:GPM}
	An extension of the PM construction was recently proposed in \cite{Duursma2020}.
	The construction of \cite[Sec.4]{Duursma2020} yields a family of MSR codes with parameters 
	$$
	n,k,d=\frac{(k-1)t}{t-1},l=\binom{k-1}{t-1}, M=t\binom kt, \quad 2\le t\le k\le n-1.
	$$
	In this section we follow the paradigm of evaluation codes to 
	introduce an IP node repair procedure for this code family.
	
	We start with a brief description of the code construction. Let $X={F}^t$ and $Y={F}^{k-t+1}$. Let $L:=X \otimes S^tY$ and note that
	$\dim(L)=M.$ The encoding $\phi:L\to F^{nl}$ is an $F$-linear map. To define a concrete encoding procedure, we fix a basis in $L^*$ and 
	let the coordinates of $\phi$ be the contents of the stored data. 
	
	To support the data reconstruction and node repair tasks, we further choose, for each $i \in [n],$ a pair of vectors $x_i\in X$ and $y_i\in Y$ 
	such that
	\begin{itemize}
		\item[(i)] Any $t$-subset of $x_i$'s spans $X$.
		\item[(ii)] Any $(k-t+1)$-subset of $y_i$'s spans $Y$.
		\item[(iii)] Any $d$ subspaces $x_i\otimes y_i \odot S^{t-2}Y$ span $X \otimes S^{t-1}Y.$
	\end{itemize}
	The first two properties enable data reconstruction, while the node repair property depends on the third condition \cite{Duursma2020}. 
	
	With these assumptions, the contents of node $i$ correspond to the restriction 
	$\phi \!\upharpoonright\! x_i \otimes y_i \odot S^{t-1}Y \in  (x_i \otimes y_i \odot S^{t-1}Y)^\ast.$ This is consistent with the
	code parameters: indeed, an element in $(x_i \otimes y_i \odot S^{t-1}Y)^\ast$ is completely described by its evaluations on a basis of the space $x_i \otimes y_i \odot S^{t-1}Y,$ which requires storing exactly $l=\binom{k-1}{t-1}$ evaluations.

	As before, let $f \in [n]$ be the (index of the) failed node and let $D \subseteq [n]\setminus\{f\}$ be the helper set. 
	Note that we wish to recover the restriction	$\phi \upharpoonright x_f \otimes y_f \odot S^{t-1}Y.$ 
	Choose a basis for $x_f \otimes y_f \odot S^{t-1}Y$ and let 
	$
	x_f \otimes y_f \odot (\overline{y}_{i_1}\odot \dots \odot \overline{y}
	_{i_{t-1}})
	$ 
	be one of the basis vectors. Let 
	$$
	\{\underline{y}_{j_1} \odot \dots \odot \underline{y}_{j_{t-2}}, 1\le j_1\le j_2\dots\le
	j_{t-2}\le k-t+1\}
	$$
	be a basis of $S^{t-2}Y$. 
	The helper node $i \in D$ transmits to the failed node the restriction of $\phi$ to the set of vectors 
	$\{x_i \otimes y_i \odot (\underline{y}_{j_1} \odot \dots \odot \underline{y}_{j_{t-2}})\odot y_f\}$.
	
	\vspace*{.1in} It becomes easier to think of the above construction once we connect it with PM codes described in Sec.~\ref{sec:pm}.
	For that, take $t=2.$ In this case, the file size is 
	$$
	\dim(L) = \dim(X \otimes S^2Y) = \dim(F^2 \otimes S^2F^{k-1}) = k(k-1).
	$$
	Node $i$ stores $\phi \upharpoonright (x_i \otimes y_i \odot Y)$, i.e, $\phi$ evaluated at a basis of $x_i \otimes y_i \odot Y,$ which requires 
	storing exactly $\dim(Y)=k-1$ symbols. Each node can calculate the symbol $\phi(x_i\otimes y_i \odot y_f)\in F$. Now notice that $d$ vectors $\{x_i \otimes y_i\}$ span $X \otimes Y,$ and so $d$ values $\phi(x_i\otimes y_i \odot y_f)$ account for the evaluations of $\phi$ on $X \otimes Y \odot y_f$. From this set of evaluations, we can calculate $\phi$ on $x_f \otimes Y \odot y_f$ which by the symmetric product property is the 
	same as $x_f \otimes y_f \odot Y$. These evaluations form the contents of the failed node.
	
	The IP repair for this construction works as follows. By (iii) above we can write
	\begin{align*}
		&x_f \otimes y_f \odot (\overline{y}_{i_1}\odot \dots \odot \overline{y}_{i_{t-1}})
		= x_f \otimes (\overline{y}_{i_1}\odot \dots \odot \overline{y}_{i_{t-1}}) \odot y_f\\
		&= \sum_{i \in D}\sum_{j_1,\dots,j_{t-2}}a_{i,j_1,\dots, j_{t-2}}x_i \otimes y_i \odot \cY_{j_1,\dots,j_{t-2}}\odot y_f,
	\end{align*}
	where we denoted $\cY_{j_1,\dots,j_{t-2}}=\underline{y}_{j_1} \odot \dots \odot \underline{y}_{j_{t-2}}.$ Again similarly to the PM codes, any set $A \subseteq D$ with $|A| \ge d-k+1$ can transmit the following single evaluation of $\phi$
	along the path to $f$:
	\begin{align*}
		&\phi\Big(\sum_{i \in A}\sum_{j_1,\dots,j_{t-2}}a_{i,j_1,\dots, j_{t-2}}x_i \otimes y_i \odot \cY_{j_1,\dots,j_{t-2}}\odot y_f\Big)\\
		&=\sum_{i \in A}\sum_{j_1,\dots,j_{t-2}}a_{i,j_1,\dots, j_{t-2}}\phi(x_i \otimes y_i \odot \cY_{j_1,\dots,j_{t-2}}\odot y_f).
	\end{align*}
	This can be done for all basis vectors of the chosen basis of $x_f \otimes y_f \odot S^{t-1}Y,$ and that requires $l=\binom{k-1}{t-1}$ transmissions, which matches the lower bound \eqref{eq:uni_IP}. Note that the AF repair would require any set $A$ of helpers to transmit $\beta |A|$ symbols of $F$, which is greater than $l$ for $|A|> d-k+1.$ 
	
	We have shown that IP repair can outperform direct relaying.  Let us give an example to support this claim.
	\begin{example}\label{example:gen_pm}
		{\rm	Consider the use of $[n=7,k=5,d=6,l=6,\beta=3,M=30]$ generalized PM codes for the graph shown in Fig.~\ref{fig:graph}.
			\begin{figure}[ht]	\begin{center}\scalebox{0.6}
					{\begin{tikzpicture}[roundnode/.style={circle, draw=black, inner sep=7pt},
							rootnode/.style={circle, draw=red, very thick,  inner sep=7pt,fill=pink}			]
							\node[rootnode] (1) at (0,0) {1};
							\node[roundnode] (2) at (-1.5,-3) {2};
							\node[roundnode] (3) at (1.5,-3) {3};
							\node[roundnode] (4) at (-3,-6) {4};
							\node[roundnode] (5) at (-1,-6) {5};
							\node[roundnode] (6) at (1,-6) {6};
							\node[roundnode] (7) at (3,-6) {7};
							\path[-] (1) edge [blue,thick] (2);
							\path[-] (1) edge [blue,thick] (3);
							\path[-] (2) edge [blue,thick] (4);
							\path[-] (2) edge [blue,thick] (5);
							\path[-] (3) edge [blue,thick] (6);
							\path[-] (3) edge [blue,thick] (7);
					\end{tikzpicture}}
				\end{center}
				\caption{The graph for Example~\ref{example:pm}}\label{fig:graph}
			\end{figure}
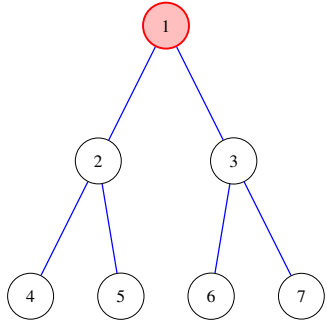 
			Suppose that $F=\ff_{16}$, and $X=F^3, Y=F^3$. Choose distinct $\{a_i\}_{i=1}^7$ from $F$ and let $x_i = \begin{bmatrix}
				1 & a_i^2 & a_i^6\end{bmatrix}, y_i = \begin{bmatrix}1 & a_i & a_i^3\end{bmatrix}$. The file of size 30 stored by the system is defined by the evaluation of a functional $\phi$ on $X \otimes S^3Y$. Node $i$ stores the restriction of $\phi$ to $x_i \otimes y_i \odot S^2Y$, hence storing 6 symbols per node. 
			
			For the repair of node 1, node $i$ sends the evaluations of $\phi$ restricted to the tensors $\{x_i \otimes y_i \odot \hat{y}_j \odot y_1\}_{j=1}^3$ where the set $\{\hat{y}_j\}$ forms a basis of $Y$. Since the six tensors $x_i \otimes y_i$ span $X \otimes Y$, the set $\{x_i \otimes y_i \odot \hat{y}_j \odot y_1, 1\le j\le 3; 2\le i\le 7\}$ spans $X\otimes Y \odot Y \odot y_1 \equiv X \otimes y_1 \odot Y \odot Y$. Hence 
			$$ \phi(x_1 \otimes y_1 \odot \cY_{j_1,j_2}) = \sum_{i=2}^7\sum_{j=1}^3b_{i,j,j_1,j_2}\phi(x_i \otimes y_i \odot \hat{y}_j \odot y_1)$$ where $\{\cY_{j_1,j_2}\}$ is a basis of $S^2Y$. 
			
			Consider nodes 2 and 3. Instead of relaying $\{\phi(x_i \otimes y_i \odot \hat{y}_j \odot y_1): i=2,4,5, j=1,2,3\},$ node 2 can transmit $\{\sum_{i=2,4,5}\sum_{j=1}^3b_{i,j,j_1,j_2}\phi(x_i\otimes y_2 \odot \hat{y}_j \odot y_1)\}$ for all $\{\cY_{j_1,j_2}\}$. The former requires 9 symbol transmissions while the later requires only 6, and the same holds true for node 3. In total, the AF repair procedure would require transmission of 
			$3\cdot(1+1+1+1+3+3)=30$ symbols, while the IP procedure requires only $3\cdot(1+1+1+1+2+2)=24$ symbol transmissions. This matches exactly with the bound in Corollary~\ref{cor:gen_IP_MSR}.}
	\end{example}

	\section{Error Correction during Repair with GRCs on Graphs}\label{sec:errors}
In this section, we extend the analysis of node repair using the GRC framework to the adversarial case, assuming that parts of the network (some helper nodes participating in the repair) contribute corrupted information. As explained in Section~\ref{sec:BRB}, such nodes can have a detrimental effect on the repair process due to the possible error amplification by IP. We prove a variant of the cutset bound of Theorem~\ref{theorem:gen_cutset} for this case and propose a code construction to harness the bandwidth saving capabilities of IP while also counteracting the effects of adversarial nodes. Our proposed solutions
make use of tools and ideas from network coding.

\subsection{Network Coding Preliminaries}
	In this subsection, we briefly describe the general problem of single source network coding and introduce the network singleton bound. A network $(\hat{G}, v_s, U, \cR)$ consists of a directed acyclic graph $\hat{G} = (\hat{V},\hat{E})$ with a single source node $v_s \in \hat{V}$, a set of destination nodes
	$U \subseteq \hat{V} \setminus \{v_s\}$, and a set of non-negative integers $\cR = \{R_{a,b}: (a,b) \in \hat{E}\}$ to denote the set of capacities of the edges (links) in the network. 
	A link with unit capacity transmits one symbol of $F$ per single use. To accommodate integer capacities $R_{a,b} >1$, 
	we simply add parallel unit-capacity edges between the nodes $a$ and $b$.
	An error is said to occur when the output of such a unit-capacity edge is different from the input. For a partition $(A,B)$ of the set $\hat{V}$, let 
	$\text{\rm cut}_{\hat{G}}(A,B) = \{(a,b)\in \hat{E}: a\in A, b \in B\}$ and let $c_{\hat{G}}(A,B) = \sum_{(a,b) \in \text{\rm cut}_{\hat{G}}(A,B)}R_{a,b}$. Additionally, for any two nodes $s$ and $u$, let 
	$$
	c_{\hat{G}}(s,u) = \min_{\stackrel{(S,U) \text{ is a partition of $\hat{V},$}}{s\in S, u \in U}}c_{\hat{G}}(S,U).
	$$

	\begin{definition}
		A network code over the code alphabet $\cX$ for the network $(\hat{G}, v_s, U, \cR)$ with source message set $\cZ$ is a family of local encoding functions $\{\phi_{(a,b)}: (a,b)\in \hat{E}\}$ such that $\phi_{(v_s,b)}: \cZ \rightarrow \cX^{r_{v_s,b}}$ for every $(v_s,b) \in \hat{E}$ and $\phi_{(a,b)}: \prod_{(c,a) \in \hat{E}}\cX^{r_{c,a}} \rightarrow \cX^{r_{a,b}}$ for every $(a,b) \in \hat{E}, a \ne v_s$, where $0 \le r_{a,b} \le R_{a,b}$. Such a code is said to correct $t$ errors if it recovers the source message at each of the destinations as long as at most $t$ \emph{unit capacity} links are subjected to errors.
	\end{definition}
	The network Singleton bound\cite{YeungCai2006} is as follows:
	\begin{lemma}\label{lemma:net_singleton}
		Let $(\hat{G},v_s,U,\cR)$ be an acyclic network and let $\hat{c}=\min_{u \in U} c_{\hat{G}}(s,u)$. 
		If there exists a $q$-ary $t$ error-correcting code for the network then the number of messages that can be transmitted from $s$ to $U$ 
		is at most $q^{\hat{c}-2t}.$
	\end{lemma}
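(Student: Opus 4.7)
The plan is to reduce the network problem to the classical Singleton bound by inspecting a minimum $s$-$u$ cut. Fix a destination $u \in U$ with $c_{\hat{G}}(s,u)=\hat c$, and fix a partition $(S,\bar S)$ of $\hat V$ with $s\in S$, $u\in \bar S$, and $c_{\hat G}(S,\bar S)=\hat c$. Replacing each integer-capacity edge by parallel unit-capacity edges (as in the statement), the cut $\text{cut}_{\hat G}(S,\bar S)$ consists of exactly $\hat c$ unit links carrying $\hat c$ symbols of $F$ in any execution of the code.

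For each source message $z\in \cZ$, let $\psi(z)\in F^{\hat c}$ denote the $\hat c$-tuple of symbols appearing on the cut edges during an error-free execution of the network code when the source emits $z$. I claim that $\psi$ is injective and that its image has Hamming minimum distance at least $2t+1$. Suppose instead that $z_1\ne z_2$ produce tuples $\psi(z_1),\psi(z_2)$ that differ in a set $E\subseteq \text{cut}_{\hat G}(S,\bar S)$ of size $|E|\le 2t$; split $E=E_1\sqcup E_2$ with $|E_i|\le t$. Consider two adversarial executions: in the first, the source sends $z_1$ and the adversary corrupts the links in $E_2$ so that their outputs equal the corresponding coordinates of $\psi(z_2)$; in the second, the source sends $z_2$ and the adversary corrupts the links in $E_1$ so that their outputs equal the coordinates of $\psi(z_1)$. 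In both executions, the symbols exiting the cut into $\bar S$ are identical. Because $\hat G$ is acyclic and $(S,\bar S)$ is a cut, every directed path from $s$ to $u$ crosses $\text{cut}_{\hat G}(S,\bar S)$, so all subsequent computations at nodes in $\bar S$ (in particular at $u$) are identical in the two scenarios. The decoder at $u$ therefore cannot distinguish $z_1$ from $z_2$, contradicting the $t$-error-correcting assumption.

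Once the claim is established, $\psi$ embeds $\cZ$ into a code $\psi(\cZ)\subseteq F^{\hat c}$ of block length $\hat c$, alphabet size $q$, and minimum distance at least $2t+1$. The ordinary Singleton bound then yields $|\cZ|\le q^{\hat c-2t}$, which is the desired inequality.

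The only delicate step is the exchange argument across the cut. It relies on two easily verified facts: (i) the local encoding functions $\phi_{(a,b)}$ for edges $(a,b)$ with $a\in \bar S$ see only inputs that are either produced inside $\bar S$ or arrive along cut edges, so replacing the cut symbols in a consistent way yields a valid execution of the network code; and (ii) the adversary's power to force $t$ unit-capacity edges to carry arbitrary values of $F$ is exactly what the error model allows. These observations together make the two adversarial scenarios admissible, closing the argument.
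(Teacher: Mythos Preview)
The paper does not supply a proof of this lemma; it is quoted as the network Singleton bound from \cite{YeungCai2006} and used as a black box in the proof of Theorem~\ref{lemma:adv}. Your proposal reproduces the standard reduction-to-classical-Singleton argument, and the overall strategy---fix a minimum $s$--$u$ cut, regard the $\hat c$ cut symbols as a $q$-ary block code, and use an indistinguishability argument to force minimum Hamming distance at least $2t+1$---is exactly the right one.

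There is one technical point that deserves care. Your exchange argument tacitly assumes that when the adversary corrupts the edges in $E_2$, the outputs on the remaining cut edges are still the error-free values $\psi(z_1)$ (and symmetrically for the second scenario). This need not hold for an arbitrary minimum cut $(S,\bar S)$: if some edge of $\hat G$ runs from $\bar S$ back into $S$, then corrupting one cut edge can alter the computation at a node in $S$ and hence change the value carried by a different, uncorrupted cut edge. Your observation~(i) only establishes that the computations inside $\bar S$ are determined by the cut outputs; it does not show that the cut outputs themselves are unaffected outside the corrupted set. The remedy is to choose the minimum cut so that no cut edge lies downstream of the head of another cut edge---for instance, take $S$ to be the set of vertices reachable from $s$ after deleting the edges of a fixed minimum cut, which guarantees that every tail of a cut edge depends only on computations internal to $S$. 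With this refinement your argument goes through verbatim and yields $|\cZ|\le q^{\hat c-2t}$.
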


	\subsection{A cutset bound for repair with adversarial nodes}\label{sec:bound_adv}

We now extend the above framework to node repair. Recall that helper node $h \in D$ of a GRC code contributes $\beta_{\tau(h)}$ symbols for the repair of failed node $f$ for some assignment $\tau: D \rightarrow [d]$.  The following definition relies on notation from Sec.~\ref{sec:IP-GRC}.
	\begin{definition}
For a failed node $f$ and its repair graph $G_{f,D} = (V_{f,D}, E_{f,D})$, assumed to be a tree, define a directed acyclic graph $\tilde{G}_{f,D} = (\tilde{V}_{f,D},\tilde{E}_{f,D})$ in the following way:
		\begin{enumerate}
			\item Let $\tilde{V}_{f,D} = V_{f,D}$ and for every edge in $E_{f.D}$ define an edge in $\tilde{E}_{f,D}$ whose direction is defined by the direction of the data flow in the repair process.
			\item Next, for every $h \in V_{f,D}$, $h \ne v_f$ add another vertex $\tilde{h}$ to $\tilde{V}_{f,D}$ and add a directed edge $(\tilde{h},h)$ to $\tilde{E}_{f,D}$.
		\end{enumerate}
	\end{definition}
	The purpose of adding additional nodes in the graph is to formally define the \emph{limited-power adversary} that we are considering. Recalling the notation introduced in Sec.~\ref{sec:general}, we assume that for a node $h \in V_{f,D}\setminus\{v_f\}$, $\tilde{h} \in \tilde{V}_{f,D}$ stores $W_h$ and computes the functions $\cG^{\tau}_{h,f}$. The repair data then is transmitted to $h \in \tilde{V}_{f,D}$ via the directed edge $(\tilde{h},h)$ which, after possibly receiving the repair data from other nodes, computes the function $I^{\tau}_{h,f}$ (denoted earlier in Sec.~\ref{sec:IP-GRC} by $I_{h,f}$ and forwards to the next node. The adversary can now be assumed to have control only over the nodes of the former type. The following relationship holds between the cuts of original repair graph $G_{f,D}$ and the modified graph $\tilde{G}_{f,D}$:
	\begin{proposition}\label{prop:cutset}
		For any set $A \subset D$ in $G_{f,D}$, 
		let $\tilde{A} = \{\tilde{h}: h \in A\}$ in $\tilde{G}_{f,D}$. Then
		$$
		\text{\rm cut}_{\tilde{G}}(A\cup \tilde{A},\tilde{V}_{f,D}\setminus (A\cup\tilde{A})) = \text{\rm cut}_{G}(A,V_{f,D}\setminus A).
		$$ 
	\end{proposition}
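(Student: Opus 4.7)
The plan is to verify the equality of the two edge sets by a direct case analysis on the two types of edges that appear in $\tilde{E}_{f,D}$: the oriented copies of edges from $E_{f,D}$ (oriented along the data-flow direction toward $v_f$), and the auxiliary edges $(\tilde{h},h)$ introduced for every $h\in V_{f,D}\setminus\{v_f\}$. The key geometric observation driving the argument is that the partition $(A\cup\tilde{A},\tilde{V}_{f,D}\setminus(A\cup\tilde{A}))$ is defined so that an original helper node $h$ and its companion $\tilde{h}$ always lie on the same side.

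First, I would dispose of the auxiliary edges. For any $h\in V_{f,D}\setminus\{v_f\}$, the edge $(\tilde{h},h)\in\tilde{E}_{f,D}$ has both endpoints in $A\cup\tilde{A}$ when $h\in A$ (since then $\tilde{h}\in\tilde{A}$), and both endpoints in the complement when $h\notin A$. Consequently, no auxiliary edge lies in $\text{\rm cut}_{\tilde{G}}(A\cup\tilde{A},\tilde{V}_{f,D}\setminus(A\cup\tilde{A}))$, so any edge in this cut must be an oriented copy of some edge from $E_{f,D}$.

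Next, I would analyze the remaining edges. Any such oriented edge $(a,b)$ has both endpoints in $V_{f,D}$, and since $(A\cup\tilde{A})\cap V_{f,D}=A$, the membership condition $a\in A\cup\tilde{A}$, $b\in\tilde{V}_{f,D}\setminus(A\cup\tilde{A})$ reduces to $a\in A$ and $b\in V_{f,D}\setminus A$. This is precisely the defining condition for membership in $\text{\rm cut}_G(A,V_{f,D}\setminus A)$, so under the natural identification of each edge of $E_{f,D}$ with its oriented copy in $\tilde{E}_{f,D}$, the two cut sets coincide.

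The only potential obstacle is notational: one must interpret $\text{\rm cut}_G$ consistently with the orientation convention used in $\tilde{G}$, so that an edge contributes to the cut only when its tail lies in the source side of the partition. Once this convention is fixed, the claim reduces to the routine bookkeeping above, and the substantive content is simply that pairing each $h$ with its companion $\tilde{h}$ guarantees that the auxiliary gadget contributes no new edges crossing the cut.
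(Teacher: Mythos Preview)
Your proposal is correct and follows essentially the same approach as the paper: both arguments observe that each auxiliary edge $(\tilde h,h)$ has its two endpoints on the same side of the partition and hence contributes nothing to the cut, leaving only the oriented copies of the original edges, for which membership in the $\tilde G$-cut reduces to membership in the $G$-cut. Your write-up is in fact a bit more explicit than the paper's, which states one inclusion directly and handles the other only via the remark on auxiliary edges.
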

	\begin{proof}
		Clearly, $\text{\rm cut}_G(A,v_f\cup \{D\setminus A\}) \subseteq \text{\rm cut}_{\tilde{G}}(A\cup \tilde{A},\tilde{V}_{f,D}\setminus (A\cup\tilde{A}))$ as we have not deleted any of the edges involved in the repair. Further, every newly added edge has either both ends in $\tilde{A}$ or both ends outside of it, so none of them can be a part of $\text{\rm cut}_{\tilde{G}}(A\cup \tilde{A},\tilde{V}_{f,D}).$
	\end{proof}
Now we state our main result of this section which gives a variation of Theorem \ref{theorem:gen_cutset} for the adversarial case. Let 
$
\Omega_r(\cB) = \max_{R \subseteq [d],|R|=r} \sum_{i \in R}\beta_i
$ 
denote the sum of $r$ largest elements 
from $\cB$.
	\begin{theorem}\label{lemma:adv}
		Suppose that the data stored on the graph $G$ is encoded using an $[n,k,d,l,\cB = \{\beta_j\}_{j=1}^d,M]$ Regenerating code. For the repair of a failed node $f$, let $D$ be the set of chosen helper nodes. Suppose the limited power adversary has control over a set $T \subset D$, $|T| \le t$ of these helper nodes. For any subset $A \subseteq D$ of size at least $d-k+1+2t$ that contains $T$, we have
		$$c(A, V_{f,D}\setminus A) \ge M-\sum_{i=1}^{k-1}\min\{l,\Delta_{d-i+1}(\cB)\}+2\Omega_t(\cB).$$ 
	\end{theorem}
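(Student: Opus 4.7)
The plan is to combine a conditional-entropy estimate in the style of Theorem \ref{theorem:gen_cutset} with the network Singleton bound of Lemma \ref{lemma:net_singleton}, applied to the directed graph $\tilde{G}_{f,D}$ via Proposition \ref{prop:cutset}. Set $B := D\setminus A$; by hypothesis $|B|\le k-1-2t$. The argument splits into two parts: first, estimate how much information about $W_f$ remains unresolved once $W_B$ is given; second, force that unresolved information, together with an error-correction overhead, to flow across the cut separating $A$ from the rest of the repair tree.

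For the first part, enlarge $B$ to a set $B'\subseteq[n]\setminus\{f\}$ of size $k-1$ by adjoining arbitrary extra nodes. Iterating the bound of Eq.~\eqref{eq:H-H} along the chain rule, as in the proof of Theorem \ref{theorem:gen_cutset}, bounds $H(W_{B'})$ above by $\sum_{i=1}^{k-1}\min\{l,\Delta_{d-i+1}(\cB)\}$; the reconstruction property forces $H(W_{B'\cup\{f\}})=M$; and subtracting, together with monotonicity of conditional entropy since $B\subseteq B'$, yields $H(W_f\mid W_B)\ge M-\sum_{i=1}^{k-1}\min\{l,\Delta_{d-i+1}(\cB)\}$.

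For the second part, view $\tilde{G}_{f,D}$ as a network coding instance with destination $v_f$, augmented so that the content $W_B$ of the noiseless side is delivered to $v_f$ through an auxiliary error-free channel. The residual message that must reach $v_f$ through the cut from $A$ has entropy at least $H(W_f\mid W_B)$, while the adversary, restricted to the adjoined sources $\{\tilde{h}:h\in T\}$, can inject errors on at most $\sum_{h\in T}\beta_{\tau(h)}\le\Omega_t(\cB)$ unit-capacity edges. Proposition \ref{prop:cutset} equates the relevant cut capacity in $\tilde{G}_{f,D}$ with $c(A,V_{f,D}\setminus A)$, and Lemma \ref{lemma:net_singleton}, rewritten in additive form as $\hat{c}\ge (\text{message size})+2(\text{number of corruptible unit edges})$, closes the argument upon combining with the first part.

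The main obstacle is a clean application of Lemma \ref{lemma:net_singleton}, which is formulated for a single source with a free message space, to our setting where the source is a distributed, code-correlated collection of helpers. One has to verify that introducing a super-source feeding every $\tilde{h}$ at capacity $l$ and an auxiliary side channel delivering $W_B$ does not change either the cut capacity across $A$ or the effective message entropy, and that the mismatch between the nonuniform per-node downloads $\beta_{\tau(h)}$ and the unit-capacity edge counting of Lemma \ref{lemma:net_singleton} is correctly bookkept via the edge-splitting convention used to model integer capacities before the Lemma.
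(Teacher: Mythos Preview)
Your plan is essentially the paper's own strategy: combine the information lower bound of Theorem~\ref{thm:gen_IP} with the network Singleton bound on the graph $\tilde G_{f,D}$ via Proposition~\ref{prop:cutset}. Your first part is exactly a re-derivation of the proof of Theorem~\ref{thm:gen_IP} (the paper simply cites that result), and the observation that the adversary's corruptions live on the edges $(\tilde h,h)$, $h\in T$, totaling at most $\Omega_t(\cB)$ unit edges, is the same in both.

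The one point where your setup deviates and becomes problematic is the choice of super-source. You propose a source feeding \emph{every} $\tilde h$ and an auxiliary error-free channel carrying $W_B$ to $v_f$. With that wiring, the cut $(A\cup\tilde A,\ \tilde V_{f,D}\setminus(A\cup\tilde A))$ no longer separates the source from the sink (the source still reaches $v_f$ through $\tilde B$), so Lemma~\ref{lemma:net_singleton} does not bound $c(A,V_{f,D}\setminus A)$; and if you route the side channel across the cut, its capacity swamps the bound. The paper sidesteps this by connecting the dummy source only to $\tilde A$ and declaring the message to be $R_A^f$, the data that $A$ is required to deliver so that $H(W_f\mid R_A^f,S_{D\setminus A}^f)=0$. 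Then $H(R_A^f)\le\log|\cZ|$, the Singleton bound gives $\log|\cZ|\le\hat c-2\Omega_t(\cB)$, Proposition~\ref{prop:cutset} gives $c(A,V_{f,D}\setminus A)\ge\hat c$, and the lower bound on $H(R_A^f)$ from Theorem~\ref{thm:gen_IP} closes the chain. Replacing your ``super-source to all $\tilde h$ plus side channel'' with ``source to $\tilde A$ only, message $R_A^f$'' resolves precisely the obstacle you flagged in your final paragraph, and no auxiliary channel is needed.
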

	\begin{proof} We begin with transforming the repair problem into a network coding problem $(\hat{G},v_s,U,\cR)$,
and then apply the network Singleton bound. For a selected repair protocol and transmission scheme, we set $\hat{G}$  to be the directed graph $\tilde{G}_{f,D}$, introduce a dummy source
node connected to all the vertices $\tilde{A}=\{\tilde{h}: h \in A\}$ of $\tilde{V}_{f,D}$ by 
infinite-capacity edges, and set the sink $U$ to be the failed node $v_f$. Fix some assignment $\tau$ and set the capacity of the 
edges to be equal to the number of symbols transmitted over them as determined by the repair protocol. Note that the capacity of 
 an edge of the form $(\tilde{h},h)$ for $h \in D$ is set to $\beta_{\tau(h)}$ by this assignment. Hence an 
 adversary controlling a set of $T$ nodes in $G_{f,D}$, can inject at most $\sum_{h \in T}\beta_{\tau(h)}$ errors. In 
 network coding terms, this implies that the adversary can cause errors in 
at most $\sum_{h \in T}\beta_{\tau(h)}$ unit-capacity edges in $\tilde{G}_{f,D}$.

We phrase the rest of the proof using the notation used to prove Theorem~\ref{theorem:gen_cutset}.
Recall that $R_A^f$ is the random variable which is a function of the contents of the helper node set $A$ in the original graph $G_{f,D}$ such that $H(W_f|R_A^f,S_{D\setminus A}^f)=0$. In words, $R_A^f$ is the random variable jointly produced from the data stored at the set $A$ of the graph $G_{f,D}$ that is to be sent to the failed node such that the repair process is successful. 
Switching to the language of network coding, the set of vertices $\tilde{A}\in\tilde V_{f,D}$ 
wants to communicate the message $R_A^f$ to the sink node $v_f$ via the network $\tilde{G}_{f,D}$, where at most $\sum_{h \in T}\beta_{\tau(h)}$ edges can inject errors. If the random variable $R_A^f$ is supported on a set  $\cZ$ and if $\hat{c}$ is the mincut between the source and the destination, then by  Lemma~\ref{lemma:net_singleton}, $	\log |\cZ| \le \hat{c}-2\sum_{h \in T}\beta_{\tau(h)}.$
Since $H(R_A^f) \le \log |\cZ|,$ we obtain
		$$
  H(R_A^f) \le \hat{c}-2\sum_{h \in T}\beta_{\tau(h)}.
		$$ 
		By Proposition~\ref{prop:cutset},
		$$c_{G}(A,V_{f,D}\setminus A) = c_{\tilde{G}}(A\cup \tilde{A},\tilde{V}_{f,D}\setminus (A\cup\tilde{A})) \ge \hat{c} \ge H(R_A^f) +2\sum_{h \in T}\beta_{\tau(h)}.$$
Since this is true for all choices of $\tau,$ the tightest lower bound is obtained when,
 for some $\tau$, $\sum_{h \in T}\beta_{\tau(h)} = \Omega_t(\cB)$. Now substituting the lower bound of Theorem~\ref{theorem:gen_cutset} concludes the proof.
	\end{proof}
	\begin{corollary}\label{cor:msr_adv}
		At the MSR point, the lower bound takes the form
		$$
  c(E, V_{f,D}\setminus E) \ge \Delta_{d-k+1}(\cB)+\Omega_t(\cB),
  $$
  which under the uniform download assumption further simplifies to
		$$ c(E, V_{f,D}\setminus E) \ge (d-k+1+2t)\beta = l+2t\beta.$$
	\end{corollary}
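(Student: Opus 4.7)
The plan is to derive Corollary \ref{cor:msr_adv} as a direct specialization of Theorem \ref{lemma:adv} at the MSR point, where the parameters satisfy $l = \Delta_{d-k+1}(\cB)$ and $M = kl$ by Definition \ref{def:corner_points}. First I would substitute these two equalities into the general bound
$$
c(A, V_{f,D}\setminus A) \ge M - \sum_{i=1}^{k-1}\min\{l,\Delta_{d-i+1}(\cB)\} + 2\Omega_t(\cB)
$$
and focus attention on simplifying the summation on the right.

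The key observation is a monotonicity argument on the function $\Delta_r(\cB)$. By its definition as the sum of the $r$ smallest elements of $\cB$, which are all positive integers, $\Delta_r(\cB)$ is nondecreasing in $r$. For every index $i \in \{1,\dots,k-1\}$ we have $d-i+1 \ge d-k+2 > d-k+1$, and hence $\Delta_{d-i+1}(\cB) \ge \Delta_{d-k+1}(\cB) = l$. Therefore $\min\{l,\Delta_{d-i+1}(\cB)\} = l$ for every term in the sum, and the sum collapses to $(k-1)l$. Substituting $M = kl$ then gives
$$
c(A, V_{f,D}\setminus A) \ge kl - (k-1)l + 2\Omega_t(\cB) = \Delta_{d-k+1}(\cB) + 2\Omega_t(\cB),
$$
which is the first statement (up to the value of the coefficient on $\Omega_t(\cB)$, which on the plan side naturally comes out as $2$ to match the uniform simplification below).

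For the uniform-download case, with $\beta_i = \beta$ for all $i \in [d]$, the quantities $\Delta_{d-k+1}(\cB) = (d-k+1)\beta$ and $\Omega_t(\cB) = t\beta$ are elementary, so the bound above reduces directly to $(d-k+1)\beta + 2t\beta = (d-k+1+2t)\beta = l + 2t\beta$, as claimed. No significant obstacle arises: the entire argument is a bookkeeping specialization of Theorem \ref{lemma:adv}, where the only ingredient beyond the theorem itself is the obvious monotonicity of $\Delta_r(\cB)$ in $r$. No further information-theoretic, combinatorial, or network-coding machinery is required.
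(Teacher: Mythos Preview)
Your approach is correct and is exactly the specialization the paper intends (the paper gives no separate proof of this corollary, treating it as immediate from Theorem~\ref{lemma:adv}). You are also right that the derivation produces $\Delta_{d-k+1}(\cB)+2\Omega_t(\cB)$ rather than $\Delta_{d-k+1}(\cB)+\Omega_t(\cB)$; the missing factor of $2$ in the first displayed line of the corollary is a typo in the paper, as confirmed by both Theorem~\ref{lemma:adv} and the uniform-download line that follows.
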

	\subsection{Code Construction}\label{sec:constructions}
	In this section, we present a code construction that combines error correction with IP, and analyze its performance relative to the bound of Theorem~\ref{lemma:adv}. The construction supports error control of systematic nodes in the encoding and is based upon concatenating rank-metric codes with GRC codes at the MSR point, i.e., Construction~\ref{const:stacked}. Such concatenation was previously used in \cite{Silberstein2015} for error correction during data recovery with full connectivity between the nodes. However, our purpose is to correct errors during the repair process itself, as was done in \cite{Ye16a}, without sacrificing the benefits of IP on graphs.
	
	\emph{Rank-metric codes:}
	Recall that 
	a rank metric code $\cC$ is an $F$-linear subspace of the space of matrices $F^{n \times m}$ 
	with distance $d(A,B)=\rank(A-B).$ By $d_{\text{\rm min}}(\cC)$ we denote the minimum distance of the code. 
	{For} our construction we use a classic family of MRD codes, namely the {\em Gabidulin codes}. To define them, recall that a linearized polynomial $f(x)\in \ff_{q^m}[x]$ of $q$-degree $t$ {is defined as}
	$$
	f(x) = \sum_{i=0}^ta_ix^{q^i}, \quad a_t \ne 0.
	$$
	
	\begin{definition} Let $N\le m$ be integers and let $g_1,\cdots,g_N$ be elements of $\ff_{q^m}$ linearly independent over $F$.
		An $[N,K,D=N-K+1]_{q^m}$ Gabidulin code maps a $K$-tuple $f_0,\dots,f_{K-1}$ of elements of $\ff_{q^m}$ to an $N$-tuple $\ffc = (f(g_1),f(g_2),\cdots,f(g_N)),$ where $f(x)=\sum_{i=0}^{K-1} f_ix^{q^i}$ is a linearized polynomial.
	\end{definition}

 \begin{tcolorbox}[width=\linewidth, sharp corners=all, colback=white!95!black]
	\noindent\emph{Code Construction:} The code construction is a concatenation of Gabidulin codes and MSR codes. Let $\cC_1$ be an
	$[N,K]_{q^m}$ Gabidulin code and let $\cC_2$ be an $F$-linear systematic GRC code at the MSR point with parameters $[n,k,d,l=N,\cB = \{\beta_j\}_{j=1}^d]$.  
	The data to be encoded comprises $Kkm$ message symbols of $F=\ff_q$, or equivalently $Kk$ symbols of $F_{q^m}$. Partition them in $k$ blocks of $K$ elements and encode each block using code $\cC_1$. This gives a matrix $A$ of dimensions $k \times N$ over $\ff_{q^m}$. 
	
	Next fix a basis of $\ff_{q^m}$ over $F$ and expand each row of the matrix $A$, obtaining $k$ matrices $B^{(i)},i=1,\dots,k$ over $F$. Let 
	$B^{(i)}_j$ be row $j=1,\dots,m$ of the matrix $B^{(i)}.$ For a fixed $j$ we form a $k\times N$ matrix $R_j=((B^{(1)}_j)^\intercal,\dots,(B^{(k)}_j)^\intercal)^\intercal.$ 
	
The code $\cC_2$ defines an $F$-linear encoding map $F^{kN}\stackrel{\cC_2}\longrightarrow F^{nN}$. We assume that the encoding is {\em systematic}, i.e., there are some $k$ nodes that contain the $kN$ data symbols. Encoding the matrices $R_j$ with the code $\cC_2$, we obtain $m$ codewords $C_j=\cC_2(R_j), j=1,\dots,m,$ viewed as $n\times N$ matrices over $F$. 
	
Finally, for a fixed $i=1,\dots,n$, we take the rows $(C_j)_i$
	and place them on the storage node $i.$ Thus, each node contains $m$ $N$-dimensional vectors over $F$, each of which is a coordinate 
	of its codeword of the MSR code $\cC_2.$

\end{tcolorbox}
 
	\begin{proposition}\label{prop:systematic}
		The contents of any of the $k$ systematic nodes, viewed as an $N$ dimensional vector over $\ff_{q^m},$ forms a codeword of $\cC_1$.
	\end{proposition}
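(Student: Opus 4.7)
The plan is to prove the proposition by tracking carefully through the construction, using the systematic property of $\cC_2$ to show that the stored data at a systematic node is exactly the expansion over $F$ of a row of the matrix $A$, which by construction is a codeword of $\cC_1$.

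First, I would fix a systematic node index $i\in\{1,\ldots,k\}$ and identify what that node stores. By the construction, for each $j\in\{1,\ldots,m\}$, the codeword $C_j=\cC_2(R_j)$ is an $n\times N$ matrix over $F$ whose row $i$ is stored at node $i$. Because $\cC_2$ is systematic on the first $k$ nodes, row $i$ of $C_j$ equals row $i$ of the message matrix $R_j$. By the definition $R_j=\bigl((B^{(1)}_j)^\intercal,\ldots,(B^{(k)}_j)^\intercal\bigr)^\intercal$, row $i$ of $R_j$ is exactly $B^{(i)}_j$, the $j$-th row of $B^{(i)}$. Hence the systematic node $i$ stores the collection of rows $B^{(i)}_1,\ldots,B^{(i)}_m$, which is precisely the entire $m\times N$ matrix $B^{(i)}$ over $F$.

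Second, I would invert the basis expansion to view the stored content as an $N$-vector over $\ff_{q^m}$. Recall that $B^{(i)}$ was obtained from row $A_i=(A_{i,1},\ldots,A_{i,N})\in\ff_{q^m}^N$ by expanding each entry $A_{i,s}$ into its coordinate column of length $m$ with respect to the fixed basis of $\ff_{q^m}/F$. Consequently, reading the columns of $B^{(i)}$ back via the same basis identification yields exactly $(A_{i,1},\ldots,A_{i,N})=A_i$. But $A_i$ was by construction obtained by encoding a $K$-block of $\ff_{q^m}$-symbols with the Gabidulin code $\cC_1$, hence $A_i\in\cC_1$, which is what had to be shown.

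The argument is essentially bookkeeping: the only thing one must verify is that the two operations used in the construction — (a) row-wise basis expansion $\ff_{q^m}^N\to F^{m\times N}$ and (b) systematic encoding by the $F$-linear code $\cC_2$ — do not mix information across the $k$ rows before they are re-aggregated at a systematic node. The mild subtlety worth spelling out is the indexing convention that the $j$-th codeword $C_j$ uses the $j$-th rows $B^{(1)}_j,\ldots,B^{(k)}_j$ of the $k$ expanded matrices as its $k$ systematic coordinates, so that when a single systematic node $i$ collects its $m$ stored vectors across $j=1,\ldots,m$, it reassembles the full matrix $B^{(i)}$ rather than a mixture involving other rows $i'\ne i$ of $A$. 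Once this indexing is made explicit, the identification with a codeword of $\cC_1$ is immediate, and no additional technical obstacle arises.
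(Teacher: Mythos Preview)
Your proof is correct and follows the same approach as the paper, which simply notes that the encoding at the systematic nodes is the identity map, so mapping their contents back to $\ff_{q^m}$ recovers the original codewords of $\cC_1$. Your version spells out the bookkeeping explicitly, but the underlying argument is identical.
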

	The proof follows from the fact that the encoding for these nodes is an identity map, so mapping their contents back to $\ff_{q^m}$
	recovers the original {codewords} of $\cC_1.$
	\begin{proposition}\label{prop:reg}
		The resulting code satisfies the $k$-node data reconstruction and $d$-node repair properties. 
	\end{proposition}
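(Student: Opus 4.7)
The plan is to observe that both properties are direct consequences of the corresponding properties of the outer MSR code $\cC_2$, since the storage structure essentially consists of $m$ parallel instances of $\cC_2$. By construction, the matrices $C_1,\dots,C_m$ are all codewords of $\cC_2$, and node $i$ holds the $i$-th coordinate $(C_j)_i$ for every $j\in[m]$. Thus any access to storage nodes translates into identical access to $m$ codewords of $\cC_2$ carried out in parallel, and the inner Gabidulin code $\cC_1$ plays no role in these two tasks (it will come into play only in the error-correction argument of Section~\ref{sec:errors}).

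For $d$-node repair, I would invoke the repair procedure of $\cC_2$ separately on each codeword $C_j$. Since $\cC_2$ is an $[n,k,d,l=N,\cB]$ GRC, any failed node $f$ can be recovered from a helper set $D$ of size $d$ by having helper $h\in D$ transmit $\beta_{\tau(h)}$ symbols of $(C_j)_h$. Applying this in parallel for $j=1,\dots,m$, helper $h$ transmits $m\beta_{\tau(h)}$ symbols in total and node $f$ recovers $(C_j)_f$ for every $j$, which is exactly its stored content.

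For $k$-node data reconstruction, I would apply $\cC_2$'s reconstruction map to each $C_j$ independently, recovering all matrices $R_j$, $j\in[m]$. Unstacking the rows of each $R_j$ gives back the $k$ expansion matrices $B^{(i)}$, and since the basis of $\ff_{q^m}$ over $F$ is fixed, inverting the expansion recovers the matrix $A$ over $\ff_{q^m}$. Finally, each row of $A$ is the Gabidulin codeword of one of the $k$ message blocks, and because the encoding $f\mapsto(f(g_1),\dots,f(g_N))$ is an $\ff_{q^m}$-linear injection (its generator matrix has full row rank $K\le N$), the original $Kk$ symbols of $\ff_{q^m}$, equivalently the $Kkm$ symbols of $F$, are recovered.

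There is no real obstacle: the concatenation is a transparent parallel construction. The only points requiring a brief justification are that the basis expansion and the Gabidulin encoding are both invertible on their image, and that the per-helper download $\beta_{\tau(h)}$ promised by $\cC_2$ is preserved when applied symbol-by-symbol to the $m$ parallel codewords.
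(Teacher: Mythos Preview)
Your proposal is correct and follows the same approach as the paper, which simply states that the proof follows immediately from the properties of the MSR code $\cC_2$. You have spelled out in detail what the paper leaves implicit: that the $m$ codewords $C_j$ are handled in parallel by the repair and reconstruction procedures of $\cC_2$, and that for full data reconstruction one then inverts the basis expansion and the injective Gabidulin encoding to recover the original $Kkm$ message symbols.
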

	The proof follows immediately from the properties of the MSR code $\cC_2.$
	\begin{remark}
		Note that for recovering a file of size $kKm$, the required download is $kNm$ symbols over $\ff$, i.e., the communication complexity increses by a factor of $\frac{1}{R_1}$ where $R_1= \frac{K}{N}$ is the rate of $\cC_1$. If the $k$ nodes contacted for data reconstruction are all the systematic nodes of $\cC_2$, then this extra download is not required, since each of the systematic nodes can locally decode its own Gabidulin codeword and send the decoded symbols, which requires $kKm$ symbol transmissions.
	\end{remark}
	
Next we show that the above code construction supports repair of systematic nodes with IP despite the presence of a limited number of adversarial nodes.
	\begin{theorem}
		Let $v_f$ be a systematic node. Suppose that a subset $T \subset D$ of helper nodes with $|T| \le t$ are controlled by a limited-power adversary. If $N-K \ge 2\Omega_t(\cB)$, then the repair procedure of $v_f$ recovers the original data.
	\end{theorem}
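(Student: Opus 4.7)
The plan is to reduce the corrupted repair of $v_f$ to a rank-metric decoding problem for the Gabidulin code $\cC_1$. Since $v_f$ is systematic, Proposition~\ref{prop:systematic} guarantees that its true contents, viewed as an $N$-vector over $\ff_{q^m}$, form a codeword of $\cC_1$. I would run the MSR repair of $\cC_2$ on each of the $m$ stacked codewords using one fixed helper assignment $\tau$ and one fixed IP combining scheme along the repair tree, thereby producing at $v_f$ an $m \times N$ matrix over $F$ that equals the true stored matrix plus some error matrix $E \in F^{m \times N}$. Expanding columns back into $\ff_{q^m}$ via the fixed basis turns $E$ into an error vector $e \in \ff_{q^m}^N$ whose rank-metric weight equals $\rank_F(E)$. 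The proof then reduces to showing $\rank_F(E) \le \Omega_t(\cB)$, because the minimum rank distance of $\cC_1$ is $N-K+1$, so the hypothesis $N-K \ge 2\Omega_t(\cB)$ lets a Gabidulin decoder uniquely identify the correct codeword, and hence the original contents of $v_f$.

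The main technical step is the rank bound. Because $\cC_2$ is $F$-linear and I reuse the same repair protocol for every $j \in [m]$, the reconstructed row for the $j$-th codeword can be written as $(\tilde C_j)_f = \sum_{h \in D} D_h E_h (C_j)_h$, where $E_h \in F^{\beta_{\tau(h)} \times N}$ is the helper-side encoder determined by the code and by $\tau$, and $D_h \in F^{N \times \beta_{\tau(h)}}$ collects the combining/IP coefficients fixed by the repair tree. Under the limited-power adversary, each $h \in T$ still computes $E_h$ correctly but on the corrupted storage $(C_j)_h + \delta_h^{(j)}$, so the induced error in the $j$-th repair row is
$$
\epsilon_j \;=\; \sum_{h \in T} D_h E_h\,\delta_h^{(j)} \;=\; \sum_{h \in T} Q_h\,\delta_h^{(j)},
$$
with $Q_h := D_h E_h \in F^{N\times N}$ satisfying $\rank(Q_h) \le \beta_{\tau(h)}$ by the inner dimension. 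Since $Q_h$ does not depend on $j$, every row $\epsilon_j$ of $E$ lies in the fixed subspace $V := \sum_{h \in T}\mathrm{col}(Q_h) \subseteq F^N$, and therefore
$$
\rank_F(E) \;\le\; \dim V \;\le\; \sum_{h \in T}\beta_{\tau(h)} \;\le\; \Omega_t(\cB),
$$
where the last inequality uses $|T|\le t$ together with the definition of $\Omega_t$.

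The step I expect to require the most care is justifying that the matrices $Q_h$ really are independent of $j$: one must argue that the helper encoders and the IP combining coefficients are determined entirely by the code structure and by the chosen repair tree and assignment $\tau$, and not by the data stored in any individual copy of $\cC_2$. This is automatic for linear MSR repair, but it must be stated explicitly, because otherwise one only obtains a row-wise bound and cannot conclude that the entire row space of $E$ lies in a single $\Omega_t(\cB)$-dimensional subspace. Once this uniformity is recorded, the rank bound above combined with Gabidulin decoding of rank errors of weight at most $\lfloor (N-K)/2 \rfloor \ge \Omega_t(\cB)$ recovers the original codeword of $\cC_1$ and hence the original contents of $v_f$, completing the proof.
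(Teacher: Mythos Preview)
Your proposal is correct and follows essentially the same route as the paper. The paper works directly over $\ff_{q^m}$, writing the corrupted reconstruction as $W_f+\sum_{h\in T}U^{\tau}_{h,f}\tilde Z^{\tau}_{h,f}$ and bounding $\rank_q$ of the error term by $\sum_{h\in T}\beta_{\tau(h)}\le\Omega_t(\cB)$ via the factorization through the $l\times\beta_{\tau(h)}$ matrices $U^{\tau}_{h,f}$; your version unpacks the same computation into the $m$ stacked $F$-copies and observes that every error row lies in the fixed column space $\sum_{h\in T}\mathrm{col}(Q_h)$, which is the identical rank bound phrased as a row-space containment. The point you flag as delicate---that the combining matrices $Q_h$ (the paper's $U^{\tau}_{h,f}$) are data-independent---is exactly what the paper uses implicitly when it applies the same $F$-linear map to the $\ff_{q^m}$-valued error, so no new idea is needed there.
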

	\begin{proof}
		Recall that the adversary corrupts the data of at most $t$ nodes $\{W_h: h \in T\}$, but it does not affect the 
		evaluation of the functions $\{\cG^{\tau}_{h,f}\}$ and $\{I^{\tau}_{h,f}\}$ defined in Sec.~\ref{sec:general}. 
		Since the MSR code is defined over $F$, these functions are $F$-linear. W.l.o.g. suppose that the adversary changes 
		$W_h$ to $W_h+Z_h,$ where $Z_h \in \ff_{q^m}^l$. Node $h$ produces the faulty helper data 
		$$
		\cG^{\tau}_{h,f}(W_h+Z_h) = \cG^{\tau}_{h,f}(W_h)+\cG^{\tau}_{h,f}(Z_h) = S^{\tau}_{h,f}+\tilde{Z}^{\tau}_{h,f}
		$$
		where $\tilde{Z}_{h,f} \in \ff_{q^m}^{\beta_{\tau(h)}}$. Performing the IP transform $I_{h,f}$ on this data, we obtain
		$$
		I^{\tau}_{h,f}(S^{\tau}_{h,f}+\tilde{Z}^{\tau}_{h,f})=U^{\tau}_{h,f}S^{\tau}_{h,f}+U^{\tau}_{h,f}\tilde{Z}^{\tau}_{h,f}.
		$$
		Note that the rank of the error term $U^{\tau}_{h,f}\tilde{Z}^{\tau}_{h,f}$ is at most $\beta_{\tau(h)}$. Since all the nodes in the network 
		operate faithfully, oblivious to the data corruption in some of them, at the failed node the helper data has the form
		$$ 
		\sum_{h \in D}U^{\tau}_{h,f}S^{\tau}_{h,f}+ \sum_{h \in T}U^{\tau}_{h,f}\tilde{Z}^{\tau}_{h,f} = W_f+ \sum_{h \in T}U^{\tau}_{h,f}\tilde{Z}^{\tau}_{h,f}.
		$$ 
		Observe that	
		\begin{align*}
			\rank_q(\sum_{h \in T}U^{\tau}_{h,f}\tilde{Z}^{\tau}_{h,f}) \le \sum_{h \in T}\rank_q(U^{\tau}_{h,f}\tilde{Z}^{\tau}_{h,f})\le \sum_{h \in T}\beta_{\tau(h)} \le \Omega_t(\cB).
		\end{align*}
		By Proposition \ref{prop:systematic}, $W_f$ is a codeword of an $[N,K]$ Gabidulin code that can correct up to $\frac{N-K}{2}\ge \Omega_t(\cB)$ rank errors. Hence, the failed node will be able to correct the errors {and recover its value}.
	\end{proof}
	
	\begin{example}
Consider the graph with 10 nodes shown in Fig.~\ref{fig:repair_err} and assume that the node $v_f$ has failed, 
and all the other nodes form the helper set $D$. 
Suppose that one of the nodes, say node $3$, is adversarial, i.e., its contents have been arbitrarily altered. 
To choose a family of MSR codes, take the codes of \cite{Ye16a} with parameters $[n=10,k=5,d=7,l=15,\beta=5]$. Using AF 
repair, we contact $d+2t =9$ nodes and download $5$ symbols of helper data from each of them,
totaling 95 symbol transmissions. Using the code construction described above with an $[n=10,k=5,d=9,l=25,\beta=5]$ MSR code 
and a $[N=25,K=15,D=11]$ MRD code, we will be able to perform IP, correct one error, and achieve the communication complexity of 85 
symbols. Note that both codes store a file of size 75.
		
	\end{example}
	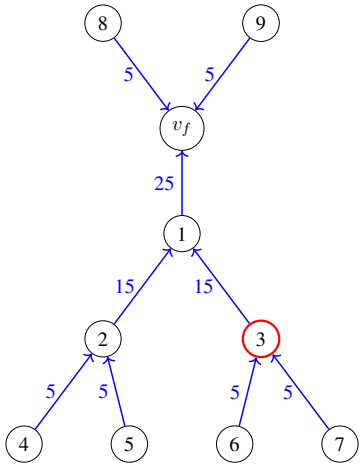
\begin{figure}[h]	
		\begin{center}\scalebox{0.7}
			{
				\begin{tikzpicture}[roundnode/.style={circle, draw=black},
					errornode/.style={circle, draw=red, very thick, minimum size=1mm}			]
					\node[roundnode] (0) at (0,2) {$v_f$};
					\node[roundnode] (1) at (0,0) {1};
					\node[roundnode] (8) at (-1.5,4) {8};
					\node[roundnode] (9) at (1.5,4) {9};
					\node[roundnode] (2) at (-1.5,-2) {2};
					\node[errornode] (3) at (1.5,-2) {3};
					\node[roundnode] (4) at (-3,-4) {4};
					\node[roundnode] (5) at (-1,-4) {5};
					\node[roundnode] (6) at (1,-4) {6};
					\node[roundnode] (7) at (3,-4) {7};
					\path[<-] (1) edge [blue,thick] node[style={anchor=east}] {15}(2);
					\path[<-] (1) edge [blue,thick] node[style={anchor=east}] {15}(3);
					\path[<-] (2) edge [blue,thick] node[style={anchor=east}] {5}(4);
					\path[<-] (2) edge [blue,thick] node[style={anchor=east}] {5}(5);
					\path[<-] (3) edge [blue,thick] node[style={anchor=east}] {5}(6);
					\path[<-] (3) edge [blue,thick] node[style={anchor=east}] {5}(7);
					\path[->] (1) edge [blue,thick] node[style={anchor=east}] {25}(0);
					\path[->] (8) edge [blue,thick] node[style={anchor=east}] {5}(0);
					\path[->] (9) edge [blue,thick] node[style={anchor=east}] {5}(0);
				\end{tikzpicture}
				
			}
			\caption{Repair with errors. Transmission from Node 1 to $v_f$ is brought down from 35 to 25 by using our code construction.}
			\label{fig:repair_err}
		\end{center}
	\end{figure}
	Although the code satisfies the $k$-node data recovery and $d$-node repair properties, it is not an MSR code because it does not meet the cutset bound with equality. Because of the \textit{two-layer} coding process, the rate of the resulting code is reduced from $\frac{k}{n}$ to $\frac{Kk}{Nn}$ and, as is to be expected, the rate decreases as we go for greater distance in the Gabidulin code, i.e., increase $N-K$.

	\section{Concluding remarks}
(1) The results of this paper suggest that design of codes for a distributed storage systems with graphical constraints is governed by the following principles. If we are free to adjust the repair degree in different rounds of repair, then we should use a code that supports {\em multiple repair degrees} such as the constructions in \cite{Ye16a}, \cite{liu2022optimal}, choosing the number of helpers that optimizes the repair bandwidth under the {\em uniform download} assumption. If the graph is sufficiently regular so that this degree does not depend on the failed node, we can instead rely on standard code families designed for a {\em fixed repair degree}. If the repair degree is fixed
by the system constraints, we should use the stacking construction of Section ~\ref{sec:stacking}, which allows to modify the download amounts based on the distance from the helper to the failed node in each round of repair. In other words, if a node is included in the set of helpers for a particular failed node and is close to this node, it provides a larger proportion of the repair data. If in the next round this node is again a helper, but is far in the graph from the (new) failed node, it provides a small amount of data or, possibly, no data at all.

The above solutions depend on the constraints of the storage systems. Whichever of these alternatives the
designer follows, it is also possible to incorporate IP repair with any of the mentioned code constructions,
gaining further savings of the repair complexity.
	
(2) In this work we analyzed in detail the problem of node repair on graphs, without saying much about the 
other basic task, namely, data retrieval. At the MSR point, the task of data retrieval from a regenerating 
code defined on an arbitrary graph becomes trivial. Indeed, since MSR codes are Maximum Distance Separable 
by definition, any set $A$ of $k$ or fewer nodes has to transmit $|A|\cdot l$ symbols and there is no hope 
of compressing this any further. This implies that in the restricted connectivity setting, when the Data 
Collector (DC) does not have direct access to $k$ nodes, standard relaying of data is optimal. The situation 
changes when we lift the MSR constraint. It is possible to estimate the minimum number of symbols required 
for data retrieval on graphs, and it is also possible to design a retrieval procedure with optimal complexity for MBR codes, for 
instance, for PM-MBR codes of \cite{Rashmi11}. These results are similar to the ones considered above, and 
will not be included here. 
	
(3) It is of interest to design an algebraic construction that supports the functionality of assigning
different download amounts to the same helper node in different rounds of repair. This would replace
the stacking construction, while possibly reducing the node size (subpacketization) of the coding scheme. We 
should note that we believe this is a difficult problem. It is also of interest to design codes that support 
repair in the presence of adversaries with optimal overhead for repair of all nodes rather than only 
systematic nodes.

	\appendices

	\section{Proof of Theorem \ref{theorem:opt_sol}:} \label{thm:nonuniform}
	Let $\cL$ be the solution space of $\{\beta_i\}$s such that $\beta_1 \le \beta_2 \le \cdots \le \beta_{n-k} = \beta_{n-k+1} = \cdots = \beta_{n-1}$. We know that the optimal solution of the LP lies in $\cL$. We call an assignment of $\{\beta_i\}$s satisfying the constraints of the problem and having $\beta_i =0$ for $i < n-d$ a $d$-repair scheme. Let $\cL_d$ be the set of feasible $d$-repair schemes, i.e.,
	\begin{equation}
		\begin{split}
			\cL_d = \{\{\beta_i\}: 
			0 = \beta_1=\cdots=\beta_{n-d-1} < \beta_{n-d} \\
			\le \beta_{n-d+1}\le \cdots \le\beta_{n-k} = \beta_{n-k+1} = \cdots = \beta_{n-1} \}.
		\end{split}
	\end{equation}
	We call an assignment of download amounts {\em uniform} if $\beta_i = \frac{l}{d-k+1}$ for all $i \ge n-d$ any other assignment with at least one $\beta_i \ne \beta:= \frac{l}{d-k+1}$ a {\em nonuniform} one.
	Clearly,
	$$
	\cL = \cup_{d=k}^{n-1}\cL_d.
	$$ 
	A priori, the uniform assignment is not necessarily a minimizer of the objective function
	in \eqref{eq:opt_prob}. Depending on the costs, there may exist nonuniform assignments that give a 
	lower value of the objective function than the uniform assignment in some $\cL_d$. The proof idea is 
	to show that for any such nonuniform assignment in $\cL_d$, there 
	exists a $d' < d$ for which the uniform assignment in $\cL_{d'}$ gives at most the same value of 
	the objective function as given by the nonuniform assignment in $\cL_d$. Let $\{\beta_i\}$ be 
	such a nonuniform assignment in $\cL_d$ such that 
 \begin{equation}\label{eq:beta}
		\sum_{i=n-d}^{n-1}b_i\beta_i < \sum_{i=n-d}^{n-1}b_i\beta.
	\end{equation}
	Since 
	\begin{equation}\label{eq:n-d}
		\sum_{i=1}^{n-k}\beta_i=\sum_{i=n-d}^{n-k}\beta_i \ge l,
	\end{equation} 
	and since the $\beta_i$'s
	are nondecreasing, for any nonuniform assignment, 
	\begin{equation}\label{eq:c}
		\beta_{n-k} = \beta+c
	\end{equation}
	for some integer $c, 0<c\le l-\beta$.
\setlength{\abovedisplayskip}{3pt}
\setlength{\belowdisplayskip}{3pt}	
 
 At the same time, there will be at least one $i \in \{n-d,\dots,n-k-1\}$ for which $\beta_i < \beta$. To see that inequality \eqref{eq:beta} can hold, rewrite it as
	$$
	\sum_{i=n-d}^{n-1}b_i(\beta_i - \beta) <0
	$$
	and observe that, since $\beta_i=\beta+c$ for all $i\ge n-k$, \eqref{eq:beta} implies that
	\begin{align*}
		\sum_{i=n-k}^{n-1}b_ic <\sum_{i=n-d}^{n-k-1}b_i(\beta - \beta_i) \le b_{n-d}c.
	\end{align*}
	Here the last inequality holds because the $b_i$'s are nonincreasing.
	So if $b_{n-d} > \sum_{i=n-k}^{n-1}b_i$, then a nonuniform assignment in $\cL_d$ is going to give a 
	lower value of the objective function than the uniform assignment in $\cL_d$. Now we show that,
	even if this is the case, we can lower the value of $d$ to some $d'$ such that, with uniform
	assignment in $\cL_{d'}$, the objective function does not increase.
	
	Below we will assume that $(\beta+c)|l$ (it is always possible to choose $l$ large enough
	so that it is divisible by the $\text{l.c.m.}(k+1,\dots,n-1)$).
	With this assumption, let us take $d' = \frac{l}{\beta+c}+k-1$ and consider the uniform assignment in $\cL_{d'}$. Letting $\beta' = \beta +c$, we have $l=(d'-k+1)\beta'$ and
	\begin{align*}
		\sum_{i=n-d}^{n-k-1}\beta_i \ge l-\beta' = \sum_{i=n-d'}^{n-k-1}\beta',
	\end{align*}
	Since the $b_i$'s are nonincreasing on $i$, the sum $\sum_{i=n-d}^{n-k-1}b_i\beta_i$ takes
	the smallest value when $\beta_i=0$ for $n-d\le i\le n-d'-1$ and $\beta_i=\beta'$ for
	$n-d'\le i\le n-1.$
	Hence, the uniform assignment in $\cL_{d'}$ results in communication cost at most equal to
	the cost of the initial nonuniform assignment in $\cL_d$ that we started with. Since this 
	argument applies for any $d$, there always exists a minimizing 
	solution to Problem \ref{eq:opt_prob} in the form of Eq. (\ref{eq:opt_sol}) for some $d$.

\bibliographystyle{IEEEtranS}
\bibliography{references}
\end{document}